\pdfoutput=1
\RequirePackage{ifpdf}
\ifpdf 
\documentclass[pdftex]{sigma}
\else
\documentclass{sigma}
\fi

\numberwithin{equation}{section}

\newtheorem{Theorem}{Theorem}[section]
\newtheorem{Corollary}[Theorem]{Corollary}
\newtheorem{Lemma}[Theorem]{Lemma}
\newtheorem{Proposition}[Theorem]{Proposition}
 { \theoremstyle{definition}
\newtheorem{Definition}[Theorem]{Definition}
\newtheorem{Example}[Theorem]{Example}
\newtheorem{Remark}[Theorem]{Remark} }

\def\nn{\nonumber}
\def\bs{\backslash}

\begin{document}
\allowdisplaybreaks

\newcommand{\arXivNumber}{2005.10288}

\renewcommand{\PaperNumber}{035}

\FirstPageHeading

\ShortArticleName{Functional Relations on Anisotropic Potts Models}

\ArticleName{Functional Relations on Anisotropic Potts Models:\\ from Biggs Formula to the Tetrahedron Equation}

\Author{Boris BYCHKOV~$^{\rm ab}$, Anton KAZAKOV~$^{\rm abc}$ and Dmitry TALALAEV~$^{\rm abc}$}

\AuthorNameForHeading{B.~Bychkov, A.~Kazakov and D.~Talalaev}

\Address{$^{\rm a)}$~Faculty of Mathematics, National Research University Higher School of Economics,\\
\hphantom{$^{\rm a)}$}~Usacheva 6, 119048, Moscow, Russia}
\EmailD{\href{mailto:bbychkov@hse.ru}{bbychkov@hse.ru}, \href{mailto:dtalalaev@yandex.ru}{dtalalaev@yandex.ru}, \href{mailto:anton.kazakov.4@mail.ru}{anton.kazakov.4@mail.ru}}

\Address{$^{\rm b)}$~Centre of Integrable Systems, P.G.~Demidov Yaroslavl State University,\\
\hphantom{$^{\rm b)}$}~Sovetskaya 14, 150003, Yaroslavl, Russia}

\Address{$^{\rm c)}$~Faculty of Mechanics and Mathematics, Moscow State University, 119991 Moscow, Russia}

\ArticleDates{Received July 06, 2020, in final form March 26, 2021; Published online April 07, 2021}

\Abstract{We explore several types of functional relations on the family of multivariate Tutte polynomials: the Biggs formula and the star-triangle ($Y-\Delta$) transformation at the critical point $n=2$. We~deduce the theorem of Matiyasevich and its inverse from the Biggs formula, and we apply this relation to construct the recursion on the parameter $n$. We~provide two different proofs of the Zamolodchikov tetrahedron equation satisfied by~the star-triangle transformation in the case of $n=2$ multivariate Tutte polynomial, we~extend the latter to the case of valency 2 points and show that the Biggs formula and the star-triangle transformation commute.}

\Keywords{tetrahedron equation; local Yang--Baxter equation; Biggs formula; Potts model; Ising model}

\Classification{82B20; 16T25; 05C31}

\section{Introduction}
The theory of polynomial invariants of graphs in its current state uses many methods and tools of integrable statistical mechanics. This phenomenon demonstrates the inherent intrusion of mathematical physics methods into topology and combinatorics. In this paper, the main subject of research is functional relations in the family of polynomial invariants for framed graphs, in particular for multivariate Tutte polynomials~\cite{Sok}, their specializations for Potts models, multivariate chromatic and flow polynomials.

\looseness=1 The flow generating function is closely related to the problems of electrical networks on a~graph over a finite field. Each flow defines a discrete harmonic function, and non-zero flows can be interpreted as harmonic functions with a completely non-zero gradient. Specifically, we~discuss the full flow polynomial which is a linearization of the flow polynomial and, in particular, corresponds to the point of the compactification of the parameter space for the Biggs model.

One of the central tools of the paper is the Biggs formula (Lemma~\ref{L:Biggs}), which connects $n$-Potts models for different parameter values as a convolution with some weight over all edge subgraphs. In particular, we~offer a new proof of the theorem of Matiyasevich~\ref{T:Mat} about the connection of a flow and chromatic polynomial, as a special case of the Biggs formula. This interpretation allows us to construct an inverse statement of the theorem of Matiyasevich. Moreover, using the connection between the flow and the complete flow polynomial, we~obtain a shift of parameters in the Potts models (Theorem~\ref{T:sh}).

The fundamental type of correspondences on the space of the aforementioned invariants is the star-triangle type relations (also known as ``wye-delta'' relations) and the associated deletion-contraction relations. In a sense, the kinship between these relations is analogous to the role of~the tetrahedron equation in the local Yang--Baxter equation. Despite the fact that the invariance of the Ising model with respect to the star-triangle transformation is very well known~\cite{Bax}, we~have not found in the literature a full proof of the fact that the action of this transformation on the weights of an anisotropic system is a solution of the tetrahedron equation that corresponds to the orthogonal solution of the local Yang--Baxter equation: Theorem~\ref{maintheorem} (parts of this statement were mentioned in~\cite{Kash, Kor95, Serg2}). We~offer here two new proofs of this fact. We~find them instructive due to their anticipated relation to the theory of positive orthogonal grassmannians~\cite{HWX}.

The identification of the Potts model and the multivariate Tutte polynomial allows us to assert the existence of a critical point for the parameter $n$ in the family of Tutte polynomials. Namely, for $n=2$, this model has a groupoid symmetry generated by a family of transformations defined by the trigonometric solution of the Zamolodchikov tetrahedron equation. We~extend the star-triangle transformation for the graphs of lower valency in Section~\ref{sec:BST}. In this way, we~obtain a $14$-term correspondence. This extension commute with the Biggs formula. We~should mention the relation of this subject with the theory of cluster algebras. We~suppose that the multivariate Tutte polynomial on standard graphs at the critical point $n=2$ corresponds to the orthogonal version of the Lusztig variety~\cite{BFZ} in the case of the unipotent group and the electrical variety~\cite{GT} for the symplectic group.

\subsection{Organization of the paper}
In Section~\ref{sec:Biggs}, we~concentrate our attention on the Biggs formalism in the Ising and Potts type models. We~define the main recurrence relations and also identify the Tutte polynomial with the Potts model. Then, we~apply the Biggs formula to the proof of theorem of Matiyasevich and propose its inverse version. We~examine in details the recursion of the Potts model with respect to the parameter $n$.

In Section~\ref{sec:ST}, we~show that, if $n=2$, then the Potts model is invariant with respect to the star-triangle transformation given by the orthogonal solution for the local Yang--Baxter equation and the corresponding solution for the Zamolodchikov tetrahedron equation. In Section~\ref{sec:TE} we provide two different proofs for this fact. Both of them are interesting in the context of cluster variables on the space of Ising models. The first proof operates with the space of boundary measurement matrices and the second with the matrix of boundary partition function.

In Section~\ref{sec:BST}, we~show that the Biggs formula considered as a correspondence on the set of~mul\-tivariate Tutte polynomials commutes with the star-triangle transformation.

\section{Biggs interaction models}\label{sec:Biggs}
\subsection[n-Potts models and Tutte polynomial]{$\boldsymbol n$-Potts models and Tutte polynomial}

We define the anisotropic Biggs model (interaction model) on an undirected graph $G$ with the set of edges $E$ and the set of vertices $V$ (a graph can have multiple edges and loops) as follows:

\begin{itemize}\itemsep=0pt
\item a state $\sigma$ is a map $\sigma\colon V \rightarrow R$, where $R$ is a commutative ring with the unit,
\item the weight of the state $\sigma$ is defined by the formula
\begin{gather*}
 W_G(\sigma) := \prod\limits_{e\in E} i_e(\delta(e)),
\end{gather*}
 where $ \delta(e) = \sigma(v) - \sigma(w)$; the edge $e$ connects the vertices $v$ and $w$, the functions $i_e\colon R \rightarrow \mathbb {C} $ are even: $ \forall b \in R \colon i_e (b) = i_e (-b)$,
 \item the partition function $Z(G)$ of a model is the following sum
\begin{gather*}
Z(G)=\sum_{\sigma} W_G(\sigma),
\end{gather*}
where the summation is taken over all possible states $\sigma$.
 \end{itemize}

Let us consider the most simple Biggs interaction models:
 \begin{Definition} \label{def:nPotts}
 If $R \cong \mathbb{Z}_n $ and functions $i_e$ given as
 \begin{gather*}
 \begin{cases}
 i_e(0)=\alpha_e,\\
 i_e(a)=\beta_e, &\forall a \neq 0 \in R,
 \end{cases}
 \end{gather*}
 we call such model the anisotropic $n$-Potts model with the set of parameters $\alpha_e$ and $\beta_e$ and we denote it by $M(G;i_e)$.

 In addition, if the maps $i_e=i$ do not depend on edges, then we call such model the isotropic $n$-Potts model (or just $n$-Potts model) with parameters $\alpha$ and $\beta$. We~denote it by $M(G; i)$, also we use the notation $M(G;\alpha,\beta)$.
 \end{Definition}

\begin{Remark}
In the case $R\cong\mathbb Z_2$, $i(0)=\exp\big(\frac{J}{kT}\big)$ and $i(1)=\exp\big({-}\frac{J}{kT}\big)$ this model can be~iden\-tified with the classic isotropic Ising model~\cite{Bax}.
Therefore we will call any anisotropic or~isotropic $2$-Potts model just Ising model.
\end{Remark}

\begin{Definition}
 Consider an anisotropic $n$-Potts model $M(G;i_e)$, we~denote its partition function as $Z_n(G)$. In addition, if $n=2$, we~omit index $2$ and write just $Z(G)$.
\end{Definition}

\begin{Remark}
 For the empty graph, we~define the partition function of any $n$-Potts model to be equal to $1$, and for a disjoint set of $m$ points to be equal to $n^m$.
\end{Remark}

Now we will consider the combinatorial properties of the anisotropic $n$-Potts models (compare with~\cite[Theorem~3.2]{BEP}):

\begin{Theorem} 
Consider an anisotropic $n$-Potts model $M(G;i)$ and its partition function $Z_n(G)$.

 \begin{itemize}\itemsep=0pt
 \item Let graph $G$ be the disjoint union of graphs $G_1$ and $G_2$, then
 \begin{gather*}
 Z_n(G)=Z_n(G_1)Z_n(G_2).
 \end{gather*}

 \begin{figure}[h] \centering
 \includegraphics[scale=1]{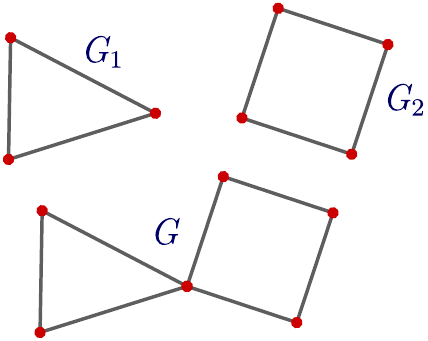}
 \caption{The joining of two graphs by the vertex $v$.}
 \end{figure}

\item Let graph $G$ be the joining of graphs $G_1$ and $G_2$ by the vertex $v$, then
\begin{gather*}
 nZ_n(G)=Z_n(G_1)Z_n(G_2).
\end{gather*}

\item Consider a graph $G$ and its edge $e$, where $e$ is neither a bridge nor a loop. Consider the graph $G/e$ obtained by contraction of $e$, and the graph $G\bs e$ obtained by deletion of $e$. Then the following formula holds
 \begin{gather*}
 Z_n(G)=(\alpha_e-\beta_e)Z_n(G/e)+\beta_e Z_n(G\bs e).
 \end{gather*}
 \end{itemize}
 \end{Theorem}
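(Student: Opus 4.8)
The plan is to read off all three identities directly from the definition $Z_n(G)=\sum_\sigma\prod_{e\in E}i_e(\delta(e))$, keeping track of how the set of states and the product of edge-weights decompose under each operation. For the first part, write $V=V_1\sqcup V_2$ and $E=E_1\sqcup E_2$. A state $\sigma$ on $G$ is exactly a pair $(\sigma_1,\sigma_2)$ of states on $G_1$ and $G_2$, and since no edge joins $V_1$ to $V_2$ every difference $\delta(e)$ is computed inside one factor, so $W_G(\sigma)=W_{G_1}(\sigma_1)W_{G_2}(\sigma_2)$. Summing over all pairs and distributing the sum over the product gives $Z_n(G)=Z_n(G_1)Z_n(G_2)$.

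For the second part the essential input is translation (gauge) invariance coming from the difference structure: since each $i_e$ depends only on $\delta(e)=\sigma(v)-\sigma(w)$, the weight $W_{G_i}(\sigma_i)$ is unchanged when every value of $\sigma_i$ is shifted by a common constant $c\in R$. A state on the one-point join is a pair $(\sigma_1,\sigma_2)$ agreeing at $v$, so I would partition the state sum according to the common value $\sigma(v)=c$. By translation invariance the partial sum $\sum_{\sigma_i(v)=c}W_{G_i}(\sigma_i)$ does not depend on $c$, and since the $n$ values of $c$ recover all of $Z_n(G_i)$ it equals $\frac1n Z_n(G_i)$. Multiplying the two factors and summing over the $n$ choices of $c$ yields $Z_n(G)=n\cdot\frac1{n^2}Z_n(G_1)Z_n(G_2)$, i.e., $nZ_n(G)=Z_n(G_1)Z_n(G_2)$.

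For the third part I would fix the edge $e$ joining distinct vertices $v,w$ and split the state sum by whether $\delta(e)=0$. On states with $\sigma(v)=\sigma(w)$ the factor $i_e(\delta(e))$ equals $\alpha_e$, and such states correspond bijectively to states of the contraction $G/e$, where the value of the merged vertex records the common value $\sigma(v)=\sigma(w)$; the remaining edges then contribute exactly $Z_n(G/e)$. On states with $\sigma(v)\neq\sigma(w)$ the factor equals $\beta_e$. Writing $Z_n(G\bs e)=\sum_\sigma\prod_{f\neq e}i_f(\delta(f))$ and applying the same dichotomy identifies the ``$\delta(e)\ne0$'' block with $Z_n(G\bs e)-Z_n(G/e)$, so assembling the two blocks gives $Z_n(G)=\alpha_eZ_n(G/e)+\beta_e\big(Z_n(G\bs e)-Z_n(G/e)\big)=(\alpha_e-\beta_e)Z_n(G/e)+\beta_eZ_n(G\bs e)$.

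I expect the main obstacle to be bookkeeping rather than computation. In the join formula one must justify the uniform $\frac1n$ splitting, which is precisely the point at which the difference structure of the $i_e$ is used, and this argument is what later underlies the groupoid symmetry at $n=2$. In the deletion--contraction step the delicate point is verifying that contracting a non-loop $e$ genuinely sets up a bijection between equal-endpoint states of $G$ and states of $G/e$; here the hypothesis that $e$ is not a loop guarantees $v\neq w$, so the contraction is well defined and the dichotomy is nontrivial, while excluding bridges keeps this reduction from overlapping the multiplicative rules established in the first two parts.
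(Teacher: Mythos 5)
Your proposal is correct and follows essentially the same route as the paper: part one directly from the definition, part two via translation invariance of the weights to show the partial sums $\sum_{\sigma(v)=c}W_{G_i}(\sigma)$ are independent of $c$ (hence each equals $\tfrac1n Z_n(G_i)$), and part three by splitting states according to whether the endpoints of $e$ agree, which identifies the two blocks with $Z_n(G/e)$ and $Z_n(G\bs e)-Z_n(G/e)$. The only cosmetic difference is bookkeeping: the paper writes the second part as $(X_0+\cdots+X_{n-1})(Y_0+\cdots+Y_{n-1})=n^2X_0Y_0$ rather than your $\tfrac1n$ normalization, and introduces $X$, $Y$ as abstract partial sums in the third part, but the underlying arguments are identical.
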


\begin{proof}\qquad

1. The statement directly follows from Definition~\ref{def:nPotts}.

2. Let us rewrite the partition function $Z_n(G)$:
\begin{gather*}
Z_n(G)=\sum\limits_{k\in\{0,\ldots,n-1\}} \sum\limits_{\sigma\colon \sigma(v)=k} W_G(\sigma).
\end{gather*}

Notice that $i(\sigma(v)-\sigma(w))=i(\sigma(v)+1-\sigma(w)-1)$, therefore for any $ i \neq j$ we have the following identity
\begin{gather*}
\sum\limits_{\sigma\colon \sigma(v)=i} W_G(\sigma)=\sum\limits_{\sigma\colon \sigma(v)=j} W_G(\sigma).
\end{gather*}
Hence we obtain
\begin{gather*}
Z_n(G)=n\sum\limits_{\sigma\colon \sigma(v)=i } W_G(\sigma) ,\qquad
\forall\, i \in \{0,\ldots, n-1\}.
\end{gather*}
Let us introduce the partial partition functions $X_k:=\!\!\sum\limits_{\sigma\colon \sigma(v)=k} \!\!W_{G_1}(\sigma)$ and $Y_k:=\!\!\sum\limits_{\sigma\colon \sigma(v)=k} \!\!W_{G_2}(\sigma)$, then we could rewrite
\begin{gather*}
Z_n(G_1)Z_n(G_2)=\bigg(\sum_k\sum_{\sigma\colon \sigma(v)=k} W_{G_1}(\sigma)\bigg)\bigg(\sum_k\sum_{\sigma\colon \sigma(v)=k} W_{G_2}(\sigma)\bigg)
\\ \hphantom{Z_n(G_1)Z_n(G_2)}
{}=(X_0+X_1+\dots +X_{n-1})(Y_0+Y_1+\dots Y_{n-1})=n^2X_0Y_0
\\ \hphantom{Z_n(G_1)Z_n(G_2)}
{}=n(X_0Y_0+X_1Y_1+\dots+X_{n-1}Y_{n-1})=n\sum_k\sum_{\sigma\colon\sigma(v)=k} W_G(\sigma)=nZ_n(G).
\end{gather*}

3. Let the edge $e$ is neither a bridge nor a loop and denote by $X$ the income in the partition function of all states such that the values of the ends of $e$ coincide and by $Y$ another part of the partition function (that of the distinct values of the ends of $e$), then
\begin{gather*}
Z_n(G)=\alpha_e X+\beta_e Y,\qquad
Z_n(G\bs e)=X+Y,\qquad
Z_n(G/e)=X
\end{gather*}
and we obtain the statement.
\end{proof}

Now let us recall the definition of the Tutte polynomial of a graph $G$.
\begin{Definition}
Let us define the Tutte polynomial $T_G(x,y)$ by the deletion-contraction recurrence relation:
\begin{enumerate}
\item If an edge $e$ is neither a bridge nor a loop, then $T_G(x, y)=T_{G\bs e}(x, y)+T_{G/e}(x, y)$.
\item If the graph $G$ consists of $i$ bridges and $j$ loops, then $T_G(x , y)=x^iy^j$.
\end{enumerate}
\end{Definition}

\begin{Theorem}[\cite{ElM,Tut}]\label{T:Tutte2}
Let $F(G)$ be a function of a graph $G$ satisfuing the following conditions:
 \begin{itemize}\itemsep=0pt
 \item $F(G)=1$, if $G$ consists of only one vertex.
 \item $F(G)=aF(G\bs e)+bF(G/e)$, if an edge $e$ is not a bridge neither a loop.
 \item $F(G)=F(G_1)F(G_2)$, if either $G=G_1\sqcup G_2$ or the intersection $G_1\cap G_2$ consists of only one vertex.
 \end{itemize}
 Then
 \begin{gather*}
 F(G)=a^{c(G)}b^{r(G)}T_G\bigg(\frac{F(K_2)}{b} ,\frac{F(L)}{a}\bigg),
 \end{gather*}
 where $K_2$ is a complete graph on two vertices, $L$ is a loop, $r(G)=v(G)-k(G)$ is a rank of $G$ and $c(G)=e(G)-r(G)$ is a corank. Here and below $e(G)$ is the number of edges in the graph~$G$.
\end{Theorem}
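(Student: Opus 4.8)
The plan is to argue by induction on the number of edges $e(G)$, showing that the right-hand side
\[
\tilde F(G) := a^{c(G)}b^{r(G)}T_G\!\left(\frac{F(K_2)}{b},\frac{F(L)}{a}\right)
\]
obeys exactly the three characterizing properties of $F$, so that the two must coincide. Writing $x_0 = F(K_2)/b$ and $y_0 = F(L)/a$, the base case is the edgeless graph: iterating the disjoint-union rule gives $F=1$, while $r=c=0$ and $T_G=1$ give $\tilde F=1$ as well. I would also record at the outset the two evaluations $\tilde F(K_2)=F(K_2)$ and $\tilde F(L)=F(L)$, which follow from $T_{K_2}=x$, $T_L=y$ together with $r(K_2)=1$, $c(K_2)=0$ and $r(L)=0$, $c(L)=1$.

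For the inductive step, suppose first that $G$ carries an edge $e$ that is neither a bridge nor a loop. The key bookkeeping is that deleting such an edge leaves the rank unchanged and drops the corank by one, while contracting it drops the rank by one and leaves the corank unchanged; I would verify these from $k(G\bs e)=k(G/e)=k(G)$, $v(G\bs e)=v(G)$ and $v(G/e)=v(G)-1$. Feeding the inductive hypothesis for $G\bs e$ and $G/e$ into the defining relation $F(G)=aF(G\bs e)+bF(G/e)$ then factors out $a^{c(G)}b^{r(G)}$ and leaves precisely $T_{G\bs e}+T_{G/e}=T_G$, the deletion--contraction recurrence defining the Tutte polynomial. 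This case is essentially a routine rank--nullity computation.

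The genuine obstacle is the remaining case, in which every edge of $G$ is a bridge or a loop, since here the deletion--contraction axiom for $F$ does not apply and one must instead exploit the one-vertex-gluing multiplicativity. If $G$ has a loop $e$ at a vertex $v$, I would split $G$ as the union of the one-vertex loop $L$ and $G\bs e$, meeting only in $v$, so that $F(G)=F(L)F(G\bs e)$; combined with the loop behaviour $T_G=y\,T_{G\bs e}$ (read off from the monomial description of $T$ on bridge--loop graphs) and the changes $r(G\bs e)=r(G)$, $c(G\bs e)=c(G)-1$, this reproduces $\tilde F(G)$ after using $F(L)=ay_0$. If $G$ has no loop but at least one edge, each nontrivial component is a tree, hence has a leaf $u$ joined by a bridge $e$ to some $w$; I would then write $G$ as the gluing of the $K_2$ on $\{u,w\}$ with $G/e$ along $w$, giving $F(G)=F(K_2)F(G/e)$, and finish with the bridge rule $T_G=x\,T_{G/e}$ together with $r(G/e)=r(G)-1$, $c(G/e)=c(G)$ and $F(K_2)=bx_0$. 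The delicate points I would watch are the correctness of these two gluing decompositions (that the pieces share exactly one vertex and no edges) and the derivation of the bridge and loop contraction/deletion rules for $T$ from the given definition, which only specifies $T$ explicitly on graphs of bridges and loops.
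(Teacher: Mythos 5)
Your proof is correct. Note that the paper itself gives no proof of this statement -- it is quoted as a known result with a citation to Tutte and to the Ellis-Monaghan--Merino survey -- so there is nothing internal to compare against; your induction on $e(G)$, splitting into the deletion--contraction case and the all-bridges-and-loops case (handled via the one-vertex gluing axiom, with the loop split off as $L\cup(G\bs e)$ and the pendant bridge split off as $K_2\cup(G/e)$, using the monomial description $T_G=x^iy^j$ there), is precisely the standard ``recipe theorem'' argument from that literature, with the rank/corank bookkeeping done correctly.
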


Now we are ready to connect the partition function $Z_n(G)$ of the isotropic $n$-Potts model $M(G;\alpha,\beta)$ with the Tutte polynomial $T_G(x,y)$ of the same graph $G$ using a well-known trick (for instance see~\cite{BEP}). Let us consider the weighted partition function
\begin{gather*}
\frac{Z_n(G)}{n^{k(G)}},
\end{gather*}
where $k(G)$ is the number of connected components in the graph $G$. It is easy to verify that the weighted partition function $\frac{Z_n(G)}{n^{k(G)}}$ satisfies Theorem~\ref{T:Tutte2}, therefore the following theorem holds:

\begin{Theorem}[Theorem 3.2~\cite{BEP}]
 \label{T:Tutte}
 The partition function $Z_n(G)$ of the $n$-Potts model $M(G;\alpha,\beta)$ coincides with the Tutte polynomial of a graph $G$ up to a multiplicative factor
\begin{gather*}
Z_n(G)=n^{k(G)}\beta^{c(G)}(\alpha-\beta)^{r(G)}T_G\bigg(\frac{\alpha+(n-1)\beta}{\alpha-\beta} ,\frac{\alpha}{\beta}\bigg).
\end{gather*}
\end{Theorem}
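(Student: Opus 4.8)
The plan is to verify that the normalized partition function $F(G):=Z_n(G)/n^{k(G)}$ satisfies the three hypotheses of Theorem~\ref{T:Tutte2}, and then to read off the values of the two specializations $F(K_2)$ and $F(L)$. First I would dispose of the single-vertex base case: a single point has partition function $n$ and $k(G)=1$, so $F=1$ as required. For the deletion--contraction axiom I would invoke the combinatorial properties of $Z_n$ established above, namely $Z_n(G)=(\alpha-\beta)Z_n(G/e)+\beta Z_n(G\bs e)$ for an edge $e$ that is neither a bridge nor a loop. The key observation is that for such an $e$ one has $k(G/e)=k(G\bs e)=k(G)$ (contracting an edge never changes the number of components, while deleting a non-bridge leaves it unchanged), so dividing through by $n^{k(G)}$ gives $F(G)=\beta F(G\bs e)+(\alpha-\beta)F(G/e)$. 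Comparing with the form $F(G)=aF(G\bs e)+bF(G/e)$ of Theorem~\ref{T:Tutte2} identifies $a=\beta$ and $b=\alpha-\beta$.

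Next I would check multiplicativity. For a disjoint union $G=G_1\sqcup G_2$ the earlier theorem gives $Z_n(G)=Z_n(G_1)Z_n(G_2)$ and $k(G)=k(G_1)+k(G_2)$, so $F$ factors at once. The more delicate case is the join along a single vertex, where the earlier theorem gives $nZ_n(G)=Z_n(G_1)Z_n(G_2)$. Here $k(G)=k(G_1)+k(G_2)-1$, so the extra factor of $n$ in the denominator is exactly compensated and $F(G)=Z_n(G_1)Z_n(G_2)/n^{k(G_1)+k(G_2)}=F(G_1)F(G_2)$. This is precisely the purpose of the normalization $n^{-k(G)}$: the raw partition function $Z_n$ fails to be multiplicative across a shared vertex, and only after the correction does it become so. Tracking the component count correctly through the vertex-join is the one place where any care is needed, and I expect it to be the main (if modest) obstacle.

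With all three hypotheses verified, Theorem~\ref{T:Tutte2} yields $F(G)=\beta^{c(G)}(\alpha-\beta)^{r(G)}T_G\big(F(K_2)/(\alpha-\beta),\,F(L)/\beta\big)$, and it remains only to compute the two base specializations directly from the definition of the Potts model. For $K_2$ (two vertices joined by one edge) there are $n$ states with equal endpoints, each of weight $\alpha$, and $n^2-n$ states with distinct endpoints, each of weight $\beta$, whence $Z_n(K_2)=n\alpha+(n^2-n)\beta$ and $F(K_2)=\alpha+(n-1)\beta$. For the loop $L$ the single edge always contributes $i(0)=\alpha$, so $Z_n(L)=n\alpha$ and $F(L)=\alpha$. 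Substituting these, the arguments of the Tutte polynomial become $\frac{\alpha+(n-1)\beta}{\alpha-\beta}$ and $\frac{\alpha}{\beta}$, and multiplying back by $n^{k(G)}$ produces the asserted formula. Every step beyond the component-count bookkeeping is a direct substitution.
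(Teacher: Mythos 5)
Your proposal is correct and follows exactly the paper's route: the paper likewise normalizes by $n^{k(G)}$, invokes the universality Theorem~\ref{T:Tutte2}, and reads off the specializations, merely leaving the verification of the hypotheses (which you carry out in full, including the component-count bookkeeping at the vertex-join and the values $F(K_2)=\alpha+(n-1)\beta$, $F(L)=\alpha$) as ``easy to verify.''
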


\begin{Example}[the bad coloring polynomial~\cite{ElM}]
Consider a graph $G$ and all possible colorings of $V(G)$ in $n$ colors. Define the bad coloring polynomial as
\begin{gather*}
B_G(n,t)=\sum\limits_{j}b_j(G,n)t^{j},
\end{gather*}
here $b_j(G,n)$ is the number of colorings such that each of them has exactly $j$ bad edges (we call an~edge ``bad'' if its ends have the same colors).
So, easy to see that $B_G(n,t)=Z_n(G)$, here~$Z_n(G)$ is the partition function of the $n$-Potts model $M(G;t,1)$. Hence, using the Theorem~\ref{T:Tutte} we immediately obtain
\begin{gather*}
B_G(n,t+1)=n^{k(G)}t^{r(G)}T_G\bigg(\frac{t+n}{t} ,t+1\bigg).
\end{gather*}
\end{Example}

\subsection[n-Potts models and the theorem of Matiyasevich]{$\boldsymbol n$-Potts models and the theorem of Matiyasevich}

The connection between the $n$-Potts models and Tutte polynomials allows us to give a simple proof of the theorem of Matiyasevich about the chromatic and flow polynomials, but at first we~introduce a few definitions.

\begin{Definition}
 A graph $A$ is called a spanning subgraph of a graph $G$, if graphs $G$ and $A$ share the same set of vertices: $V(G)=V(A)$, and the set of edges $E(A)$ is the subset of the set of edges $E(G)$.
 \end{Definition}
 \begin{Definition}
 A graph $A$ is called an edge induced subgraph (Figure~\ref{pic93}) of a graph $G$, if $A$ is induced by a subset of the set $E(G)$. Every edge induced subgraph $A$ of a graph $G$ could be completed to the spanning subgraph $A'$ by adding all the vertices of $G$ which is not contained in the subgraph $A$.
 \end{Definition}
 \begin{Definition}
 For a $n$-Potts model $M(G;i)$ we introduce the normalized partition function as follows
 \begin{gather*}
 \widetilde{Z}_n(G)=\frac{Z_n(G)}{n^{v(G)}}.
 \end{gather*}
 \end{Definition}

\begin{figure}[ht]
 \centering
 \includegraphics[scale=1]{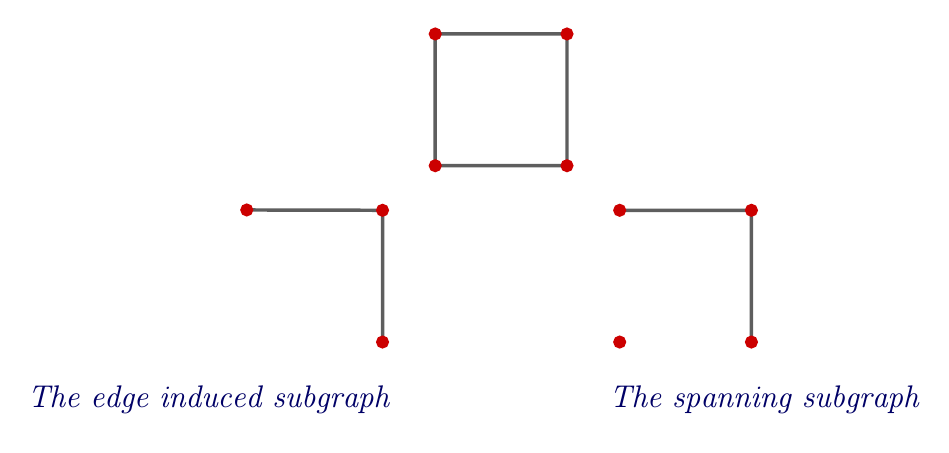}
 \caption{Edge induced and spanning subgraphs.}\label{pic93}
 \end{figure}

We start with the following lemma, which is a generalization of the high temperature formula for the Ising model:

\begin{Lemma}[Biggs formula~\cite{Big}]\label{L:Biggs}
 Let us consider two $n$-Potts models $M_1(G; i_1)$ with parame\-ters~$\alpha_1$,~$\beta_1$ and $M_2(G; i_2)$ with parameters $\alpha_2$, $\beta_2$. Then the normalized partition function $Z^1_n(G)$ of~the first model could be expressed in terms of the normalized partition functions of the models of all edge induced subgraphs of the second model:
\begin{gather*}
\widetilde{Z}^1_n(G)=q^{e(G)}\sum\limits_{A\subseteq G}\left(\frac{p}{q}\right)^{e(A)}\widetilde{Z}^2_n(A),
\end{gather*}
where $p=\frac{\alpha_1-\beta_1}{\alpha_2-\beta_2}$, and $q=\frac{\alpha_2\beta_1-\alpha_1\beta_2}{\alpha_2-\beta_2}$ $\big($we assume that $\widetilde{Z}^i_n(\varnothing)=1\big)$.
\end{Lemma}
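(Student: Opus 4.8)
The plan is to reduce the statement to a single-edge affine relation between the two interaction functions and then expand the partition function of $M_1$ as a sum over edge subsets. The starting observation is that, since each $i_j$ takes only two values ($\alpha_j$ on $0$ and $\beta_j$ elsewhere), the weight of the first model is an affine function of the weight of the second on every edge. Concretely, I would look for constants $p$, $q$ with $i_1(b)=p\, i_2(b)+q$ for all $b\in R$; matching the two cases $b=0$ and $b\neq 0$ yields the linear system $\alpha_1=p\alpha_2+q$, $\beta_1=p\beta_2+q$, whose solution is exactly the $p$ and $q$ in the statement.

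Next I would insert this identity into the definition of $Z^1_n(G)$,
\begin{gather*}
Z^1_n(G)=\sum\limits_{\sigma}\prod\limits_{e\in E}\big(p\, i_2(\delta(e))+q\big),
\end{gather*}
and expand the product over the edges. Choosing, for each edge, either the summand $p\, i_2(\delta(e))$ or the summand $q$ amounts to choosing a subset $A\subseteq E$ of edges on which the first option is taken, so the product becomes $\sum_{A\subseteq E}p^{e(A)}q^{e(G)-e(A)}\prod_{e\in A}i_2(\delta(e))$. Interchanging the sum over states $\sigma$ with the sum over $A$, the inner sum $\sum_{\sigma}\prod_{e\in A}i_2(\delta(e))$ is precisely the partition function of the second model on the spanning subgraph with edge set $A$.

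The one point requiring care, and the main bookkeeping obstacle, is the passage from spanning subgraphs to edge induced subgraphs. The spanning subgraph on edge set $A$ differs from the edge induced subgraph $A$ only by isolated vertices, of which there are $v(G)-v(A)$; since each isolated vertex contributes a factor of $n$ to the partition function, the inner sum equals $n^{v(G)-v(A)}Z^2_n(A)$. Substituting this back gives
\begin{gather*}
Z^1_n(G)=\sum\limits_{A\subseteq G}p^{e(A)}q^{e(G)-e(A)}\,n^{v(G)-v(A)}\,Z^2_n(A).
\end{gather*}

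Finally I would normalize: dividing by $n^{v(G)}$ turns each $n^{-v(A)}Z^2_n(A)$ into $\widetilde{Z}^2_n(A)$, and factoring $q^{e(G)}$ out of the sum produces the claimed formula. The whole argument is a single expansion once the affine relation is spotted; the normalization by $n^{v(G)}$ is exactly what makes the isolated-vertex factors disappear, which is why the statement is phrased for $\widetilde{Z}_n$ rather than $Z_n$.
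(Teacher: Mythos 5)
Your proposal is correct and follows essentially the same route as the paper's own proof: the affine relation $i_1 = p\,i_2 + q$ on each edge, expansion of the product over edge subsets, the factor $n^{v(G)-v(A)}$ from isolated vertices, and normalization by $n^{v(G)}$. The only (harmless) difference is that you derive $p$ and $q$ explicitly from the two-value linear system, whereas the paper simply asserts the affine identity.
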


\begin{proof}
 Let us notice that $i_1=p \cdot i_2+q$, therefore
\begin{gather*}
\widetilde{Z}^1_n(G)=\sum\limits_{\sigma\colon V(G) \rightarrow \mathbb{Z}_n}\prod_e i_1(\delta(e))=\sum\limits_{\sigma\colon V(G) \rightarrow \mathbb{Z}_n}\prod_e (pi_2(\delta(e))+q)
\\ \hphantom{\widetilde{Z}^1_n(G)}
{} =\sum\limits_{\sigma\colon V(G) \rightarrow \mathbb{Z}_n} \sum\limits_{A\subseteq G} p^{e(A)}q^{e(G)-e(A)}\prod_{e \in E(A)} i_2(\delta(e)).
\end{gather*}
 In order to complete the proof we consider the following term for a fixed $A$:
\begin{gather*}
 \sum\limits_{\sigma\colon V(G) \rightarrow \mathbb{Z}_n}\!\!\!p^{e(A)}q^{e(G)-e(A)}
 \!\!\!\!\prod_{e \in E(A)}\!\! i_2(\delta(e)) =q^{e(G)}\bigg(\frac{p}{q}\bigg)^{e(A)}\!\!\sum\limits_{\sigma\colon V(G) \rightarrow\mathbb{Z}_n}\prod_{e \in E(A)}i_2(\delta(e))
 \\ \hphantom{\sum\limits_{\sigma\colon V(G) \rightarrow \mathbb{Z}_n}\!\!\!p^{e(A)}q^{e(G)-e(A)}
 \!\!\!\!\prod_{e \in E(A)}\!\! i_2(\delta(e))}
 {}=q^{e(G)}\bigg(\frac{p}{q}\bigg)^{e(A)}n^{v(G)-v(A)}\sum\limits_{\sigma\colon V(A) \rightarrow \mathbb{Z}_n}\prod_{e \in E(A)}i_2(\delta(e))
 \\ \hphantom{\sum\limits_{\sigma\colon V(G) \rightarrow \mathbb{Z}_n}\!\!\!p^{e(A)}q^{e(G)-e(A)}
 \!\!\!\!\prod_{e \in E(A)}\!\! i_2(\delta(e))}
 {}=n^{v(G)}q^{e(G)}\sum\limits_{A\subseteq G}\bigg(\frac{p}{q}\bigg)^{e(A)}\widetilde{Z}^2_n(A).
 \tag*{\qed}
 \end{gather*}
\renewcommand{\qed}{}
\end{proof}

\begin{Proposition}
Consider two anisotropic $n$-Potts models $M_1\big(G; i^1_e\big)$ and $M_2\big(G; i^2_e\big)$.
In the same fashion we can obtain
\begin{gather}
\label{eq:aisBiggs}
\widetilde{Z}^1_n(G)=\prod_{e \in G}q_e\sum\limits_{A\subseteq G}\prod_{e \in A}\frac{p_e}{q_e}\widetilde{Z}^2_n(A),
 \end{gather}
 here $p_e=\frac{\alpha^1_e-\beta^1_e}{\alpha^2_e-\beta^2_e}$, and $q_e=\frac{\alpha^2_e\beta^1_e-\alpha^1_e\beta^2_e}{\alpha^2_e-\beta^2_e}$.
\end{Proposition}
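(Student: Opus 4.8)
The plan is to run the proof of the Biggs formula (Lemma~\ref{L:Biggs}) essentially verbatim, replacing the two global scalars $p$, $q$ by the edge-indexed quantities $p_e$, $q_e$. The combinatorics of the expansion over edge induced subgraphs and the normalization by isolated vertices are unchanged; only the bookkeeping of the edge-dependent factors differs. The one identity that must be checked first is the per-edge linear relation $i^1_e = p_e\, i^2_e + q_e$, which plays the role of the identity $i_1 = p\, i_2 + q$ in the isotropic case.

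To verify it, I would observe that by Definition~\ref{def:nPotts} the equation $i^1_e = p_e\, i^2_e + q_e$ is equivalent, upon evaluation on $0$ and on any nonzero element of $R$, to the two-by-two linear system
\begin{gather*}
\alpha^1_e = p_e\,\alpha^2_e + q_e, \qquad \beta^1_e = p_e\,\beta^2_e + q_e.
\end{gather*}
Subtracting the equations gives $p_e = \frac{\alpha^1_e - \beta^1_e}{\alpha^2_e - \beta^2_e}$, and back-substitution yields $q_e = \frac{\alpha^2_e\beta^1_e - \alpha^1_e\beta^2_e}{\alpha^2_e - \beta^2_e}$, precisely the quantities in the statement (this presumes $\alpha^2_e \neq \beta^2_e$, i.e.\ that $M_2$ is nondegenerate on each edge, so that $p_e$, $q_e$ are well defined).

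With this relation in hand I would substitute it into the partition function and expand the product over edges, each factor contributing either $p_e\, i^2_e(\delta(e))$, recorded by ``$e \in A$'', or the constant $q_e$, recorded by ``$e \notin A$'':
\begin{align*}
\widetilde{Z}^1_n(G) &= \frac{1}{n^{v(G)}}\sum_{\sigma}\prod_{e\in G}\big(p_e\, i^2_e(\delta(e)) + q_e\big)\\
&= \frac{1}{n^{v(G)}}\sum_{A\subseteq G}\bigg(\prod_{e\in A}p_e\bigg)\bigg(\prod_{e\notin A}q_e\bigg)\sum_{\sigma}\prod_{e\in A}i^2_e(\delta(e)).
\end{align*}
Exactly as in Lemma~\ref{L:Biggs}, the inner sum over states depends only on the restriction of $\sigma$ to the vertices incident to $A$, so the $v(G)-v(A)$ remaining vertices each contribute a factor $n$, giving $\sum_{\sigma}\prod_{e\in A} i^2_e(\delta(e)) = n^{v(G)-v(A)}Z^2_n(A) = n^{v(G)}\widetilde{Z}^2_n(A)$; the prefactor $n^{-v(G)}$ then cancels this power of $n$. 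Finally, writing $\prod_{e\notin A} q_e = \big(\prod_{e\in G} q_e\big)\big/\prod_{e\in A} q_e$ extracts the global factor $\prod_{e\in G} q_e$ and collapses the remaining products into $\prod_{e\in A}(p_e/q_e)$, which reproduces the claimed identity~\eqref{eq:aisBiggs}.

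I do not expect a genuine obstacle here: the argument is a direct, edge-by-edge transcription of the isotropic proof. The only point requiring attention is the nondegeneracy assumption $\alpha^2_e \neq \beta^2_e$ (and $q_e \neq 0$ for the edges over which one divides) that legitimizes the passage to the factor $p_e/q_e$; under these hypotheses every step goes through unchanged.
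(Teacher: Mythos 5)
Your proposal is correct and follows exactly the route the paper intends: the Proposition is stated with ``in the same fashion'' precisely because its proof is the edge-by-edge transcription of the proof of Lemma~\ref{L:Biggs}, which is what you carry out (per-edge relation $i^1_e = p_e\,i^2_e + q_e$, expansion over edge subsets, and the $n^{v(G)-v(A)}$ count for isolated vertices). Your explicit verification of the per-edge linear relation and the remark on the nondegeneracy hypotheses $\alpha^2_e\neq\beta^2_e$, $q_e\neq 0$ are sound additions but do not change the argument.
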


We consider further the chromatic and flow polynomials, first of all remain some well-known definitions.
\begin{Definition} 
A coloring of the set of vertices $V(G)$ is said to be {\it proper} if the ends of each edge have different colors.
\end{Definition}

\begin{Definition}
Let $G$ be a graph with the edge set $E(G)$ and the vertex set $V(G)$, let us choose a fixed edge orientation on $G$. Then, a function $f\colon E \rightarrow \mathbb{Z}_n$ is called a nowhere-zero $n$-flow if the following conditions hold:
\begin{itemize}\itemsep=0pt
\item $\forall e \in E(G)\colon f(e) \neq 0$,
\item $\forall v \in V(G)\colon \sum \limits_{ e\in M^{+}(v)}f(e)=\sum \limits_{ e\in M^{-}(v)}f(e)$,
where $M^{+}(v)$ (respectively $M^{-}(v)$) is the set of~edges each of them is directed to (respectively from) $v$.
\end{itemize}
\end{Definition}

Next, we~formulate one of the classic results of graph theory which can be found for instance in~\cite{ElM}:

 \begin{Theorem}
The number of proper colorings of a graph $G$ in $n$ colors is the following polynomial $($called chromatic polynomial$)$ in the variable~$n$:
\begin{gather*}
\chi_G(n)=(-1)^{v(G)-k(G)}n^{k(G)}T_G(1-n, 0).
\end{gather*}
The number of nowhere-zero $n$-flows of a graph $G$ is independent on the choice of orientation and is obtained by the following polynomial $($called flow polynomial$)$ in the variable~$n$:
\begin{gather*}
C_G(n)=(-1)^{e(G)+v(G)+k(G)}T_G(0, 1-n).
\end{gather*}
 \end{Theorem}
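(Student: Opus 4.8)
The plan is to treat the two formulas separately: the chromatic one follows immediately from the Potts--Tutte dictionary already set up, while the flow one is the genuine obstacle and is handled through the universal deletion-contraction characterization of Theorem~\ref{T:Tutte2}.

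For the chromatic polynomial I would first observe that $\chi_G(n)$ is itself a Potts partition function. Specialize the anisotropic $n$-Potts model to $\alpha=0$, $\beta=1$, i.e.\ consider $M(G;0,1)$: the weight $W_G(\sigma)=\prod_e i_e(\delta(e))$ vanishes as soon as some edge is monochromatic (then $\delta(e)=0$ contributes the factor $i_e(0)=0$) and equals $1$ exactly when $\sigma$ is a proper coloring, so $Z_n(G)=\chi_G(n)$ for this model (equivalently $\chi_G(n)=B_G(n,0)$, the $t=0$ value of the bad-coloring polynomial). Applying Theorem~\ref{T:Tutte} with $\alpha=0$, $\beta=1$ then gives $\beta^{c(G)}=1$, $(\alpha-\beta)^{r(G)}=(-1)^{r(G)}=(-1)^{v(G)-k(G)}$, the first Tutte argument $\frac{\alpha+(n-1)\beta}{\alpha-\beta}=1-n$ and the second $\frac{\alpha}{\beta}=0$, which yields $\chi_G(n)=(-1)^{v(G)-k(G)}n^{k(G)}T_G(1-n,0)$ at once.

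For the flow polynomial I would proceed in three steps. Orientation-independence comes first: reversing a chosen edge negates its flow value and so induces a bijection on the set of nowhere-zero flows, hence $C_G(n)$ does not depend on the orientation. Next I would prove the recursion $C_G=C_{G/e}-C_{G\setminus e}$ for a non-loop, non-bridge edge $e$. The key is the lift across contraction: a nowhere-zero flow $g$ on $G/e$ (whose edge set is $E\setminus\{e\}$) extends uniquely to a $\mathbb{Z}_n$-flow $f$ on $G$ with $f|_{E\setminus e}=g$, because $f(e)$ is forced by conservation at one endpoint of $e$ and conservation at the other endpoint is then automatic from conservation at the merged vertex. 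Splitting these lifts by whether $f(e)\neq 0$ (a nowhere-zero flow on $G$) or $f(e)=0$ (a nowhere-zero flow on $G\setminus e$) gives $C_{G/e}=C_G+C_{G\setminus e}$, which is the claimed relation.

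Finally I would check the remaining hypotheses of Theorem~\ref{T:Tutte2}: the flow count is multiplicative under disjoint unions and one-vertex gluings, since the $\mathbb{Z}_n$-flow space (the cycle space) splits as a direct sum over the pieces and the nowhere-zero condition factorizes, while $C$ equals $1$ on a single vertex, $n-1$ on a loop $L$, and $0$ on a bridge $K_2$. Thus $C_G$ satisfies Theorem~\ref{T:Tutte2} with $a=-1$, $b=1$, $F(K_2)=0$, $F(L)=n-1$, giving $C_G(n)=(-1)^{c(G)}T_G\bigl(0,\,1-n\bigr)$; since $c(G)=e(G)-v(G)+k(G)$ and signs are read modulo $2$, one has $(-1)^{c(G)}=(-1)^{e(G)+v(G)+k(G)}$, which is the assertion (polynomiality in $n$ is also obtained, since the right-hand side is manifestly a polynomial). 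The main obstacle is exactly this flow computation: establishing the contraction bijection, the boundary values, and that $C_G$ is a polynomial at all takes real work, whereas the chromatic formula is immediate. As a fallback I would keep the inclusion-exclusion route, counting $\mathbb{Z}_n$-flows on each $G\setminus S$ as $n^{c(G\setminus S)}$, summing $\sum_{S\subseteq E}(-1)^{|S|}n^{c(G\setminus S)}$, and matching it directly against the corank-nullity expansion of $T_G$ at $(0,1-n)$.
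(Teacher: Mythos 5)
Your proof is correct, but there is nothing in the paper to compare it against: the paper states this theorem as a classic result and simply points to~\cite{ElM} for a proof, so what you have produced is the missing argument, built out of the paper's own machinery. Both halves check out. The chromatic half --- identifying $\chi_G(n)$ with the partition function of the Potts model $M(G;0,1)$ and then invoking Theorem~\ref{T:Tutte} with $\alpha=0$, $\beta=1$ --- is precisely the specialization the paper itself exploits (run in the reverse direction) inside its proof of Theorem~\ref{T:Mat}, so this part dovetails exactly with the text. The flow half is where the real content lies, and your route is the standard Tutte-style one, essentially that of the cited reference: orientation independence via negating the flow value on a reversed edge; the unique lift of a flow across contraction (conservation at the second endpoint being forced by conservation at the merged vertex), giving $C_{G/e}(n)=C_G(n)+C_{G\setminus e}(n)$; the boundary values $C_{K_2}(n)=0$, $C_L(n)=n-1$; multiplicativity from the splitting of the $\mathbb{Z}_n$-cycle space over disjoint unions and one-point joins; and finally the universal property Theorem~\ref{T:Tutte2} with $a=-1$, $b=1$. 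The sign bookkeeping is also right, since $(-1)^{c(G)}=(-1)^{e(G)-v(G)+k(G)}=(-1)^{e(G)+v(G)+k(G)}$. What your approach buys, compared with the bare citation, is self-containedness of the whole circle of results in Section~\ref{sec:Biggs}: it makes transparent that Theorem~\ref{T:Mat} rests on nothing beyond Theorem~\ref{T:Tutte2}, the Biggs formula (Lemma~\ref{L:Biggs}), and elementary flow counting; what the paper's choice buys is brevity, delegating a well-known combinatorial fact to the literature.
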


Now we are ready to formulate and prove the theorem of Matiyasevich:
\begin{Theorem}[Matiyasevich~\cite{Mat}]\label{T:Mat}
Let us consider a graph $G$, its chromatic polynomial $\chi_G$ and its flow polynomial $C_G$, then
\begin{gather*} 
\chi_{G}(n)=\frac{(n-1)^{e(G)}}{n^{e(G)-v(G)}}\sum\limits_{A\subseteq G}\frac{C_A(n)}{(1-n)^{e(A)}},
\end{gather*}
where the summation goes through all spanning subgraphs $A$.
\end{Theorem}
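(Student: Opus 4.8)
The plan is to realize both the chromatic polynomial and the flow polynomial as normalized partition functions of suitable $n$-Potts models and then to read off the identity as a single instance of the Biggs formula (Lemma~\ref{L:Biggs}). First I would take the first model $M_1$ to be the one computing proper colorings: setting $\alpha_1=0$, $\beta_1=1$ makes every bad edge annihilate the corresponding state, so that $Z^1_n(G)$ counts exactly the proper colorings, i.e.\ $Z^1_n(G)=\chi_G(n)$ and hence $\widetilde{Z}^1_n(G)=\chi_G(n)/n^{v(G)}$.

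For the second model $M_2$ I would aim at the evaluation $T_G(0,1-n)$ that appears in the flow polynomial. Solving $\frac{\alpha_2+(n-1)\beta_2}{\alpha_2-\beta_2}=0$ and $\frac{\alpha_2}{\beta_2}=1-n$ simultaneously singles out the choice $\alpha_2=1-n$, $\beta_2=1$. Feeding this into Theorem~\ref{T:Tutte}, using $r(A)=v(A)-k(A)$ together with the stated identity $C_A(n)=(-1)^{e(A)+v(A)+k(A)}T_A(0,1-n)$, the powers of $-1$ and of $n$ collapse (the combined exponent $(v-k)+(e+v+k)$ is $\equiv e \pmod 2$), and I would obtain $\widetilde{Z}^2_n(A)=(-1)^{e(A)}C_A(n)$. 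This holds for every edge subgraph $A$, since adding isolated vertices changes neither $\widetilde{Z}^2_n$ (as noted in the proof of Lemma~\ref{L:Biggs}) nor $C_A$, so the edge-induced and spanning versions of $A$ contribute the same summand.

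It then remains to compute the Biggs coefficients for this pair of models: $p=\frac{\alpha_1-\beta_1}{\alpha_2-\beta_2}=\frac{1}{n}$ and $q=\frac{\alpha_2\beta_1-\alpha_1\beta_2}{\alpha_2-\beta_2}=\frac{n-1}{n}$, whence $\frac{p}{q}=\frac{1}{n-1}$. Substituting into Lemma~\ref{L:Biggs} and using $(-1)^{e(A)}/(n-1)^{e(A)}=1/(1-n)^{e(A)}$ converts the summand into $C_A(n)/(1-n)^{e(A)}$; clearing the prefactor $q^{e(G)}=\big(\frac{n-1}{n}\big)^{e(G)}$ against $n^{v(G)}$ produces exactly $\frac{(n-1)^{e(G)}}{n^{e(G)-v(G)}}$, i.e.\ the claimed formula.

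The main obstacle is the sign and normalization bookkeeping in the second step: matching $T_A(0,1-n)$ to $C_A(n)$ and simplifying the mixed powers of $-1$ and $n$ must be done carefully, and one must also confirm that summing over edge-induced subgraphs (as in the Biggs formula) agrees with summing over spanning subgraphs (as in the statement). The latter is harmless because each summand depends only on the edge set of $A$, but it is the point that needs to be verified rather than assumed.
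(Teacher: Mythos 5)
Your proposal is correct and follows essentially the same route as the paper: the same two models $M_1(G;0,1)$ and $M_2(G;1-n,1)$, the same identifications $\widetilde{Z}^1_n(G)=\chi_G(n)/n^{v(G)}$ and $\widetilde{Z}^2_n(A)=(-1)^{e(A)}C_A(n)$, the same application of Lemma~\ref{L:Biggs} with $p/q=\frac{1}{n-1}$, and the same final passage from edge-induced to spanning subgraphs via isolated vertices. The only (immaterial) deviation is that you obtain $Z^1_n(G)=\chi_G(n)$ by direct counting of proper colorings, whereas the paper derives it from Theorem~\ref{T:Tutte}.
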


\begin{proof}
Let us consider two $n$-Potts models with the special parameters: the model $M_1(G; i_1)$ with the parameters
$\alpha_1=0$, $\beta_1=1$ and the model $M_2(G; i_2)$ with the parameters $\alpha_2=1-n$, $\beta_2=1$. By Theorem~\ref{T:Tutte} we could express the partition function of the first model in terms of~the chromatic polynomial
\begin{gather*} 
\chi_G(n)=(-1)^{v(G)-k(G)}n^{k(G)}T_G(1-n, 0)=\frac{(-1)^{v(G)-k(G)-r(G)}n^{k(G)}Z^1_n(G)}{n^{k(G)}}
\\ \hphantom{\chi_G(n)}
=(-1)^{v(G)-k(G)-r(G)}n^{v(G)}\widetilde{Z}^1_n(G)=n^{v(G)}\widetilde{Z}^1_n(G).
\end{gather*}
So we have
 \begin{gather*}
 \widetilde{Z}^1_n(G)=\frac{\chi_G(n)}{n^{v(G)}}.
 \end{gather*}
{\samepage
Analogously, we~express the partition function of the second model in terms of the flow poly\-nomial
 \begin{gather}
 C_G(n)=(-1)^{e(G)+v(G)+k(G)}T_G(0,1-n)=
 \frac{(-1)^{e(G)+v(G)+k(G)-r(G)}Z^2_n(G)}{n^{k(G)}n^{v(G)-k(G)}}\nn
 \\ \hphantom{C_G(n)}
 {}=(-1)^{e(G)}\widetilde {Z}^2_n(G).\label{eq:flow}
\end{gather}

}\noindent
So we have
\begin{gather} \label{eq:collor}
 \widetilde{Z}^2_n(G)=(-1)^{e(G)}C_G(n).
\end{gather}
Then by Lemma~\ref{L:Biggs} after the substitutions~\eqref{eq:flow} and~\eqref{eq:collor} we obtain
\begin{gather*} 
 \frac{\chi_G(n)}{n^{v(G)}}=\frac{(n-1)^{e(G)}}{n^{e(G)}}\sum_{A'\subseteq G}\frac{C_{A'}(n)}{(1-n)^{e(A')}},
\end{gather*}
 where the summation goes through all edge induced subgraphs $A'$.

We finish the proof by noticing that the edge induced subgraph differs from the spanning subgraph by the set of isolated vertices. Therefore we can complete each edge induced subgraph to its corresponding spanning subgraph and then replace the summation over all edge induced subgraph by the summation over all spanning subgraph, because the value of the each flow polynomial $C_{A'}$ remains the same and finally we obtain
\begin{gather*}
 \chi_G(n)=\frac{(n-1)^{e(G)}}{n^{e(G)-v(G)}}\sum_{A\subseteq G}\frac{C_{A}(n)}{(1-n)^{e(A)}}. \tag*{\qed}
\end{gather*}
\renewcommand{\qed}{}
\end{proof}

 Note that we could produce series of statements that look like Theorem~\ref{T:Mat}:

\begin{Theorem}
Let us consider a graph $G$, then we can obtain the following formulas
\begin{gather}
\label{eq:genm}
 n^{k(G)}\beta_1^{c(G)}(\alpha_1-\beta_1)^{r(G)}T_G\bigg(\frac{\alpha_1+(n-1)\beta_1}{\alpha_1-\beta_1} ,\frac{\alpha_1}{\beta_1}\bigg)=q^{e(G)}\sum_{A\subseteq G}\bigg(\frac{p}{q}\bigg)^{e(A)}\chi_A(n),
\end{gather}
where $p=-\alpha_1+\beta_1$, $q=\alpha_1$, and the summation (here and below) goes through all spanning subgraphs $A$,
\begin{gather}
\label{eq:invers}
 C_{G}(n)=(n-1)^{e(G)}\sum_{A\subseteq G}\frac{n^{e(A)-v(G)}}{(1-n)^{e(A)}}\chi_{A}(n),
\\
 n^{k(G)-v(G)}\beta_1^{c(G)}(\alpha_1-\beta_1)^{r(G)}T_G
 \bigg(\frac{\alpha_1+(n-1)\beta_1}{\alpha_1-\beta_1} ,\frac{\alpha_1}{\beta_1}\bigg)\nn
 \\ \qquad
{} =q_1^{e(G)}\sum_{A}\bigg(\frac{p_1}{q_1}\bigg)^{e(A)}(-1)^{e(A)}C_{A}(n),\label{eq:t-flow}
\end{gather}
where $p_1=\frac{\beta_1-\alpha_1}{n}$, $q_1=\frac{\alpha_1-(1-n)\beta_1}{n}$,
\begin{gather}
 (-1)^{e(G)}C_G(n)\nn
 \\ \qquad
 =q_2^{e(G)}\sum_{A}\bigg(\frac{p_2}{q_2}\bigg)^{e(A)}
 n^{k(A)-v(G)}\beta_1^{c(A)}(\alpha_1-\beta_1)^{r(A)}T_A
 \bigg(\frac{\alpha_1+(n-1)\beta_1}{\alpha_1-\beta_1} ,\frac{\alpha_1}{\beta_1}\bigg),
\label{eq:flow-t}
\end{gather}
where $p_2=\frac{n}{\beta_1-\alpha_1}$ and $q_2=\frac{\alpha_1-(1-n)\beta_1}{\alpha_1-\beta_1}$.
\end{Theorem}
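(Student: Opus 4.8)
The plan is to read each of the four identities as a single application of the Biggs formula (Lemma~\ref{L:Biggs}) to a judiciously chosen pair of $n$-Potts models, after which Theorem~\ref{T:Tutte} and the chromatic/flow specializations already recorded in the proof of Theorem~\ref{T:Mat} translate both sides into the stated closed forms. Concretely, I would set up a ``source'' model $M_1$ whose normalized partition function $\widetilde{Z}^1_n(G)$ is to become the left-hand side, and a ``target'' model $M_2$ whose normalized partition functions $\widetilde{Z}^2_n(A)$ fill the sum on the right. Three specializations are already available: general parameters $(\alpha,\beta)$ give $\widetilde{Z}_n(G)=n^{k(G)-v(G)}\beta^{c(G)}(\alpha-\beta)^{r(G)}T_G(\cdot,\cdot)$ by Theorem~\ref{T:Tutte}; the parameters $(0,1)$ give $\widetilde{Z}_n(G)=\chi_G(n)/n^{v(G)}$; and the parameters $(1-n,1)$ give $\widetilde{Z}_n(G)=(-1)^{e(G)}C_G(n)$, by \eqref{eq:flow} and \eqref{eq:collor}.

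With this dictionary the parameter assignments are forced and the argument becomes a bookkeeping check of the lemma's constants $p$, $q$. For \eqref{eq:genm} I would take $M_1=(\alpha_1,\beta_1)$ and $M_2=(0,1)$, whence $\alpha_2-\beta_2=-1$ gives $p=\beta_1-\alpha_1$ and $q=\alpha_1$, matching the stated values. For \eqref{eq:invers} I swap the roles, $M_1=(1-n,1)$ and $M_2=(0,1)$, giving $p=n$ and $q=1-n$; writing $(1-n)^{e(G)}=(-1)^{e(G)}(n-1)^{e(G)}$ then cancels the sign $(-1)^{e(G)}$ carried by $\widetilde{Z}^1_n$ and absorbs the denominator $n^{v(G)}$ into the exponent $e(A)-v(G)$. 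For \eqref{eq:t-flow} I take $M_1=(\alpha_1,\beta_1)$ and $M_2=(1-n,1)$, so $\alpha_2-\beta_2=-n$ yields $p_1=\frac{\beta_1-\alpha_1}{n}$ and $q_1=\frac{\alpha_1-(1-n)\beta_1}{n}$; and for \eqref{eq:flow-t} I reverse this last pair, $M_1=(1-n,1)$ and $M_2=(\alpha_1,\beta_1)$, producing $p_2=\frac{n}{\beta_1-\alpha_1}$ and $q_2=\frac{\alpha_1-(1-n)\beta_1}{\alpha_1-\beta_1}$. In every case the factor $q^{e(G)}(p/q)^{e(A)}$ supplied by the lemma is exactly the prefactor appearing in the corresponding display.

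The one genuine subtlety---and the step I expect to be the main obstacle---is the passage from the edge-induced subgraphs indexing the sum in Lemma~\ref{L:Biggs} to the spanning subgraphs indexing the sums \eqref{eq:genm}--\eqref{eq:flow-t}. As in the proof of Theorem~\ref{T:Mat}, each edge-induced subgraph $A_0$ has a unique spanning completion $A$ differing only by isolated vertices, with $e(A)=e(A_0)$, so the prefactors are untouched; what must be verified is that $\widetilde{Z}^2_n(A_0)$ coincides with the spanning-subgraph expression written on the right. This reduces to three invariance facts under adjoining isolated vertices: $C_A(n)=C_{A_0}(n)$ (a flow polynomial ignores isolated vertices), $T_A=T_{A_0}$ (an isolated vertex contributes the factor $x^0y^0=1$), and $\chi_A(n)=n^{v(G)-v(A_0)}\chi_{A_0}(n)$ (each isolated vertex is freely colorable). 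The first two, together with the observations that $r$ and $c$ are unchanged while $k-v=-r$ is likewise invariant, give $\widetilde{Z}^2_n(A_0)=(-1)^{e(A)}C_A(n)$ and $\widetilde{Z}^2_n(A_0)=n^{k(A)-v(G)}\beta_1^{c(A)}(\alpha_1-\beta_1)^{r(A)}T_A(\cdot,\cdot)$; the third gives $\widetilde{Z}^2_n(A_0)=\chi_A(n)/n^{v(G)}$. Substituting these into the four specialized Biggs formulas, rearranging the resulting powers of $n$ and signs, and inserting the Theorem~\ref{T:Tutte} expression for the source model on the left yields precisely \eqref{eq:genm}, \eqref{eq:invers}, \eqref{eq:t-flow} and \eqref{eq:flow-t}.
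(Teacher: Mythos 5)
Your proposal is correct and follows essentially the same route as the paper: each identity is a single instance of the Biggs formula (Lemma~\ref{L:Biggs}) applied to a pair of specialized $n$-Potts models, translated via Theorem~\ref{T:Tutte} and the specializations~\eqref{eq:flow},~\eqref{eq:collor}, with the edge-induced-to-spanning-subgraph completion handled exactly as in the proof of Theorem~\ref{T:Mat}. One remark: for~\eqref{eq:t-flow} and~\eqref{eq:flow-t} your source/target assignments are the reverse of the pairs listed in the paper's proof, but yours are the ones actually consistent with the stated constants $p_1$, $q_1$, $p_2$, $q_2$ (the paper's last two bullets appear to be interchanged, a typo), and your explicit isolated-vertex bookkeeping spells out what the paper's instruction to ``repeat step by step the proof of Theorem~\ref{T:Mat}'' leaves implicit.
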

\begin{proof}
Let us consider two $n$-Potts models:
\begin{itemize}\itemsep=0pt
 \item Models $M_1(G; \alpha_1,\beta_1)$ and $M_2(G; 0, 1)$ for the proof of the formula~\eqref{eq:genm}.
 \item The specification of the first case: $M_1(G; 1-n, 1)$ and the same $M_2(G; 0,1)$ for the proof of the formula~\eqref{eq:invers}.
 \item Models $M_1(G; 1-n,1)$ and $M_2(G; \alpha_1, \beta_1)$ with the parameters $\alpha_1$, $\beta_1$ for the proof of the formula~\eqref{eq:t-flow}.
 \item And finally, models $M_1(G; \alpha_1, \beta_1)$ and $M_2(G; 1-n,1)$ for the proof of formula~\eqref{eq:flow-t}.
\end{itemize}
Now it is left to repeat step by step the proof of Theorem~\ref{T:Mat} for these two models.
\end{proof}

\begin{Remark}
We notice that the formula~\eqref{eq:invers} naturally can be considered as ``inversion'' of~Theorem~\ref{T:Mat}.
\end{Remark}

\subsection{Shifting the order in the Potts models}

Biggs Lemma~\ref{L:Biggs} allows us to relate the values of the partition functions of the $n$-Potts models with fixed $n$, but different values of parameters $\alpha$ and $\beta$. The goal of the current subsection is to present a method for connecting partition functions of the $n$-Potts models for different $n$. We~will call it \textit{shifting order formulas}.

The first method is based on the multiplicativity property of the \textit{complete} flow polynomial.
\begin{Definition}
Let $G$ be a graph with the edge set $E(G)$ and the vertex set $V(G)$, let us chose a fixed edge orientation on $G$. Then, a function $f\colon E \rightarrow \mathbb{Z}_n$ is called an $n$-flow if the following condition holds
 \begin{gather*}
 \forall v \in V(G)\colon\ \sum \limits_{ e\in M^{+}(v)}f(e)=\sum \limits_{ e\in M^{-}(v)}f(e),
 \end{gather*}
 here again $M^{+}(v)$ (respectively $M^{-}(v)$) is the set of edges each of them is directed to (respectively from) $v$.
\end{Definition}

Let us formulate a few well known results concerning a flow polynomial and a number of all $n$-flows. The proofs could be found for example in~\cite{Sok}.

\begin{Proposition}\label{Prop:fc-sumflow}
Denote the number of all $n$-flows on a graph $G$ by ${FC}_G(n)$, then ${FC}_G(n)$ is independent of the choice of an orientation and the following identity holds
\begin{gather*}
{FC}_G(n)=\sum\limits_{A\subseteq G} C_A(n),
\end{gather*}
where the summation goes through all spanning subgraphs $A$ of the graph $G$.
\end{Proposition}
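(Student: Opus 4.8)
The plan is to organize all $n$-flows on $G$ by their support and to recognize each resulting class as the set of nowhere-zero flows on a spanning subgraph. First I would fix an orientation of $G$, which simultaneously orients every spanning subgraph $A\subseteq G$. For an arbitrary $n$-flow $f\colon E(G)\to\mathbb{Z}_n$ let $\mathrm{supp}(f)=\{e\in E(G)\colon f(e)\neq 0\}$, and let $A_f$ denote the spanning subgraph of $G$ whose edge set is $\mathrm{supp}(f)$.

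The main step is to show that $f\mapsto\big(A_f,f|_{A_f}\big)$ is a bijection onto the pairs $(A,g)$ where $A\subseteq G$ is a spanning subgraph and $g$ is a nowhere-zero $n$-flow on $A$. Since $f$ vanishes on every edge outside $A_f$, the conservation condition $\sum_{e\in M^{+}(v)}f(e)=\sum_{e\in M^{-}(v)}f(e)$ at each vertex $v$ involves only the edges of $A_f$; hence $f|_{A_f}$ satisfies the flow condition on $A_f$ and is nowhere-zero by construction. Conversely, any nowhere-zero $n$-flow $g$ on a spanning subgraph $A$ extends by zero on $E(G)\setminus E(A)$ to an $n$-flow on $G$ with support exactly $E(A)$, and this extension is inverse to the restriction map.

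Summing over the fibers of the support map then gives
\begin{gather*}
{FC}_G(n)=\sum_{A\subseteq G}C_A(n),
\end{gather*}
since the fiber over $A$ consists precisely of the nowhere-zero $n$-flows on $A$, whose number is $C_A(n)$ by the stated flow-polynomial formula. Orientation-independence of ${FC}_G(n)$ is then immediate: the right-hand side does not depend on the orientation because each $C_A(n)$ is orientation-independent; alternatively, reversing the orientation of a single edge $e$ induces the bijection sending $f$ to the flow that agrees with $f$ off $e$ and equals $-f(e)$ on $e$, which preserves supports and hence fixes the total count.

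I expect the only delicate point to be verifying that the Kirchhoff balance genuinely restricts to the support, i.e., that edges carrying the value $0$ contribute nothing to either side at every vertex --- including loops, where an incident edge lies in both $M^{+}(v)$ and $M^{-}(v)$ and cancels regardless of its value, so that loops may freely carry nonzero values. Once this is checked the bijection is exact and the remaining count is pure bookkeeping.
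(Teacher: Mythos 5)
Your proof is correct and complete. Note that the paper itself does not prove this proposition at all---it states it as a well-known result and cites Sokal's survey---so there is no in-paper argument to compare against; your support-decomposition bijection (partitioning the $n$-flows on $G$ by their support and identifying the fiber over a spanning subgraph $A$ with the nowhere-zero $n$-flows on $A$) is precisely the standard argument found in such references. The two delicate points you flag---that zero-valued edges drop out of the Kirchhoff balance at every vertex, and that a loop at $v$ lies in both $M^{+}(v)$ and $M^{-}(v)$ and so cancels regardless of its value---are indeed the only things to check, and orientation-independence follows as you say either from the orientation-independence of each $C_A(n)$ or from the edge-reversal bijection $f(e)\mapsto -f(e)$.
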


\begin{Proposition}
The number of all $n$-flows on a graph is the following polynomial $($called complete flow polynomial$)$
\begin{gather*}
{FC}_G(n)=n^{e(G)-v(G)+k(G)},
\end{gather*}
where $e(G)$, $v(G)$, $k(G)$ are numbers of edges, vertices and connected components in the graph~$G$ correspondingly.
\end{Proposition}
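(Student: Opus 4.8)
The quantity $e(G)-v(G)+k(G)$ is the cycle rank (first Betti number) of $G$, i.e., the dimension of its cycle space, so the assertion is really that the set of all $n$-flows is in bijection with $\mathbb{Z}_n^{\,e(G)-v(G)+k(G)}$. My plan is to exhibit this set as a free $\mathbb{Z}_n$-module of the stated rank by producing an explicit basis attached to a spanning forest. First I would observe that the $n$-flows form the kernel of the incidence (boundary) map $\partial\colon \mathbb{Z}_n^{E}\to\mathbb{Z}_n^{V}$ sending $f$ to the vector whose $v$-component is $\sum_{e\in M^{+}(v)}f(e)-\sum_{e\in M^{-}(v)}f(e)$; the conservation law is exactly $\partial f=0$, so counting $n$-flows means computing $|\ker\partial|$.

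Next I would fix a spanning forest $F$ of $G$; it has $v(G)-k(G)$ edges, leaving $e(G)-v(G)+k(G)$ chords. For each chord $c$ the subgraph $F+c$ contains a unique cycle $C_c$, and orienting it produces a fundamental flow $\phi_c\in\mathbb{Z}_n^E$, supported on $C_c$, taking values in $\{0,\pm 1\}$ with $\phi_c(c)=1$; each $\phi_c$ is an $n$-flow since a cycle satisfies conservation at every vertex. I would then claim that $\{\phi_c\}$ is a free basis of $\ker\partial$. Independence is immediate: a chord $c$ lies on the single fundamental cycle $C_c$, so in any relation $\sum_c a_c\phi_c=0$ reading off the $c$-coordinate forces $a_c=0$. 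Spanning follows because, given a flow $f$, the difference $g=f-\sum_c f(c)\phi_c$ vanishes on every chord and is therefore a flow supported on the forest $F$; but a nonzero flow cannot be supported on a forest, since peeling leaves one at a time, conservation at each leaf forces the single incident forest edge to carry value $0$. Hence $g=0$ and $f=\sum_c f(c)\phi_c$ uniquely, giving the bijection $\ker\partial\cong\mathbb{Z}_n^{\,e(G)-v(G)+k(G)}$ and thus $FC_G(n)=n^{e(G)-v(G)+k(G)}$.

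The one point demanding care — and the reason I deliberately avoid a naive rank--nullity count — is that $\mathbb{Z}_n$ is only a ring, not a field, when $n$ is composite, so the size of a kernel is not automatically $n^{e-(\mathrm{rank})}$. The explicit forest basis sidesteps this entirely, as it realizes $\ker\partial$ as honestly free of the correct rank. Equivalently, one may invoke that the incidence matrix of a graph is totally unimodular, so its Smith normal form over $\mathbb{Z}$ is $\operatorname{diag}(1,\dots,1,0,\dots,0)$ with exactly $v(G)-k(G)$ ones; reducing modulo $n$ then gives $|\ker\partial|=n^{e(G)-(v(G)-k(G))}$ directly.

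I expect the leaf-peeling step — the assertion that no nonzero flow is supported on a forest — to be the only place where a genuine (if short) argument is needed rather than a formula; everything else is linear bookkeeping. As a consistency check one could also recover the result from Proposition~\ref{Prop:fc-sumflow} by sorting flows according to their support, but the forest-basis argument is self-contained and yields the count in closed form immediately.
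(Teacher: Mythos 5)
Your proof is correct, but there is nothing in the paper to compare it with: the authors state this proposition among ``well known results'' and defer all proofs to Sokal's survey~\cite{Sok}, giving no argument of their own. Your route --- identify the set of $n$-flows with $\ker\bigl(\partial\colon \mathbb{Z}_n^{E}\to\mathbb{Z}_n^{V}\bigr)$, fix a spanning forest with $v(G)-k(G)$ edges, and show that the fundamental flows $\phi_c$ attached to the $e(G)-v(G)+k(G)$ chords form a free $\mathbb{Z}_n$-basis of the kernel --- is the standard self-contained proof, and each step checks out: independence by reading off the chord coordinate, spanning because $f-\sum_c f(c)\phi_c$ is a flow supported on the forest, and the leaf-peeling argument that such a flow vanishes. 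Your caution about composite $n$ is well placed and is the one point a careless write-up would botch: over $\mathbb{Z}_n$ rank--nullity is unavailable, so one genuinely needs either the explicit free basis or the Smith-normal-form/total-unimodularity argument you sketch as an alternative. For what it is worth, there is an even shorter route in the spirit of the deletion--contraction formalism the paper uses throughout (Theorem~\ref{T:Tutte2}): for a non-loop edge $e$, conservation at an endpoint forces the value on $e$ once a flow on $G/e$ is chosen, so ${FC}_G(n)={FC}_{G/e}(n)$, while a loop carries an unconstrained value, so ${FC}_G(n)=n\,{FC}_{G\setminus e}(n)$; since $e(G)-v(G)+k(G)$ is unchanged by contracting a non-loop edge, drops by one when a loop is deleted, and equals $0$ for edgeless graphs, induction gives the formula at once. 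Both arguments are valid; yours has the advantage of exhibiting the flow module as explicitly free, which is more information than the bare count.
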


\begin{Proposition} \label{prop:c-sumflow}
The flow polynomial $C_G(n)$ of a graph $G$ could be expressed in terms of the complete flow polynomials of its spanning subgraphs by the following identity:
\begin{gather*}
C_G(n)=\sum _{A\subseteq G} (-1)^{e(G)-e(A)}{FC}_A(n).
\end{gather*}
\end{Proposition}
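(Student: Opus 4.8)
The plan is to read this identity as the inversion of Proposition~\ref{Prop:fc-sumflow}. The set of spanning subgraphs of $G$, ordered by edge inclusion, is isomorphic to the Boolean lattice $2^{E(G)}$, since every spanning subgraph shares the vertex set $V(G)$ and is therefore determined by its edge set $E(A)\subseteq E(G)$. Over such a lattice the M\"obius function is $\mu(A,B)=(-1)^{e(B)-e(A)}$, so the asserted formula is precisely the M\"obius transform of the relation ${FC}_G(n)=\sum_{A\subseteq G}C_A(n)$. I would therefore first observe that Proposition~\ref{Prop:fc-sumflow}, applied not only to $G$ but to every spanning subgraph $B\subseteq G$, yields ${FC}_B(n)=\sum_{A\subseteq B}C_A(n)$: the spanning subgraphs of $B$ are exactly the edge subsets contained in $E(B)$, so the same counting argument applies verbatim.

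Rather than invoke the general inversion theorem, I would prefer to verify the cancellation directly. Substituting this expression for ${FC}_A(n)$ into the right-hand side of the claim and interchanging the order of summation gives
\begin{align*}
\sum_{A\subseteq G}(-1)^{e(G)-e(A)}{FC}_A(n)
&=\sum_{A\subseteq G}(-1)^{e(G)-e(A)}\sum_{B\subseteq A}C_B(n)\\
&=\sum_{B\subseteq G}C_B(n)\sum_{B\subseteq A\subseteq G}(-1)^{e(G)-e(A)}.
\end{align*}
For fixed $B$, set $m=e(G)-e(B)$ and group the subgraphs $A$ with $B\subseteq A\subseteq G$ according to the number $k=e(A)-e(B)$ of extra edges; there are $\binom{m}{k}$ such $A$, and the inner sum becomes $\sum_{k=0}^{m}\binom{m}{k}(-1)^{m-k}=(1-1)^{m}$. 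This vanishes unless $m=0$, that is, unless $B=G$. Hence only the term $B=G$ survives, and it contributes exactly $C_G(n)$, which is the claim.

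I expect no serious obstacle in this argument. The only points requiring care are the identification of the poset of spanning subgraphs with a Boolean lattice, so that the alternating signs $(-1)^{e(G)-e(A)}$ are genuinely the associated M\"obius function, and the bookkeeping in the interchange of the two summations. The binomial cancellation $\sum_{k=0}^{m}\binom{m}{k}(-1)^{m-k}=0$ for $m>0$ is the standard inclusion--exclusion step and is entirely routine.
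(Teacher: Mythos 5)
Your proof is correct. Note that the paper itself gives no argument for this Proposition: it is stated among ``well known results'' whose proofs are deferred to~\cite{Sok}, so there is no internal proof to compare against. Your derivation --- applying Proposition~\ref{Prop:fc-sumflow} to every spanning subgraph $B$ (legitimate, since spanning subgraphs of $B$ are exactly the edge subsets of $E(B)$ on the common vertex set $V(G)$), substituting into the right-hand side, exchanging summations, and killing the inner sum $\sum_{k=0}^{m}\binom{m}{k}(-1)^{m-k}=(1-1)^{m}$ for $m>0$ --- is the standard M\"obius-inversion/inclusion--exclusion argument one finds in the reference, and it is complete as written; it also fits the paper's usage, since Theorem~\ref{T:sh} only needs exactly the two identities you relate to be mutually inverse summations over the Boolean lattice of edge subsets.
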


The complete flow polynomial ${FC}_G(n)$ is a multiplicative invariant: ${FC}_G(n_1n_2)={FC}_G(n_1)\allowbreak \times {FC}_G(n_2)$, therefore we are ready to formulate the following theorem:

\begin{Theorem} \label{T:sh}
The partition function $Z_{n_1n_2}(G)$ of the $n_1n_2$-Potts model $M(G; \alpha_1, \beta_1)$ could be expressed in terms of the partition functions $Z_{n_1}(A)$ and $Z_{n_2}(A)$ of the $n_1$-Potts model $M_1(A; \alpha_1, \beta_1)$ and $n_2$-Potts model $M_2(A; \alpha_1, \beta_1)$ of all spanning subgraphs $A$ of the graph $G$ correspondingly.
 \end{Theorem}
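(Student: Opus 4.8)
The plan is to route everything through the complete flow polynomial, whose multiplicativity $FC_A(n_1n_2)=FC_A(n_1)FC_A(n_2)$ is precisely the mechanism making the statement true. First I would record the Fortuin--Kasteleyn (Whitney rank) expansion of the partition function, read off directly from Definition~\ref{def:nPotts} by writing $i(\delta(e))=\beta_1+(\alpha_1-\beta_1)[\delta(e)=0]$ and expanding the product over edges (the same trick used before Theorem~\ref{T:Tutte}):
\begin{gather*}
Z_n(G;\alpha_1,\beta_1)=\sum_{A\subseteq G}(\alpha_1-\beta_1)^{e(A)}\beta_1^{e(G)-e(A)}\,n^{k(A)},
\end{gather*}
the sum being over spanning subgraphs $A$ and $k(A)$ the number of their connected components. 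Since every such $A$ is spanning, $v(A)=v(G)$, so the identity $FC_A(n)=n^{e(A)-v(A)+k(A)}$ gives $n^{k(A)}=n^{v(G)-e(A)}FC_A(n)$, which exhibits $Z_n$ as a weighted sum of complete flow polynomials:
\begin{gather*}
Z_n(G;\alpha_1,\beta_1)=\sum_{A\subseteq G}(\alpha_1-\beta_1)^{e(A)}\beta_1^{e(G)-e(A)}\,n^{v(G)-e(A)}\,FC_A(n).\tag{$\star$}
\end{gather*}

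Next I would specialize $(\star)$ to $n=n_1n_2$ and invoke multiplicativity to split $FC_A(n_1n_2)=FC_A(n_1)FC_A(n_2)$, turning the right-hand side into a spanning-subgraph sum of products of complete flow polynomials at $n_1$ and at $n_2$. It then remains to re-express each factor $FC_A(n_i)$ through the partition functions $Z_{n_i}$ of the same parameters $\alpha_1,\beta_1$. For this I would invert $(\star)$ by Möbius inversion over the Boolean lattice of edge subsets: reading $(\star)$ with $A$ in place of $G$ shows that $\beta_1^{-e(A)}Z_n(A)$ is the down-set sum of the function $B\mapsto(\alpha_1-\beta_1)^{e(B)}\beta_1^{-e(B)}n^{v(G)-e(B)}FC_B(n)$, so that
\begin{gather*}
FC_A(n)=\Big(\tfrac{\beta_1}{\alpha_1-\beta_1}\Big)^{e(A)}n^{e(A)-v(G)}\sum_{B\subseteq A}(-1)^{e(A)-e(B)}\beta_1^{-e(B)}Z_n(B).
\end{gather*}
Substituting the two copies $FC_A(n_1)$ and $FC_A(n_2)$ back into the split form of $(\star)$ writes $Z_{n_1n_2}(G)$ as a sum over spanning subgraphs $A\subseteq G$ and $B_1,B_2\subseteq A$ of products $Z_{n_1}(B_1)Z_{n_2}(B_2)$ with explicit combinatorial coefficients, which is exactly the asserted expression over spanning subgraphs.

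The main obstacle, and the only genuinely delicate point, is the inversion step: one must verify that $(\star)$ is triangular with respect to inclusion of edge sets, which it is, since $FC_B$ enters weighted by a monomial indexed by the subset $B$ itself, so Möbius inversion applies term by term. Throughout I would treat the relations as polynomial identities in the formal variables, legitimizing the divisions by $\alpha_1-\beta_1$ and by $n$; the multiplicativity $FC_A(n_1n_2)=FC_A(n_1)FC_A(n_2)$, a consequence of $FC_A(n)=n^{e(A)-v(A)+k(A)}$ (equivalently of the decomposition of the flow space under $\mathbb{Z}_{n_1n_2}\cong\mathbb{Z}_{n_1}\times\mathbb{Z}_{n_2}$), holds identically and needs no coprimality hypothesis. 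Finally, keeping careful track of $v(A)=v(G)$ for spanning subgraphs throughout guarantees that the exponents of $n$ in $(\star)$ and in its inverse match up correctly.
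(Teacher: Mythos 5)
Your proof is correct, and it takes a genuinely different route from the paper's, although both hinge on exactly the same mechanism: the multiplicativity $FC_A(n_1n_2)=FC_A(n_1)FC_A(n_2)$ of the complete flow polynomial. The paper never writes a rank expansion; instead it chains together identities already established in Section~\ref{sec:Biggs}: it first writes $Z_{n_1n_2}(G)$ as a spanning-subgraph sum of flow polynomials $C_A(n_1n_2)$ via Theorem~\ref{T:Tutte} and formula~\eqref{eq:t-flow} (a specialization of the Biggs formula), converts $C_A$ into complete flow polynomials by Proposition~\ref{prop:c-sumflow}, applies multiplicativity, converts back to flow polynomials $C_{A'}(n_1)$, $C_{A''}(n_2)$ by Proposition~\ref{Prop:fc-sumflow}, and finally returns to partition functions via formula~\eqref{eq:flow-t} and Theorem~\ref{T:Tutte}; the coefficients are left as unspecified constants $\lambda$, $\omega$, $\mu$, $\eta$ arising from resummations. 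You bypass the flow polynomial $C_A$ and the Biggs machinery entirely: your Fortuin--Kasteleyn expansion $(\star)$ ties $Z_n$ to $FC_A$ directly, and Boolean M\"obius inversion gives the reverse direction, shortening the chain from five conversions to two and producing fully explicit coefficients. Your justification of the inversion (the weight attached to $FC_B$ is a monomial indexed by $B$ itself, so the system is triangular with respect to inclusion) is the right one, and your caveat about treating divisions by $\alpha_1-\beta_1$ as formal rational identities applies equally to the paper's argument, since its coefficients in~\eqref{eq:flow-t} also involve $1/(\alpha_1-\beta_1)$. What the paper's route buys is thematic economy --- every step is an instance of machinery it has already built; what your route buys is self-containment and explicitness.
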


\begin{proof}Indeed, by Theorem~\ref{T:Tutte} and the formula~\eqref{eq:t-flow} we have
\begin{gather*}
Z_{n_1n_2}(G)=\gamma_G T_G \bigg(\frac{\alpha_1+(n_1n_2-1)\beta_1}{\alpha_1-\beta_1} ,\frac{\alpha_1}{\beta_1}\bigg)=\sum_{A\subseteq G} \lambda_A C_A(n_1 n_2).
\end{gather*}

From Proposition~\ref{prop:c-sumflow} we obtain
\begin{gather*}
\sum_{A\subseteq G} \lambda_A C_A(n_1n_2)=\sum_{A\subseteq G} \lambda_A \bigg(\sum_{A'\subseteq A} (-1)^{e(A)-e(A')} {FC}_{A'}(n_1 n_2)\bigg)
\\ \hphantom{\sum_{A\subseteq G} \lambda_A C_A(n_1n_2)}
{}=\sum_{A\subseteq G} \omega_A {FC}_{A}(n_1 n_2)=\sum_{A\subseteq G} \omega_A {FC}_{A}(n_1){FC}_{A}(n_2),
\end{gather*}
notice that we used for the second resummations the following simple observation: if $X$ is a~span\-ning subgraph of $Y$, which is a spanning subgraph of graph $Z$, so $X$ is a spanning subgraph of~a~graph $Z$. We~omit this remark below.

The Proposition~\ref{Prop:fc-sumflow} implies
\begin{gather*}
\sum_{A\subseteq G} \omega_A {FC}_{A}(n_1){FC}_{A}(n_2)=\sum_{A\subseteq G} \omega_A
\bigg(\sum_{A'\subseteq A}C_{A'}(n_1)\bigg)\bigg(\sum_{A''\subseteq A}C_{A''}(n_2)\bigg)
\\ \hphantom{\sum_{A\subseteq G} \omega_A {FC}_{A}(n_1){FC}_{A}(n_2)}
{}=\sum_{A'\subseteq G}\sum_{A'' \subseteq G} \mu_{A'A''}C_{A'}(n_1)C_{A''}(n_2).
\end{gather*}

Finally, with the help of formula~\eqref{eq:flow-t} and Theorem~\ref{T:Tutte} we obtain
\begin{gather*}
\sum_{A'\subseteq G}\sum_{A'' \subseteq G} \mu_{A'A''}\bigg(\sum_{B\subseteq A'} \delta_B Z_{n_1}(B)\bigg)\bigg(\sum_{C\subseteq A''} \delta_C Z_{n_2}(C)\bigg)=\sum_{A'\subseteq G}\sum_{A'' \subseteq G} \eta_{A'A''} Z_{n_1}(A') Z_{n_2}(A''),
\end{gather*}
where $\eta_{A'A''}$ are some constants, appeared after the resummations.
\end{proof}

\begin{Remark}[convolution formula~\cite{Koo}]
It seems extremely interesting and fruitful to compare Lemma~\ref{L:Biggs} and Theorem~\ref{T:sh} with the convolution formula
\begin{gather*}
T_G(x, y) = \sum_{A\subseteq E(G)}T_{G|A}(0,y)T_{G/A}(x,0),
\end{gather*}
here the summation is over all possible subsets of $E(G)$, here $G|A$ is a graph obtained by the restriction of $G$ on the edge subset $A$ and $G/A$ is a graph obtained from $G$ by the contraction of all edges from $A$ (see~\cite{ElM} for more details).
\end{Remark}

Our second method is based on the Tutte identity for the chromatic polynomial:

\begin{Theorem}[\cite{ElM}] \label{T:Tid}
 Consider a graph $G$ with the set of edge $V(G)$ then the following formula holds
 \begin{gather*}
 \chi_G(n_1+n_2)=\sum_{B \subseteq V(G)} \chi_{G|B}(n_1)\chi_{G|B^c}(n_2),
 \end{gather*}
 where $G|B$ $(G|B^c)$ is the restriction of $G$ on the vertex subset $B \in V(G)$ $(B^c \in V(G)$, where $B^c=V(G)\setminus B)$.
 \end{Theorem}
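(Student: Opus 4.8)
The plan is to establish the identity first for positive integers $n_1,n_2$ by a direct bijective counting argument, and then to upgrade it to an identity of polynomials via the principle that two polynomials agreeing on a Zariski-dense set of points coincide. The starting point is the combinatorial meaning of the chromatic polynomial recalled earlier: for a positive integer $m$, the value $\chi_G(m)$ is exactly the number of proper colorings of $V(G)$ in $m$ colors. Thus $\chi_G(n_1+n_2)$ counts proper colorings of $G$ using a palette of $n_1+n_2$ colors.

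First I would fix that palette and split it into two disjoint subpalettes $C_1$ and $C_2$ with $|C_1|=n_1$ and $|C_2|=n_2$. Given any proper coloring $c$ of $G$ in the full palette, I set
\begin{gather*}
B:=\{v\in V(G)\colon c(v)\in C_1\},\qquad B^c=V(G)\setminus B=\{v\in V(G)\colon c(v)\in C_2\}.
\end{gather*}
The restriction $c|_{B}$ uses only the $n_1$ colors of $C_1$ and is proper on the induced subgraph $G|B$, while $c|_{B^c}$ is a proper coloring of $G|B^c$ in the $n_2$ colors of $C_2$. The crucial observation, which is the heart of the argument, is that every edge of $G$ joining a vertex of $B$ to a vertex of $B^c$ is automatically bichromatic, since its endpoints carry colors from the disjoint palettes $C_1$ and $C_2$; hence no constraint on such cross edges is lost when we forget them and keep only the two induced colorings.

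Next I would build the inverse map. Given a subset $B\subseteq V(G)$, a proper $n_1$-coloring of $G|B$ (using $C_1$) and a proper $n_2$-coloring of $G|B^c$ (using $C_2$), gluing them yields a coloring of $G$ in the full palette; edges inside $B$ and inside $B^c$ are proper by hypothesis, and cross edges are proper by the disjointness of $C_1,C_2$. This is manifestly inverse to the forward map, so for each $B$ we obtain a bijection between proper $(n_1+n_2)$-colorings of $G$ whose $C_1$-colored vertex set is exactly $B$ and pairs of proper colorings of $G|B$ and $G|B^c$. Summing over all $B\subseteq V(G)$ partitions the set of all proper $(n_1+n_2)$-colorings, which gives
\begin{gather*}
\chi_G(n_1+n_2)=\sum_{B\subseteq V(G)}\chi_{G|B}(n_1)\chi_{G|B^c}(n_2)
\end{gather*}
for all positive integers $n_1,n_2$.

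The remaining step, and the only place requiring care, is the passage from integer values to a polynomial identity in the formal variables $n_1,n_2$. Both sides are polynomials in $(n_1,n_2)$ (the left side is $\chi_G$ evaluated at $n_1+n_2$, the right side a finite sum of products $\chi_{G|B}(n_1)\chi_{G|B^c}(n_2)$), and they agree on the full integer grid $\{(n_1,n_2)\colon n_1,n_2\in\mathbb{Z}_{>0}\}$, a set on which a nonzero two-variable polynomial cannot vanish. Therefore the two polynomials coincide, which is the asserted identity. I expect the main obstacle to be purely expository: stating the bijection cleanly so that the disjoint-palette argument for cross edges is transparent, and phrasing the polynomial-extension step so that it is clearly an identity in two variables rather than a one-variable specialization.
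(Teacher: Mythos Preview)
Your argument is correct: the splitting of the $(n_1+n_2)$-color palette into two disjoint blocks, the induced partition $V(G)=B\sqcup B^c$, the observation that cross edges are automatically proper, and the polynomial extension from the positive-integer grid are all fine. This is the standard bijective proof of the additive Tutte identity.

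There is nothing to compare against in the paper itself: Theorem~\ref{T:Tid} is quoted from \cite{ElM} without proof and is used only as an input to the subsequent shifting-order theorem. So your write-up supplies what the paper omits rather than duplicating or diverging from anything the authors wrote. One minor remark: the empty induced subgraph $G|\varnothing$ contributes $\chi_{G|\varnothing}(n_i)=1$ (the unique empty coloring), consistent with the paper's convention that the partition function of the empty graph is $1$; you may want to say this explicitly, since the sum over $B\subseteq V(G)$ includes $B=\varnothing$ and $B=V(G)$.
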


 Using this fact we can formulate the following theorem:

\begin{Theorem} 
The partition function $Z_{n_1+n_2}(G)$ of the $n_1+n_2$-Potts model $M(G; \alpha_1, \beta_1)$ could be expressed in terms of the partition functions $Z_{n_1}(A)$ and $Z_{n_2}(A)$ of the $n_1$-Potts model $M_1(A; \alpha_1, \beta_1)$ and $n_2$-Potts model $M_2(A; \alpha_1, \beta_1)$ of all spanning subgraphs $A$ of the graph $G$ correspondingly.
 \end{Theorem}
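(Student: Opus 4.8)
The plan is to repeat, almost verbatim, the argument proving Theorem~\ref{T:sh}, with the complete flow polynomial and its multiplicativity replaced by the chromatic polynomial and the Tutte identity of Theorem~\ref{T:Tid}. The latter plays exactly the role of the multiplicativity $FC_G(n_1n_2)=FC_G(n_1)FC_G(n_2)$ used before, splitting a quantity evaluated at $n_1+n_2$ into a product of the corresponding quantities at $n_1$ and $n_2$, now for the \emph{additive} shift. First I would use Theorem~\ref{T:Tutte} together with formula~\eqref{eq:genm} to express the partition function of the $(n_1+n_2)$-Potts model as a weighted sum over spanning subgraphs of chromatic polynomials:
\begin{gather*}
Z_{n_1+n_2}(G)=q^{e(G)}\sum_{A\subseteq G}\bigg(\frac{p}{q}\bigg)^{e(A)}\chi_A(n_1+n_2),
\end{gather*}
with $p=\beta_1-\alpha_1$ and $q=\alpha_1$ as in~\eqref{eq:genm}.

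Next I would apply the Tutte identity of Theorem~\ref{T:Tid} to each summand, splitting
\begin{gather*}
\chi_A(n_1+n_2)=\sum_{B\subseteq V(A)}\chi_{A|B}(n_1)\,\chi_{A|B^c}(n_2),
\end{gather*}
which turns the expression into a sum of products of chromatic polynomials, one in the variable $n_1$ and one in $n_2$. It then remains to invert the passage from the Potts partition function to the chromatic polynomial. For this I would invoke Lemma~\ref{L:Biggs} with the roles of the two models reversed relative to~\eqref{eq:genm}: taking the first model $M_1(\,\cdot\,;0,1)$, whose normalized partition function equals $\chi/n^{v}$, and the second model $M_2(\,\cdot\,;\alpha_1,\beta_1)$, each factor $\chi_{A|B}(n_1)$ and $\chi_{A|B^c}(n_2)$ gets rewritten as a weighted sum over the edge induced subgraphs of $A|B$ (respectively $A|B^c$) of the partition functions $Z_{n_1}(\,\cdot\,;\alpha_1,\beta_1)$ (respectively $Z_{n_2}(\,\cdot\,;\alpha_1,\beta_1)$). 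Since every such subgraph is again a spanning subgraph of $G$, the whole expression becomes a multiple sum which, after collecting the coefficients, takes the desired form
\begin{gather*}
Z_{n_1+n_2}(G)=\sum_{A'\subseteq G}\sum_{A''\subseteq G}\eta_{A'A''}\,Z_{n_1}(A')\,Z_{n_2}(A''),
\end{gather*}
with constants $\eta_{A'A''}$ produced by the resummations, exactly as in Theorem~\ref{T:sh}.

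The main obstacle I anticipate is purely combinatorial bookkeeping, namely matching the two different kinds of decomposition that occur. The Tutte identity decomposes $A$ along a partition of its vertex set into $B$ and $B^c$ and uses vertex induced restrictions $A|B$, whereas the Biggs formula expands over edge subgraphs; one must check that the vertex restrictions and the subsequent edge subgraph expansions can be consistently re-indexed as two \emph{independent} spanning subgraphs $A'$, $A''$ of $G$, which is the analogue of the transitivity observation (``if $X$ is a spanning subgraph of $Y$ and $Y$ of $Z$, then $X$ is a spanning subgraph of $Z$'') used in the proof of Theorem~\ref{T:sh}. One must also track that the normalization factors $n_i^{v(\cdot)}$ coming from the several applications of the inverse Biggs formula combine correctly. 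Once this re-indexing is carried out, the coefficients $\eta_{A'A''}$ are well defined and the statement follows.
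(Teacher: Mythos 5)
Your proposal is correct and follows essentially the same route as the paper's own proof: expand $Z_{n_1+n_2}(G)$ via~\eqref{eq:genm} into chromatic polynomials over spanning subgraphs, split each $\chi_A(n_1+n_2)$ by the Tutte identity of Theorem~\ref{T:Tid}, convert the resulting chromatic factors back into Potts partition functions via the Biggs formula with reversed models, and complete all subgraphs to spanning subgraphs of $G$ by adding isolated vertices. The bookkeeping issues you flag are exactly the ones the paper handles (and largely absorbs into the unspecified constants $\mu_{A_1A_2}$ and $\eta_{A'A''}$), so no gap remains.
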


 \begin{proof}
The proof is very similar to the proof of Theorem~\ref{T:sh}. Again, from Theorem~\ref{T:Tutte} and the formula~\eqref{eq:genm} we have
\begin{gather*}
Z_{n_1+n_2}(G)=\gamma_G T_G \bigg(\frac{\alpha_1+(n_1+n_2-1)\beta_1}{\alpha_1-\beta_1} ,\frac{\alpha_1}{\beta_1}\bigg)=\sum_{A\subseteq G} \lambda_A \chi_A(n_1+n_2).
\end{gather*}
From Theorem~\ref{T:Tid} we obtain
\begin{gather*}
\sum_{A\subseteq G} \lambda_A \chi_A(n_1+n_2)=\sum_{A \subseteq G}\lambda_A\bigg(\sum_{B \subseteq V(A)} \chi_{A|B}(n_1)\chi_{A|B^c}(n_2)\bigg)
\\ \hphantom{\sum_{A\subseteq G} \lambda_A \chi_A(n_1+n_2)}
{}=\sum_{A \subseteq G}\lambda_A\Bigg(\sum_{B \subseteq V(A)}\!\! \bigg(\sum_{A_1 \subseteq A|B}\!\! \omega_{A_1}Z_{n_1}(A_1)\bigg)\bigg(\sum_{A_2 \subseteq G|B^c} \!\!\omega_{A_2}Z_{n_2}(A_2)\bigg)\Bigg)=
\\ \hphantom{\sum_{A\subseteq G} \lambda_A \chi_A(n_1+n_2)}
{}=\sum_{A \subseteq G} \sum_{B \subseteq V(A)} \sum_{A_1 \subseteq A|B} \sum_{A_2 \subseteq A|B^c} \mu_{A_1 A_2}Z_{n_1}(A_1)Z_{n_2}(A_2).
\end{gather*}
Let us complete each subgraph $A_1$ (each $A_2$) to the corresponding spanning subgraph of $G$ by adding isolating vertices
\begin{gather*}
\sum_{A \subseteq G} \sum_{B \subseteq V(A)} \sum_{A_1 \subseteq A|B} \sum_{A_2 \subseteq A|B^c} \mu_{A_1 A_2}Z_{n_1}(A_1)Z_{n_2}(A_2)=\sum_{A'\subseteq G}\sum_{A'' \subseteq G} \eta_{A'A''} Z_{n_1}(A') Z_{n_2}(A''),
\end{gather*}
where $\eta_{A'A''}$ are again some constants, appeared after the resummations.
\end{proof}

\section{Star-triangle equation for Ising and Potts models}\label{sec:ST}

\subsection{General properties} Let us rewrite the partition function of the anisotropic $n$-Potts models in the so called Fortuin--Kasteleyn representation:
 \begin{Proposition}[compare with the formula $(2.7)$ from~\cite{Sok}]\label{d1}
 Consider the anisotropic $n$-Potts model $M(G;i_e)$, then its partition function could be expressed as follows
 \begin{gather}\label{statsumma}
Z_n(G)=\sum_{\sigma}\prod_{e\in E}(\beta_{e}+(\alpha_{e}-\beta_{e})\delta(\sigma_{e}))=\prod_{e\in E} \beta_{e}\sum_{\sigma}\prod_{e\in E}(1+(t_{e}-1)\delta(\sigma_{e})),
\end{gather}
where $\delta(\sigma_{e})$ is a value of standard Kronecker delta function of the values of $\sigma$ on the boundary vertices of the edge $e$ and $t_{e}=\frac{\alpha_{e}}{\beta_{e}}$ is a reduced weight of the edge~$e$.
\end{Proposition}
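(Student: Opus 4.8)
The plan is to reduce the statement to a single pointwise identity expressing each edge weight $i_e(\delta(e))$ as an affine function of a Kronecker delta, and then to multiply over edges and sum over states. First I would unfold the definition of the partition function given at the start of Section~\ref{sec:Biggs}, writing
\[
Z_n(G)=\sum_{\sigma}\prod_{e\in E} i_e(\delta(e)),
\]
where $\delta(e)=\sigma(v)-\sigma(w)$ for the edge $e$ joining $v$ and $w$. The entire content of the proposition is the observation that, for the $n$-Potts model, the two-valued function $i_e$ is governed solely by whether the endpoints of $e$ carry equal values.

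Concretely, I would let $\delta(\sigma_e)$ equal $1$ when $\sigma(v)=\sigma(w)$ (equivalently $\delta(e)=0$) and $0$ otherwise, and then verify the identity
\[
i_e(\delta(e))=\beta_e+(\alpha_e-\beta_e)\delta(\sigma_e).
\]
This is immediate from Definition~\ref{def:nPotts}: when the endpoints agree the right-hand side equals $\beta_e+(\alpha_e-\beta_e)=\alpha_e=i_e(0)$, and when they differ it equals $\beta_e=i_e(a)$ for the corresponding nonzero $a$. Substituting this expression into the product over edges and summing over all states $\sigma$ gives at once the first equality in~\eqref{statsumma}.

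For the second equality I would factor $\beta_e$ out of each bracket, using $\beta_e\neq 0$ so that $t_e=\alpha_e/\beta_e$ is well defined and $(\alpha_e-\beta_e)/\beta_e=t_e-1$. This yields $\beta_e+(\alpha_e-\beta_e)\delta(\sigma_e)=\beta_e\bigl(1+(t_e-1)\delta(\sigma_e)\bigr)$ edge by edge; taking the product over $e\in E$ then pulls the scalar $\prod_{e\in E}\beta_e$ outside the sum over states, which is exactly the asserted form.

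I expect no serious obstacle here, since the argument is a direct rewriting resting on multiplicativity of the product over edges and linearity of the sum over states. The only point deserving a moment of care is the bookkeeping convention for the Kronecker delta: its argument should be read as the pair of endpoint values, so that $\delta(\sigma_e)=1$ signals coincidence of those values and hence selects the weight $\alpha_e$, rather than being attached to the vanishing of $\delta(e)$ in a way that inverts the roles of $\alpha_e$ and $\beta_e$.
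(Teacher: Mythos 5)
Your proposal is correct and follows essentially the same route as the paper: a pointwise case analysis showing $i_e(\delta(e))=\beta_e+(\alpha_e-\beta_e)\delta(\sigma_e)$ according to whether the endpoint values coincide, followed by summing over states and factoring out $\prod_{e\in E}\beta_e$. Your explicit treatment of the second equality (and the remark that $\beta_e\neq 0$ is needed for $t_e$ to be defined) is slightly more careful than the paper, which leaves that trivial factoring implicit, but the argument is the same.
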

 \begin{proof}
 Indeed, it is easy to see that if $\sigma(v)=\sigma(w)$:
 \begin{gather*}
 i_e(\delta(e))=i_e(\sigma(v)-\sigma(w))=\alpha_e
 =\beta_{e}+(\alpha_{e}-\beta_{e})\delta(\sigma_{e})
 =\beta_{e}+(\alpha_{e}-\beta_{e})\delta(\sigma(v),\sigma(w)),
 \end{gather*}
 and if $\sigma(v) \neq \sigma(w)$:
 \begin{gather*}
 i_e(\delta(e))=i_e(\sigma(v)-\sigma(w))=\beta_e
 =\beta_{e}+(\alpha_{e}-\beta_{e})\delta(\sigma_{e})
 =\beta_{e}+(\alpha_{e}-\beta_{e})\delta(\sigma(v),\sigma(w)).\!\!\!\!
 \tag*{\qed}
 \end{gather*}
\renewcommand{\qed}{}
\end{proof}

Also, we~introduce the \textit{boundary} partition function of the $n$-Potts models:
\begin{Definition} \label{d2}
 Let $G$ be a graph (with possible loops and multiple edges) with the set of vertices $V$, the set of edges $E$ and the \textit{boundary} subset $S\subseteq V $ of enumerated vertices: $S=\{v_1, v_2, \dots , v_k\}$. The \textit{boundary} partition function on $G$ is defined by the following expression
\begin{gather*}
Z_{n;S(\textbf{A})}(G)=\sum_{\sigma_{\textbf{A}}}\prod_{e\in E}(\beta_{e}+(\alpha_{e}-\beta_{e})\delta(\sigma_{e})),
\end{gather*}
 where $\textbf{A}=\{a_1, a_2, \dots, a_k\}$, $\forall i\colon a_i \in \mathbb Z_n$ is the set of fixed values, and the summation is over such states $\sigma_{\textbf{A}}$ that $\sigma_{\textbf{A}}(v_i)=a_i$.
\end{Definition}
\begin{Remark}
If $n=2$, we~will omit the index $2$ and will write just $Z_{S(\textbf{A})}(G)$.
\end{Remark}

The next Lemma connects boundary and ordinary partition functions:

 \begin{Lemma} \label{lemma1}
Consider two graphs $G_1=(V_1, E_1)$ and $G_2=(V_2, E_2)$ with the only common vertices in the \textit{boundary} subset $S=\{v_1, v_2, \dots, v_n\}$ in $V_1$ and $V_2$. We~can glue these graphs and obtain the third graph $G=(V,E)$, where $E=E_1\sqcup E_2$, $V=V_1\cup_{S} V_2$. Then, the following identity holds
\begin{gather*}
Z_n(G)=\sum_{\textbf{A}}Z_{n;S(\textbf{A})}(G_1)Z_{n;S(\textbf{A})}(G_2),
\end{gather*}
where the summation is over all possible sets $\textbf{A}$.
\end{Lemma}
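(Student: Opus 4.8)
The plan is to expand both sides from the definition of the partition function and recognize that the asserted identity is merely a regrouping of one and the same sum over state weights. First I would observe that, since $V=V_1\cup_{S}V_2$ and the two graphs share only the boundary vertices of $S$, a state $\sigma\colon V\to\mathbb Z_n$ is the same datum as a pair of states $\sigma_1\colon V_1\to\mathbb Z_n$ and $\sigma_2\colon V_2\to\mathbb Z_n$ whose restrictions to $S$ agree, $\sigma_1|_S=\sigma_2|_S$; conversely every such compatible pair glues to a unique state on $G$.

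Next I would use the edge decomposition $E=E_1\sqcup E_2$ together with the Fortuin--Kasteleyn form of the weight from Proposition~\ref{d1}. Every edge of $G_1$ has both endpoints in $V_1$ and every edge of $G_2$ has both endpoints in $V_2$, so each edge factor $\beta_e+(\alpha_e-\beta_e)\delta(\sigma_e)$ depends only on $\sigma_1$ or only on $\sigma_2$. Hence the total weight factorizes as $W_G(\sigma)=W_{G_1}(\sigma_1)\,W_{G_2}(\sigma_2)$, with no factor coupling the values on $V_1\setminus S$ to those on $V_2\setminus S$.

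Then I would sum over all states, first fixing the common values $\textbf A=\{a_1,\dots,a_k\}$ that $\sigma$ takes on the boundary $S$ and afterwards summing over the remaining freedom on $V_1\setminus S$ and on $V_2\setminus S$ independently:
\begin{gather*}
Z_n(G)=\sum_{\textbf A}\ \sum_{\sigma_1\colon\sigma_1|_S=\textbf A}\ \sum_{\sigma_2\colon\sigma_2|_S=\textbf A}W_{G_1}(\sigma_1)W_{G_2}(\sigma_2)
=\sum_{\textbf A}\Bigg(\sum_{\sigma_1|_S=\textbf A}W_{G_1}(\sigma_1)\Bigg)\Bigg(\sum_{\sigma_2|_S=\textbf A}W_{G_2}(\sigma_2)\Bigg).
\end{gather*}
By Definition~\ref{d2} the two inner sums are exactly $Z_{n;S(\textbf A)}(G_1)$ and $Z_{n;S(\textbf A)}(G_2)$, which is the claimed identity.

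I do not expect a serious obstacle here: the entire content is the bookkeeping in the two points above, namely the bijection between states on the glued graph and compatible pairs of states, and the factorization of the weight across $E_1\sqcup E_2$. The only step demanding genuine care is verifying that no edge weight ever couples an interior vertex of $G_1$ with an interior vertex of $G_2$, which is guaranteed precisely by the hypothesis that $G_1$ and $G_2$ meet only along the boundary $S$.
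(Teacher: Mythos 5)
Your proof is correct and follows essentially the same route as the paper: the paper likewise writes $Z_n(G)=\sum_{\textbf{A}}Z_{n;S(\textbf{A})}(G)$ and then factorizes $Z_{n;S(\textbf{A})}(G)=Z_{n;S(\textbf{A})}(G_1)\,Z_{n;S(\textbf{A})}(G_2)$, which is exactly your conditioning on boundary values followed by independent summation over the interiors. You merely spell out the state bijection and edge-weight factorization that the paper leaves implicit.
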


\begin{figure}[h] \centering
 \includegraphics[scale=1]{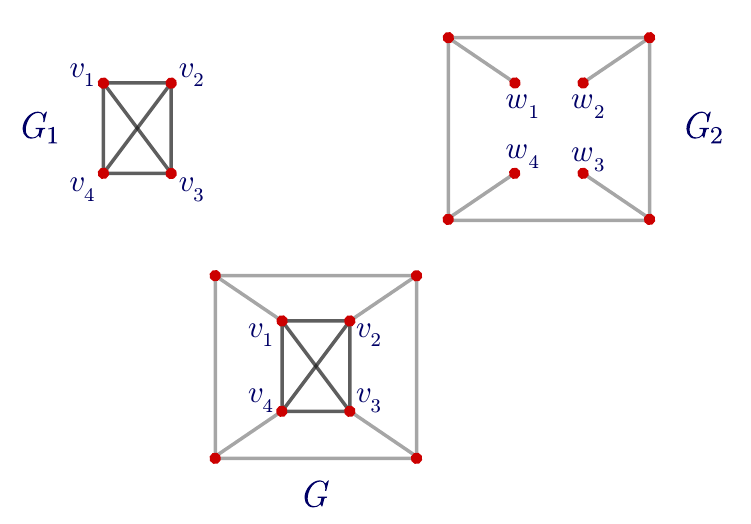}
 \caption{$G$ is obtained by merging of $S=\{v_1, v_2, v_3, v_4\}$.}
\end{figure}

 \begin{proof} The formula is obtained directly from the Proposition~\ref{d1} and Definitions~\ref{d2}. \mbox{Indeed}, by the Definition~\ref{d2} we can write down $Z_n(G)=\sum\limits_{\textbf{A}}Z_{n;S(\textbf{A})}(G), $ but also $Z_{n;S(\textbf{A})} (G)=Z_{n;S_1(\textbf{A})} (G_1)Z_{n;S_2(\textbf{A})} (G_2)$.
 \end{proof}

\begin{Remark}
The latter property of a partition function (Lemma~\ref{lemma1}) allow us to consider $n$-Potts model partition function as a discrete version of the topological quantum field theory in the Atiyah formalism~\cite{Atiah}, where
\begin{gather*}
 TQFT\colon\ {\rm Cob}\to {\rm Vect}
 \end{gather*}
 is a functor from the category of cobordisms to the category of vector spaces.
 \end{Remark}

\subsection[The case n=2]{The case $\boldsymbol{n=2}$}

In this subsection we consider the case $n=2$. Our first goal is to find such conditions that the partition function~\eqref{statsumma} is invariant under the star-triangle transformation which changes the subgraph $\Omega$ to the subgraph $\Omega'$.
We derive these conditions with the use of the boundary partition functions: consider a~graph~$G$ with the subgraph $\Omega$, then using Lemma~\ref{lemma1} for graphs~$\Omega$ and $G-\Omega$ we obtain the following identity
\begin{gather*}
 Z(G)=\sum_{\textbf{A}} Z_{S(\textbf{A})}(\Omega) Z_{S(\textbf{A})} (G - \Omega),
\end{gather*}
 where $S=\{v_1, v_2, v_3\}$ (Figure~\ref{fig startr}).
 After the star-triangle transformation, we~obtain a graph $G'$ with the following partition function
 \begin{gather*}
 Z(G')=\sum_{\textbf{A}} Z_{S(\textbf{A})}(\Omega') Z_{S(\textbf{A})} (G' - \Omega').
 \end{gather*}
\begin{figure}[h] \centering
\includegraphics[scale=1]{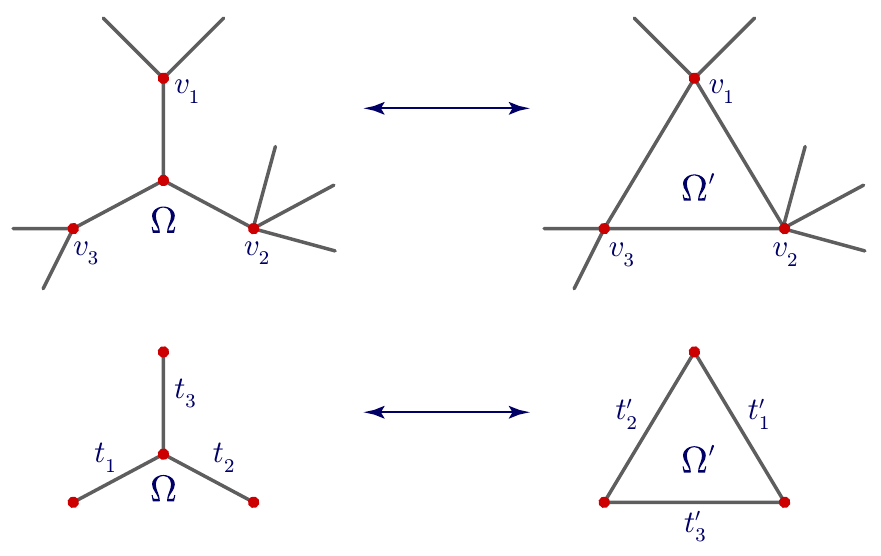}
\caption {Star-triangle transformation.}\label{fig startr}
\end{figure}

 \noindent
 Due to the fact that the star-triangle transformation does not change edges of the graph $G-\Omega$, we~deduce that
 $\forall \textbf{A}\colon Z_{S(\textbf{A})} (G - \Omega)=Z_{S(\textbf{A})} (G' - \Omega')$.
 Therefore, the sufficient and necessary conditions for the invariance of the partition function are the following
 \begin{gather}
 \label{cond}
\forall \textbf{A}\colon\ Z_{S(\textbf{A})} (\Omega)=Z_{S(\textbf{A})} (\Omega').
\end{gather}
We write them down in detail. Let us note that these conditions do not depend on the states of the vertices (see Figure~\ref{fig startr2}), but depend on the number and the
positions of the vertices with equal states. Therefore, we~have the following possibilities:
\begin{figure}[h] \centering
\includegraphics[scale=1 ]{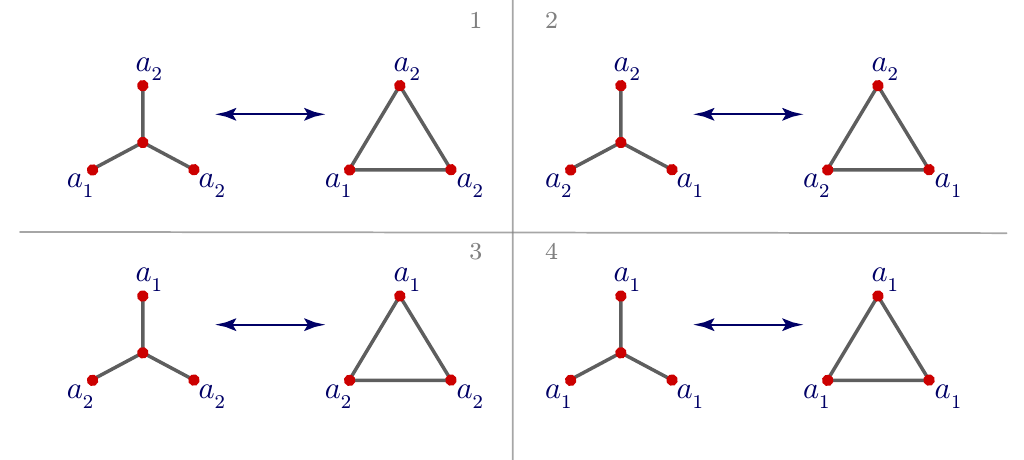}
\caption{Different possibilities.}\label{fig startr2}
\end{figure}

\begin{itemize}\itemsep=0pt
	\item two states in the triangle are the same, then the central vertex either has the same state or has the different state, then $\alpha_1\beta_2\beta_3+\beta_1\alpha_2\alpha_3\mapsto \alpha'_1\beta'_2\beta'_3$ and two more maps after permuting indexes,
	\item all states are the same, then $\alpha_1\alpha_2\alpha_3+\beta_1\beta_2\beta_3\mapsto \alpha'_1\alpha'_2\alpha'_3$.
\end{itemize}

In this way we obtain the following equations
\begin{gather} \label{eq1}
\begin{cases}
\alpha_1\beta_2\beta_3+\beta_1\alpha_2\alpha_3= \alpha'_1\beta'_2\beta'_3,\\
\alpha_2\beta_1\beta_3+\beta_2\alpha_1\alpha_3= \alpha'_2\beta'_1\beta'_3,\\
\alpha_3\beta_1\beta_2+\beta_3\alpha_1\alpha_2= \alpha'_3\beta'_1\beta'_2,\\
\alpha_1\alpha_2\alpha_3+\beta_1\beta_2\beta_3= \alpha'_1\alpha'_2\alpha'_3.
\end{cases}
\end{gather}
After the substitution
\begin{gather*}
t_i = \frac{\alpha_i}{\beta_i}
\end{gather*}
we rewrite~\eqref{eq1} as
\begin{gather*}
\begin{cases}
\beta_1\beta_2\beta_3(t_1+t_2t_3)=\beta'_1\beta'_2\beta'_3t'_1,\\
\beta_1\beta_2\beta_3(t_2+t_1t_3)=\beta'_1\beta'_2\beta'_3t'_2,\\
\beta_1\beta_2\beta_3(t_3+t_1t_2)=\beta'_1\beta'_2\beta'_3t'_3,\\
\beta_1\beta_2\beta_3(t_1t_2t_3+1)=\beta'_1\beta'_2\beta'_3t'_1t'_2t'_3.
\end{cases}
\end{gather*}

This set of equations defines a correspondence which preserves the Ising model partition function if we mutate the graph $G$ to $G'$. Let us denote the product $\beta_1\beta_2\beta_3$ by $\beta$ and the product $\beta'_1\beta'_2\beta'_3$ by $\beta'$. Then, we~obtain the following map from the $(t,\beta)$-variables to the $(t',\beta')$ variables, we~will call it $\widetilde{F}$,
\begin{gather}
\widetilde{F}(t_1, t_2, t_3,\beta)=(t'_1, t'_2, t'_3,\beta')\colon\nn\\
t'_1 =\sqrt{\dfrac{(t_1+t_2t_3)(t_1t_2t_3+1)}{(t_2+t_1t_3)(t_3+t_1t_2)}},\nn\\
t'_2 = \sqrt{\dfrac{(t_2+t_1t_3)(t_1t_2t_3+1)}{(t_1+t_2t_3)(t_3+t_1t_2)}},\nn\\
t'_3 = \sqrt{\dfrac{(t_3+t_1t_2)(t_1t_2t_3+1)}{(t_1+t_2t_3)(t_2+t_1t_3)}},\nn\\
\beta'=\beta\sqrt{\frac{(t_1+t_2t_3)(t_3+t_1t_2)(t_2+t_1t_3)}{(t_1t_2t_3+1)}}.\label{Tchange}
\end{gather}
\begin{Remark}
Formally speaking to define a map on the space of edge weight adopted to the star-triangle transformation we have to resolve the map $\widetilde{F}$ somehow for the parameters $\beta_i$. For~example one can take the following one
\[
\beta'_i=\beta_i(\beta'/\beta)^{1/3}.
\]
Actually, the choice of a resolution is not important in what follows.
\end{Remark}

\begin{Remark}
We choose the positive branch of the root function for real positive values of~vari\-ables $t_i$ for purposes emphasized further. This is relevant to the almost positive version of the orthogonal grassmanian. See~\cite{GalPil} for more details about the connection of the Ising model and positive orthogonal grassmanian.
\end{Remark}

\subsection[The case n>2]{The case $\boldsymbol{n \neq 2}$}

Let us demonstrate how the method, described above, works for the star-triangle transformation in the case $n \geq 3$. Using the same ideas as in the previous subsection we could obtain the following conditions
\begin{equation}
\begin{cases}
\beta_1\beta_2\beta_3(t_1+t_2t_3+n-2)= \beta'_1\beta'_2\beta'_3t'_1,\\[.5ex]
\beta_1\beta_2\beta_3(t_2+t_1t_3+n-2)= \beta'_1\beta'_2\beta'_3t'_2,\\[.5ex]
\beta_1\beta_2\beta_3(t_3+t_1t_2+n-2)= \beta'_1\beta'_2\beta'_3t'_3,\\[.5ex]
\beta_1\beta_2\beta_3(t_1t_2t_3+n-1)= \beta'_1\beta'_2\beta'_3t'_1t'_2t'_3,\\[.5ex]
\beta_1\beta_2\beta_3(t_1+t_2+t_3+n-3)= \beta'_1\beta'_2\beta'_3.
\end{cases}\label{eq:q>3}
\end{equation}
Here the last equation follows from the extra case in which all states are different.
\begin{figure}[h]
 \centering
\includegraphics[scale=1 ]{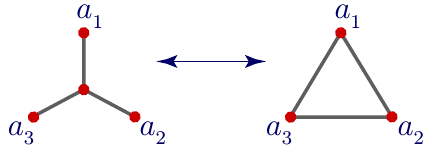}
\caption{The extra case.}
\end{figure}

In general, the system~\eqref{eq:q>3} does not have a solution and the star-triangle transformation is~not possible. But, if $t_i$ satisfy the special condition, partition function of the $n$-Potts model is still invariant under the star-triangle transformation.
\begin{Proposition}
The system~\eqref{eq:q>3} together with equation
\begin{gather} \label{condeq}
t_1t_2t_3=t_1t_2+t_2t_3+t_3t_1+(n-1)(t_1+t_2+t_3)+n^2-3n+1
\end{gather}
has a solution in terms of prime variables.
\end{Proposition}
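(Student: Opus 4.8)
The plan is to treat \eqref{eq:q>3} as a system in the prime unknowns $t'_1,t'_2,t'_3$ together with the product $\beta':=\beta'_1\beta'_2\beta'_3$ (only this product occurs), regarding $t_1,t_2,t_3$, $\beta:=\beta_1\beta_2\beta_3$ and $n$ as given. There are five equations but only four prime unknowns, so the system is overdetermined; I would solve four of the equations explicitly and then show that the fifth becomes automatically consistent exactly when \eqref{condeq} holds.

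First I would read $\beta'$ off the last equation of \eqref{eq:q>3}, giving $\beta'=\beta\,(t_1+t_2+t_3+n-3)$, and then solve the first three equations for the prime weights,
\begin{gather*}
t'_i=\frac{t_i+t_jt_k+n-2}{t_1+t_2+t_3+n-3},\qquad \{i,j,k\}=\{1,2,3\},
\end{gather*}
which is well defined whenever $t_1+t_2+t_3+n-3\neq 0$ (the degenerate locus may be excluded, or handled by a limiting argument as for $\widetilde{F}$ in \eqref{Tchange}). Substituting these four expressions into the remaining fourth equation $\beta(t_1t_2t_3+n-1)=\beta' t'_1t'_2t'_3$ and clearing denominators collapses the whole system to the single scalar compatibility relation
\begin{gather*}
(t_1t_2t_3+n-1)(t_1+t_2+t_3+n-3)^2=(t_1+t_2t_3+n-2)(t_2+t_1t_3+n-2)(t_3+t_1t_2+n-2).
\end{gather*}

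The heart of the argument is to show that this compatibility relation is a consequence of \eqref{condeq}. I would set $m:=n-2$ and pass to the elementary symmetric functions $s=t_1+t_2+t_3$, $p=t_1t_2+t_2t_3+t_3t_1$, $r=t_1t_2t_3$. Expanding both sides as symmetric polynomials, the difference $P$ of the two sides simplifies sharply: all dependence on $r$ and $p$ organizes into the single combination $D:=r-p$, so that $P=D^2-m(1+m+s)\,D+R(s,m)$ with $R(s,m)=-s^2m-s^2-sm^2+2s+m^2+m-1$ a polynomial in $s$ and $m$ alone. Since in these variables \eqref{condeq} reads precisely $D=(m+1)s+(m^2+m-1)$ (note $m^2+m-1=n^2-3n+1$), viewing $P$ as a quadratic in $D$ and invoking the factor theorem shows that the linear form $D-(m+1)s-(m^2+m-1)$ divides $P$; hence $P$ vanishes on the locus \eqref{condeq}. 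Consequently the fourth equation holds identically there, and the formulas above furnish the desired solution in the prime variables.

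The main obstacle is the symmetric-function bookkeeping in this last step: one must expand the degree-six product on the right and the degree-five term $(r+m+1)(s+m-1)^2$ on the left and recognize that every $r$- and $p$-contribution assembles into $D=r-p$. Once this reduction is seen the verification is immediate, but it is precisely where the specific constant $n^2-3n+1$ in \eqref{condeq} is forced, and where a slip in the expansion would be easiest to make.
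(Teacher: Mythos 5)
Your proposal is correct and follows essentially the same route as the paper: solve the first three equations together with the last one for $t'_i$ and $\beta'_1\beta'_2\beta'_3$, substitute into the fourth equation to obtain the single compatibility relation $(t_1t_2t_3+n-1)(t_1+t_2+t_3+n-3)^2=\prod_i\big(t_i+t_jt_k+n-2\big)$, and check that it holds on the locus~\eqref{condeq}. The only difference is that you carry out explicitly, via the symmetric functions $s$, $p$, $r$ and the quadratic in $D=r-p$, the verification that the paper dismisses as a ``straightforward computation'' (and your factorization, whose second root is $D=1-s$, is indeed correct).
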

\begin{proof}
Using the first three and the last equations of~\eqref{eq:q>3} we immediately obtain the expressions for $t'_i$ and $\frac{\beta'_1\beta'_2\beta'_3}{\beta_1\beta_2\beta_3}$:
\begin{gather*}
\begin{cases}
\dfrac{\beta'_1\beta'_2\beta'_3}{\beta_1\beta_2\beta_3}=t_1+t_2+t_3+n-3,
\\[2ex]
t'_1=\dfrac{t_1+t_2t_3+n-2}{t_1+t_2+t_3+n-3},
\\[2ex]
t'_2=\dfrac{t_2+t_1t_3+n-2}{t_1+t_2+t_3+n-3},
\\[2ex]
t'_3=\dfrac{t_3+t_1t_2+n-2}{t_1+t_2+t_3+n-3}.
\end{cases}
\end{gather*}
Substitute these expressions into the fourth equation of~\eqref{eq:q>3} and obtain the equation
\begin{gather} \label{fineq}
t_1t_2t_3+n-1=\frac{(t_1+t_2t_3+n-2)(t_2+t_1t_3+n-2)(t_3+t_1t_2+n-2)}{(t_1+t_2+t_3+n-3)^2}.
\end{gather}

By the straightforward computation, we~retrieve that the identity~\eqref{condeq} is the consequence from the equation~\eqref{fineq}.
\end{proof}

\begin{Corollary}Partition function of the $n$-Potts model $(n\geq 3)$ is invariant under the star-triangle transformation if and only if the system~\eqref{eq:q>3} with the equation~\eqref{condeq} hold.
\end{Corollary}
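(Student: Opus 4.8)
The plan is to reduce the Corollary to the invariance criterion \eqref{cond} already isolated above and to the preceding Proposition, so that the statement becomes an assembly of three equivalences rather than a fresh computation. First I would record that, exactly as for $n=2$, gluing $\Omega$ (respectively $\Omega'$) to the fixed exterior $G-\Omega$ by Lemma~\ref{lemma1} and using that the star-triangle move leaves $G-\Omega$ untouched shows that the partition function is invariant if and only if the boundary partition functions agree for every boundary assignment, that is, if and only if \eqref{cond} holds. Nothing in this step is special to $n$, so it transfers verbatim from the $n=2$ discussion.

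Next I would unpack \eqref{cond} into the system \eqref{eq:q>3}. The key remark is that, by Definition~\ref{d2}, each boundary partition function depends on $\textbf{A}=\{a_1,a_2,a_3\}$ only through the pattern of coincidences among $a_1,a_2,a_3$, since every factor $i_e$ records only whether its two endpoint states are equal or distinct. For $n\geq 3$ there are exactly four such patterns: all three distinct, exactly one coinciding pair (in three ways), and all three equal. Evaluating both sides for each pattern --- summing over the state of the central vertex for the star $\Omega$, with the respective multiplicities $n-3$, $n-2$ and $n-1$ for the configurations in which that state avoids the prescribed boundary values, and taking the single product of edge weights for the triangle $\Omega'$ --- reproduces precisely the five equations of \eqref{eq:q>3}; the ``all distinct'' pattern, available only because $n\geq 3$, is what yields the last, extra equation. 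Hence \eqref{cond} is equivalent to \eqref{eq:q>3}.

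Finally I would bring in \eqref{condeq} through the preceding Proposition. There one solves the first three and the last equation of \eqref{eq:q>3} for $t'_1,t'_2,t'_3$ and for $\beta'_1\beta'_2\beta'_3/(\beta_1\beta_2\beta_3)$, and substitution into the fourth equation collapses it to \eqref{fineq}, which is equivalent to \eqref{condeq}; conversely \eqref{condeq} furnishes a solution in the prime variables. Thus \eqref{eq:q>3} is satisfiable by some prime weights if and only if it holds together with \eqref{condeq}. Chaining the three equivalences --- invariance $\Leftrightarrow$ \eqref{cond} $\Leftrightarrow$ \eqref{eq:q>3} $\Leftrightarrow$ \eqref{eq:q>3} with \eqref{condeq} --- gives the assertion.

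I expect the only delicate point to be the exhaustiveness and the correct bookkeeping of the case analysis in the second step, in particular pinning down the multiplicities $n-3$, $n-2$, $n-1$ of the configurations in which the star's central vertex avoids the prescribed boundary values, together with the harmless nondegeneracy assumption $t_1+t_2+t_3+n-3\neq 0$ that is needed to divide by $\beta'_1\beta'_2\beta'_3$ when passing to \eqref{fineq}.
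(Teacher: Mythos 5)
Your proposal is correct and takes essentially the same route as the paper's (largely implicit) argument: the paper likewise derives \eqref{eq:q>3} from the boundary-partition-function criterion \eqref{cond} via Lemma~\ref{lemma1}, with the all-distinct boundary pattern (possible only for $n\geq 3$) supplying the fifth equation, and then invokes the preceding Proposition to reduce solvability of \eqref{eq:q>3} to \eqref{condeq}. Your explicit bookkeeping of the multiplicities $n-3$, $n-2$, $n-1$ for the central vertex of the star, and the chaining of the three equivalences, simply fills in details the paper leaves to the reader.
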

Below we present two nontrivial specialization of the partition function of $n$-Potts model which are agreed with the system~\eqref{eq:q>3},~\eqref{condeq}.

\begin{Example} Consider a graph $G$ and equip each $e\in E(G)$ with sign $+$ or $-$. Let us consider the $n$-Potts model $M_k(G, \alpha_e, \beta_e)$ with following parameters:
\begin{itemize}\itemsep=0pt
 \item for all $e \in E$ equipped with $+$ the parameters $\alpha_e$, $\beta_e$ equal $\alpha_e=A_{+}=-t^{-\frac{3}{4}}$, $\beta_e=B_{+}=t^{\frac{1}{4}}$,
 \item for all $e \in E $ equipped with $-$ the parameters $\alpha_e$, $\beta_e$ equal $\alpha_e=A_{-}=-t^{\frac{3}{4}}$, $\beta_e=B_{-}=t^{-\frac{1}{4}}$,
 \item and $n=t+\frac{1}{t}+2$ (we suppose that parameter $t$ is chosen such that $n \in \mathbb{N}$).
\end{itemize}
Let the graph $G$ has a triangle subgraph, the edges of which have signs $+$, $-$, $-$. The reduced weights of edges are $t_1=\frac{A_{+}}{B_{+}}=-\frac{1}{t}$, $t_2=\frac{A_{-}}{B_{-}}=-t$, $t_3=\frac{A_{-}}{B_{-}}=-t$. It is easy to see that these $t_i$ satisfy the equation~\eqref{condeq}.

We notice that the signed graph $G$ could be considered as the signed Tait graph for a dia\-gram~$D(K)$ of a knot $K$ (\cite{Wu1}, the chapter ``Knot invariants from edge-interaction models"). Moreover, the value of the Jones polynomial of the knot $K$ at the point~$n$ is closely related with the partition function of the $n$-Potts model $M_k(G, \alpha_e, \beta_e)$ (see~\cite[equation~(7.17)]{Wu1}). Thus, the identification of the third Reidemeister move of the diagram $D(K)$ with the star-triangle transformation of the signed graph $G$ is agreed with the star-triangle transformation defined by~the system~\eqref{eq:q>3},~\eqref{condeq} for the $n$-Potts model $M_k(G, \alpha_e, \beta_e)$.
\end{Example}

Our second example is about the models of bond percolation. Firstly, we~briefly give their definitions:
\begin{Definition}[bond percolation~\cite{Gri}]
 Consider a graph $G$. An edge $e \in E(G)$ is considered to be open with probability $p_e$ or closed with probability $1-p_e$. We suppose that all edges might be closed or open independently. One is interested in probabilistic properties of cluster formation (i.e. maximal connected sets of closed edges of the graph~$G$).
\end{Definition}

\begin{Example}The bond percolation models could be considered as a limit $n \to 1$ of the $n$-Potts models at the level of the boundary partition functions~\cite{Car}. This identification corresponds to the specialization of the system~\eqref{eq:q>3},~\eqref{condeq} by $n\to 1$.

Substitute $t_i=\frac{1}{p_i}$, $t'_i=\frac{1}{p'_i}$ and $n=1$ in~\eqref{eq:q>3} and~\eqref{condeq}, then
\begin{gather*}
\begin{cases}
(p_1+p_2p_3-p_1p_2p_3)\alpha_1\alpha_2\alpha_3=p'_2p'_3\alpha'_1\alpha'_2\alpha'_3,
\\
(p_2+p_1p_3-p_1p_2p_3)\alpha_1\alpha_2\alpha_3=p'_1p'_3\alpha'_1\alpha'_2\alpha'_3,
\\
(p_3+p_1p_2-p_1p_2p_3)\alpha_1\alpha_2\alpha_3=p'_1p'_2\alpha'_1\alpha'_2\alpha'_3,
\\
\alpha_1\alpha_2\alpha_3=\alpha'_1\alpha'_2\alpha'_3,
\\[1ex]
\dfrac{1}{p_1p_2}+\dfrac{1}{p_2p_3}+\dfrac{1}{p_1p_3}-1=\dfrac{1}{p_1p_2p_3}.
\end{cases}
\end{gather*}
After simplifications we obtain the condition for the star-triangle transformation of the bond percolation models (for instance, see~\cite{Gri})
\begin{gather*}
\begin{cases}
p_1+p_2p_3-p_1p_2p_3=p'_2p'_3,\\
p_2+p_1p_3-p_1p_2p_3=p'_1p'_3,\\
p_3+p_1p_2-p_1p_2p_3=p'_1p'_2,\\
p_1+p_2+p_3-1=p_1p_2p_3.
\end{cases}
\end{gather*}
\end{Example}

\section{Tetrahedron equation}\label{sec:TE}

The tetrahedron equation firstly was considered by A.~Zamolodchikov~\cite{Zam80} who has constructed its solution in $S$-form. We~consider the following form of the equation
\begin{gather}\label{eq:tetrahedron}
T_{123}T_{145}T_{246}T_{356} = T_{356}T_{246}T_{145}T_{123},
\end{gather}
where $T_{ijk}$ is an operator acting nontrivially in the tensor product of three vector spaces~$V_i$, $V_j$,~$V_k$, indexed by $i$, $j$ and $k$. Tetrahedron equation is the higher order analog of the Yang--Baxter equation. Both equations are examples of $n$-simplex equations~\cite{KST} and play an important role in hypercube combinatorics and higher Bruhat orders. For the complete introduction to the topic see for example~\cite{Serg2}.
In this section we present two proofs of the main theorem of the paper:
\begin{Theorem}\label{maintheorem}
The change of variables~\eqref{Tchange} defines the solution of the tetrahedron equation~\eqref{eq:tetrahedron}.
\end{Theorem}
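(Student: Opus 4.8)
The plan is to exhibit the $t$-part of the map $\widetilde F$ from \eqref{Tchange} as the map determined by a \emph{local Yang--Baxter equation} (LYBE) for a suitable family of orthogonal matrices, and then to deduce \eqref{eq:tetrahedron} by the standard formal reassociation argument of Korepanov--Kashaev--Sergeev, for which the LYBE is the only input. Concretely, I would look for matrices $L_{ij}(t)$, each acting nontrivially only in the $(i,j)$ block of a small auxiliary space and depending on a single reduced weight $t$, such that
\begin{gather*}
L_{12}(t_1)\,L_{13}(t_2)\,L_{23}(t_3)=L_{23}(t_3')\,L_{13}(t_2')\,L_{12}(t_1'),
\end{gather*}
where $(t_1',t_2',t_3')$ is precisely the image prescribed by \eqref{Tchange}. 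The appearance of square roots of cross-ratio-type combinations in \eqref{Tchange} strongly suggests an \emph{orthogonal} ansatz (a rotation whose entries are rational in the $t_i$), matching the reference in the introduction to the orthogonal solution of the LYBE.

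The first concrete step is to pin down this orthogonal $L$-operator and to verify the displayed triple identity by direct matrix computation, checking that the induced parameter map is exactly \eqref{Tchange}, including the correct branch of each square root (this is where the positive-branch convention fixed earlier enters). This verification is the computational core of the argument and, together with the second step below, is where the specific algebraic form of \eqref{Tchange} is actually used.

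To pass from the LYBE to \eqref{eq:tetrahedron}, identify the four auxiliary directions with the vertices of a tetrahedron, so that the six tensor factors $V_1,\dots,V_6$ become its six edges (the pairs of directions) and the four operators $T_{123},T_{145},T_{246},T_{356}$ become the star--triangle maps on the four faces (the triples of directions). Both sides of \eqref{eq:tetrahedron} then describe two different sequences of elementary LYBE moves carrying the ordered product of the six $L_{ij}$ over the four directions into its fully reversed product; each move reorders one face and simultaneously transforms the three corresponding weights by $\widetilde F$. Since the underlying product in $\mathrm{End}$ of the auxiliary space is literally unchanged under these reassociations, and since the factorization of that product into the target order is unique on the relevant (positive) domain, the two induced sequences of weight transformations must coincide, which is exactly \eqref{eq:tetrahedron} for $T=\widetilde F$. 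The scalar $\beta$-prefactors in \eqref{Tchange} contribute only an overall multiplicative constant that equals the product of all edge contributions on either side, so they match automatically.

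I expect the main obstacle to be the uniqueness/injectivity needed in the last step: equality of the two matrix products yields equality of the weight maps only if the factored form determines its parameters, so I would have to show that each $t_i$ can be recovered from the entries of the product $L_{12}(t_1)L_{13}(t_2)L_{23}(t_3)$ on the chosen domain. A secondary difficulty is the branch bookkeeping for the square roots throughout the composition. An alternative route --- presumably the paper's ``second proof'' --- avoids the LYBE altogether and works with the boundary partition function: invariance \eqref{cond} is an identity of boundary partition-function tensors, and one assembles the graph on which the two sides of \eqref{eq:tetrahedron} act, applying \eqref{cond} repeatedly; there the obstacle migrates to checking that the two orderings of star--triangle moves realize the same overall graph transformation via Lemma~\ref{lemma1}.
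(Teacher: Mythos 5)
Your proposal follows essentially the same route as the paper's first proof: an orthogonal solution of the local Yang--Baxter equation realizing the map~\eqref{Tchange} (Theorem~\ref{th:LYB}, with hyperbolic rotation matrices), then reassociation of the ordered product of six $R$-matrices along the standard graph in two ways, and finally recovery of the six parameters from the entries of that product --- precisely the uniqueness-of-factorization step you correctly flag as the main obstacle (the paper resolves it element-wise, and your sketch of the ``second proof'' via boundary partition functions also matches the paper's). The one adjustment your verification step would force is that the LYBE cannot hold with the parameter map being literally~\eqref{Tchange} in every slot: the paper's identity~\eqref{LYB} interleaves the involution $S(t)=\frac{t-1}{t+1}$ of~\eqref{Schange}, so the factors $T_{ijk}$ entering~\eqref{eq:tetrahedron} are $S$-conjugates of $\widetilde F$ rather than $\widetilde F$ itself.
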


These two proofs have a lot of common points and ideas, but have the crucial differences in the last stages. It is interesting to compare proofs for the purpose of combining arguments of boundary partition functions and the technique of correlation functions in the Ising--Potts models.

At first in the next subsection we prove that the change of variables~\eqref{Tchange} corresponds to the variables transform in the trigonometric solution of a local Yang--Baxter equation.

\subsection{Local Yang--Baxter equation}

Let us recall that the following change of variables $(t_1,t_2,t_3)\mapsto(t'_1,t'_2,t'_3)$ provides an invariance of the Ising model~\eqref{statsumma} under the star-triangle transformation~\eqref{Tchange}:
\begin{gather}
\begin{cases}
t'_1 =& \sqrt{\dfrac{(t_1+t_2t_3)(t_1t_2t_3+1)}{(t_2+t_1t_3)(t_3+t_1t_2)}},\\[2ex]
t'_2 =& \sqrt{\dfrac{(t_2+t_1t_3)(t_1t_2t_3+1)}{(t_1+t_2t_3)(t_3+t_1t_2)}},\\[2ex]
t'_3 =& \sqrt{\dfrac{(t_3+t_1t_2)(t_1t_2t_3+1)}{(t_1+t_2t_3)(t_2+t_1t_3)}}
\end{cases}\nonumber
\\[2ex]
\qquad \qquad\Updownarrow\nonumber
\\[2ex]
\label{T1change}
\begin{cases}
t'_1t'_2 =& \dfrac{t_1t_2t_3+1}{t_3+t_1t_2},\\[2ex]
t'_2t'_3 =& \dfrac{t_1t_2t_3+1}{t_1+t_2t_3},\\[2ex]
t'_1t'_3 =& \dfrac{t_1t_2t_3+1}{t_2+t_1t_3}.
\end{cases}
\end{gather}
Following~\cite{Kor95}, we~construct orthogonal hyperbolic $3\times 3$ matrices $R_{ij}$ which solve the local Yang--Baxter equation
\begin{gather}
R_{12}(t_3)R_{13}(S(t_2))R_{23}(t_1) = R_{23}(S(t'_1))R_{13}(t'_2)R_{12}(S(t'_3)), \label{LYB}
\end{gather}
where $S(t)$ is the following involution
\begin{gather}\label{Schange}
S(t) = \frac{t-1}{t+1}.
\end{gather}
On the left hand side of~\eqref{LYB} we have
\begin{gather}\label{eq:R1}
R_{12}(t_3) =
\begin{pmatrix}
\mathrm{i}\sinh(\log(t_3))&\cosh(\log(t_3))&0\\
\cosh(\log(t_3))&-\mathrm{i}\sinh(\log(t_3))&0\\
0&0&1
\end{pmatrix}\!,
\\[2ex]
\label{eq:R2}
R_{13}(S(t_2)) =
\begin{pmatrix}
\mathrm{i}\sinh(\log(S(t_2)))&0&\cosh(\log(S(t_2)))\\
0&1&0\\
\cosh(\log(S(t_2)))&0&-\mathrm{i}\sinh(\log(S(t_2)))
\end{pmatrix}\!,
\\[2ex]
\label{eq:R3}
R_{23}(t_1) =
\begin{pmatrix}
1&0&0\\
0&\mathrm{i}\sinh(\log(t_1))&\cosh(\log(t_1))\\
0&\cosh(\log(t_1))&-\mathrm{i}\sinh(\log(t_1))
\end{pmatrix}\!.
\end{gather}

\begin{Theorem}\label{th:LYB}
Matrices~\eqref{eq:R1},~\eqref{eq:R2},~\eqref{eq:R3} together with the rules~\eqref{Tchange},~\eqref{Schange} give a solution of~\eqref{LYB}.
\end{Theorem}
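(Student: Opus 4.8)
The plan is to verify the matrix identity \eqref{LYB} by direct computation and to show that it collapses to the star-triangle relations \eqref{T1change} (equivalently \eqref{Tchange}). First I would rewrite every matrix entry as a rational function of the $t_i$, using $\cosh(\log t)=\frac{t^2+1}{2t}$ and $\sinh(\log t)=\frac{t^2-1}{2t}$. The point of dressing some arguments with the involution $S$ is precisely that it clears denominators of square-root type: a one-line computation gives
\begin{gather*}
\cosh(\log S(t))=\frac{t^2+1}{t^2-1},\qquad
\mathrm{i}\sinh(\log S(t))=\frac{-2\mathrm{i}t}{t^2-1},
\end{gather*}
so that after the substitution \eqref{Schange} all entries of all six matrices are rational in the $t_i$ and $t'_i$, which makes the comparison a finite algebraic check.

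Next I would multiply out the two sides as explicit $3\times3$ matrices: the left-hand side $L:=R_{12}(t_3)R_{13}(S(t_2))R_{23}(t_1)$ and the right-hand side $R:=R_{23}(S(t'_1))R_{13}(t'_2)R_{12}(S(t'_3))$. Each $R_{ij}$ is supported on the $(i,j)$-block and acts as the identity on the remaining coordinate, so each triple product is a genuine $3\times3$ matrix whose entries are sums of products of three of the scalar functions above. I would then impose $L=R$ entrywise. A priori this is nine scalar equations, but the structure reduces them sharply: each $2\times2$ block is symmetric with determinant $-1$, and in fact $M^2=I$, so every $R_{ij}$ is a complex-orthogonal involution and both $L$ and $R$ lie in the same orthogonal group. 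Hence their diagonal and off-diagonal entries are not independent, and matching the few genuinely independent entries should yield exactly the three relations expressing $t'_1t'_2$, $t'_2t'_3$, $t'_1t'_3$ in \eqref{T1change}. Since \eqref{T1change} is equivalent to \eqref{Tchange}, substituting \eqref{Tchange} into $R$ makes $L=R$ identically, which is the assertion of the theorem.

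The main obstacle is organizational rather than conceptual. One must keep careful track of which arguments carry the involution $S$ (the left side uses $S(t_2)$, the right side $S(t'_1)$ and $S(t'_3)$, while $t_3$, $t_1$, and $t'_2$ appear bare), since a mismatch there silently breaks the identity. One must also check that the over-determined entrywise system is \emph{consistent}: the three products in \eqref{T1change} must be mutually compatible, i.e.\ $(t'_1t'_2)(t'_2t'_3)(t'_1t'_3)$ must be a perfect square so that a common solution $(t'_1,t'_2,t'_3)$ exists, and this solution must be the positive branch fixed after \eqref{Tchange}. Once the rational-function substitution is in place, the remaining verification reduces to a polynomial identity that I would settle by clearing denominators and expanding.
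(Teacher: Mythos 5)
Your overall strategy---rationalize the hyperbolic entries via $\cosh(\log t)=\frac{t^2+1}{2t}$, $\sinh(\log t)=\frac{t^2-1}{2t}$, expand both triple products, and compare entrywise---is the same brute-force verification the paper performs: the paper computes the left-hand product explicitly (it is the matrix~\eqref{eq:LYB}) and then checks that the right-hand side, after the substitutions, simplifies to the same matrix. Your preliminary formulas, including $\cosh(\log S(t))=\frac{t^2+1}{t^2-1}$ and $\mathrm{i}\sinh(\log S(t))=\frac{-2\mathrm{i}t}{t^2-1}$, and the observation that each $R_{ij}$ is a symmetric complex-orthogonal involution, are all correct.

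The gap is in the reduction you build the proof around. You claim that matching entries of $L$ and $R$ ``should yield exactly the three relations'' of~\eqref{T1change} for the products $t'_1t'_2$, $t'_2t'_3$, $t'_1t'_3$. That cannot be right, because the entries of the right-hand side are \emph{odd} in the individual dressed parameters, not functions of pairwise products: since the first row of $R_{23}(S(t'_1))$ is $(1,0,0)$, the $(1,3)$ entry of the right-hand side is exactly $\cosh(\log t'_2)=\frac{t_2'^{\,2}+1}{2t'_2}$, which flips sign under $t'_2\mapsto -t'_2$ and hence is not determined by the products (they only fix $t_2'^{\,2}$). Matching it against the \emph{rational} $(1,3)$ entry of~\eqref{eq:LYB} is a condition on $t'_2$ itself---on the square root in~\eqref{Tchange}, its branch, and the exact placement of $S$ in~\eqref{LYB}---and the analogous statements hold for the $\sinh$-entries carrying $S(t'_1)$ and $S(t'_3)$. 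So the entrywise system pins down the $\cosh$ and $\sinh$ of each argument separately, and whether the specific values produced by~\eqref{Tchange} (with its fixed positive branch) reproduce \emph{all} entries is precisely the content of the theorem; it cannot be outsourced to the equivalence of~\eqref{T1change} and~\eqref{Tchange}. Your proposed consistency check---that $(t'_1t'_2)(t'_2t'_3)(t'_1t'_3)$ be a perfect square---misses this: that product equals $(t'_1t'_2t'_3)^2$ and is automatically a square, while its square root may still be irrational in the $t_i$, which is exactly the danger in the odd-degree entries. Nor does orthogonality of $L$ and $R$ let you verify only ``a few'' entries: two complex-orthogonal matrices can agree in several entries and still differ by discrete sign ambiguities, and those sign ambiguities are the whole issue here. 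In short, the proof has to be what the paper does---the full computation of all entries with careful bookkeeping of branches and of which arguments carry $S$---and the shortcut at the center of your plan fails at its key step.
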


\begin{proof}It can be proved by a straightforward computation. For example let us write down the result of the product on the left hand side
\begin{gather}\label{eq:LYB}
\begin{pmatrix}
\dfrac{t_2\big(t_3^2-1\big)}{t_3(t_2^2-1)}&\dfrac{{\rm i}\big(t_1^2t_2^2t_3^2-t_1^2-t_2^2+t_3^2\big)}{2t_1t_3\big(t_2^2-1\big)} &\dfrac{t_1^2t_2^2t_3^2-t_1^2+t_2^2-t_3^2}{2t_1t_3\big(t_2^2-1\big)}
\\[3ex]
-\dfrac{{\rm i}t_2\big(t_3^2+1\big)}{t_3(t_2^2-1)}&\dfrac{t_1^2t_2^2t_3^2+t_1^2+t_2^2+t_3^2}{2t_1t_3\big(t_2^2-1\big)}
&-\dfrac{{\rm i}\big(t_1^2t_2^2t_3^2+t_1^2-t_2^2-t_3^2\big)}{2t_1t_3\big(t_2^2-1\big)}
\\[3ex]
\dfrac{t_2^2+1}{t_2^2-1}&\dfrac{{\rm i}t_2\big(t_1^2+1\big)}{t_1\big(t_2^2-1\big)}&\dfrac{t_2\big(t_1^2-1\big)}{t_1\big(t_2^2-1\big)}
\end{pmatrix}\!.
\end{gather}
At the first glance the product on the right hand side looks much more cumbersome, but occa\-sionally all terms are simplified and the matrix on the right hand side coincides with~\eqref{eq:LYB}.
\end{proof}

\subsection{Tetrahedron equation, first proof}\label{seq:TE-1}
Let us encode the tetrahedron equation by the Figure~\ref{fig1}.

\begin{figure}[h] \centering
\includegraphics[scale=1]{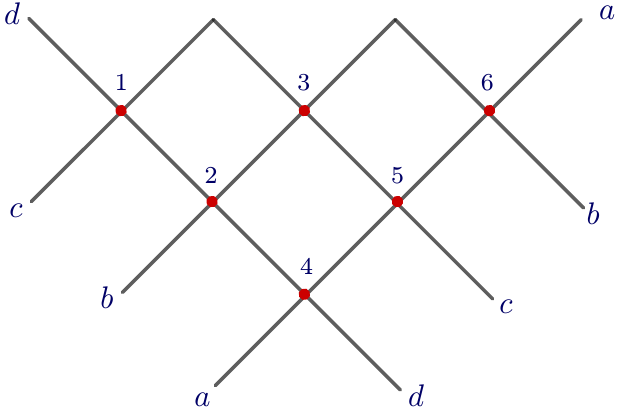}
\caption{Encoding the tetrahedron equation by the standard graph.}\label{fig1}
\end{figure}

The standard graph encodes $R$-matrices in the following way: in each inner vertex numbered by $k$, which is the intersection of strands $i$ and $j$ we put the matrix $R_{ij}(t_k)$ which is the $2\times 2$ matrix in the $4$-dimensional space with basis vectors indexed by $a$, $b$, $c$, $d$. For instance,
\begin{gather*}
R_{ac}(t_5) = \begin{pmatrix}
\mathrm{i}\sinh(\log(t_5))&0&\cosh(\log(t_5))&0\\
0&1&0&0\\
\cosh(\log(t_5))&0&-\mathrm{i}\sinh(\log(t_5))&0\\
0&0&0&1
\end{pmatrix}\!.
\end{gather*}
Let us orient each strand from the left to the right and multiply $R$-matrices in order of the orientation, for instance for the Figure~\ref{fig1} we have the following product of $R$-matrices
\begin{gather}\label{eq:6R}
R_{cd}(t_1)R_{bd}(S(t_2))R_{bc}(t_3)R_{ad}(t_4)R_{ac}(S(t_5))R_{ab}(t_6).
\end{gather}
We note that the orientation defines the product~\eqref{eq:6R} uniquely.

Then let us apply four local Yang--Baxter equations consequently to the inner triangles with vertices numbered $(1,2,3)$, $(1,4,5)$, $(2,4,6)$ and $(3,5,6)$ as on the Figure~\ref{fig2}. As a result, we~have one and the same standard graph as on the Figure~\ref{fig1} rotated by $\pi$.
\begin{figure}[h]\centering
\includegraphics[scale=0.98]{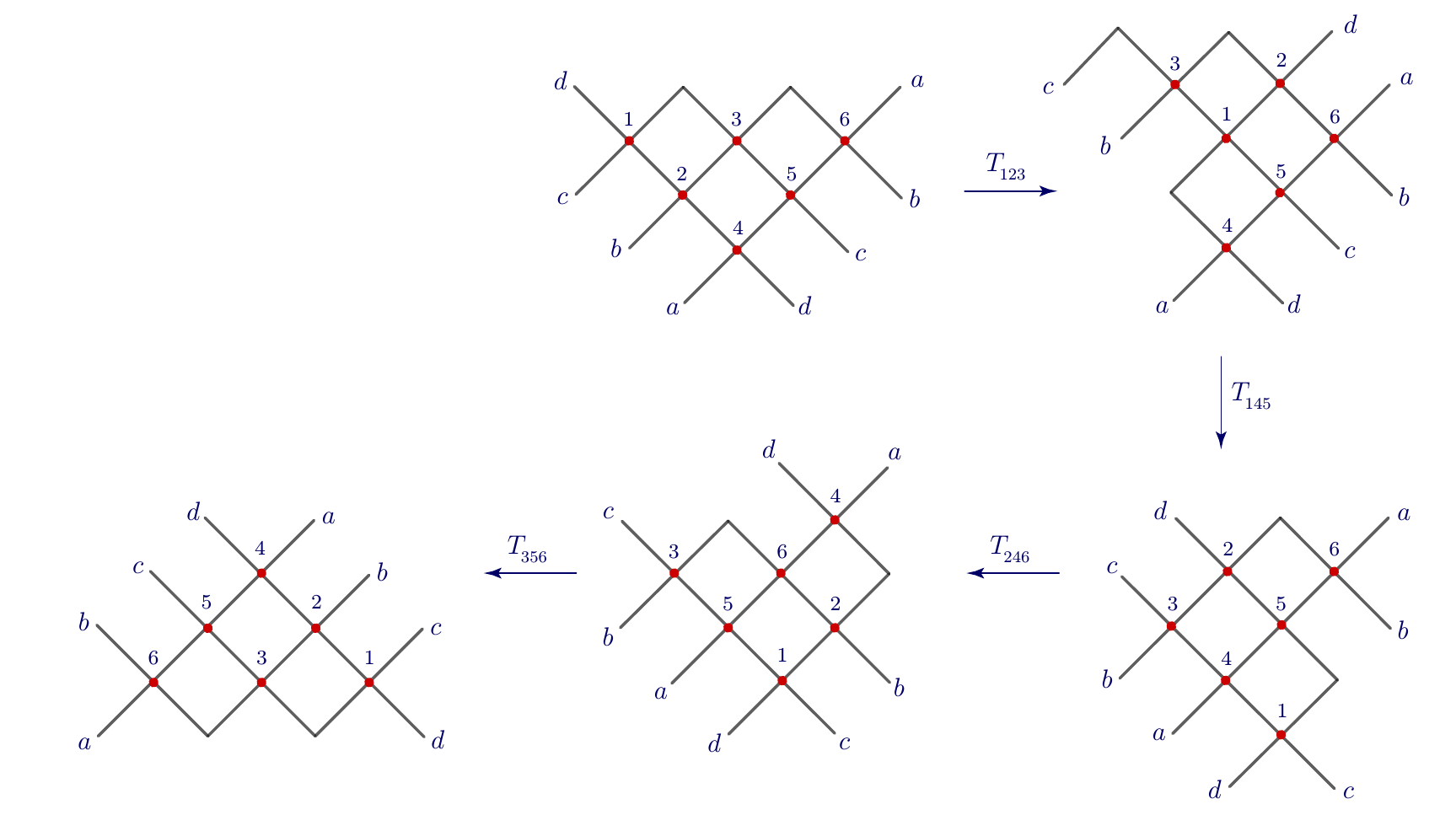}
\caption{Local Yang--Baxter equations applied to the standard graph.}\label{fig2}
\end{figure}

At the same time we could apply local Yang--Baxter equations in the opposite direction: firstly to the triangle $(3,5,6)$, then $(2,4,6)$, $(1,4,5)$ and $(1,2,3)$. Eventually in this case we will have again the same standard graph.

As the reader may have already guessed, every local Yang--Baxter equation applied to the triangle $A$, $B$, $C$ defines the factor $T_{ABC}$ in the tetrahedron equation~\eqref{eq:tetrahedron}. For example we obtain
\begin{gather*}
T_{1,2,3}\colon\ (t_1,S(t_2),t_3,t_4,t_5,t_6)\mapsto(S(t'_1),t'_2,S(t'_3),t_4,t_5,t_6).
\end{gather*}
By Theorem~\ref{th:LYB} the product~\eqref{eq:6R} preserves by each local Yang--Baxter equation encoded on~the Figure~\ref{fig2}. As a result of two sequences of four Local Yang--Baxter equations we obtain an~equality of two products of six $4\times 4$ $R$-matrices
\begin{gather}
R_{cd}(u_1) R_{bd}(u_2) R_{bc}(u_3)R_{ad}(u_4)R_{ac}(u_5)R_{ab}(u_6) \nn
\\ \qquad
{}=R_{cd}(v_1) R_{bd}(v_2) R_{bc}(v_3)R_{ad}(v_4)R_{ac}(v_5)R_{ab}(v_6),
\label{eq:26R}
\end{gather}
where the parameters $u_i$, $v_j$, $i,j=1,\ldots,6$ depend on the initial variables $t_i$, on the mapping~\eqref{Tchange} and on the involution~\eqref{Schange}.

Let us consider this equation element-wise, and note that we could uniquely express para\-me\-ters in the right hand side in terms of the parameters on the left hand side
\begin{gather*}
U_{1,4}=b(t_4),\qquad
U_{2,4}= -a(t_4) b(t_2),\qquad U_{1,3} = a(t_4) b(t_5) ,
\\
U_{1,2} = a(t_4) a(t_5) b(t_6), \quad U_{3,4} = a(t_2) a(t_4) b(t_1),
\quad
U_{2,3} = b(t_2) b(t_4) b(t_5)-a(t_2) a(t_5) b(t_3).
\end{gather*}
Here $U$ is a matrix on the left hand side and $a,b$ are some invertible functions, come from~\eqref{eq:R1}, \eqref{eq:R2}, \eqref{eq:R3}. So we could uniquely determine $t_4$ from the first equation, $t_2$ and $t_5$ from the second and the third, then $t_1$ and $t_6$, and finally $t_3$ from the element $U_{2,3}$.

Let us note that this algebraic proof could be formulated in terms of the paths on the standard graph (Figure~\ref{fig1}) with orientation.
So the equation~\eqref{eq:26R} provides coincidence of the parameters in the vertices given by the two sides of the tetrahedron equation. This finishes the proof.

\subsection{Tetrahedron equation, second proof}\label{sec:tetsecond}
\subsubsection{Involution lemma}
Let us consider the map $F(t_1, t_2, t_3)=(t'_1, t'_2, t'_3)$, where prime variables defined by~\eqref{Tchange}. First of all, we~formulate one technical lemma:

\begin{Lemma} \label{LemmaTech}
	The following identity holds for all $t_1$, $t_2$, $t_3$:
	\begin{gather*}
	S \times S \times S \circ F \circ S \times S \times S= F^{-1},
	\end{gather*}
	where $S(t)=\frac{t-1}{t+1}$.
\end{Lemma}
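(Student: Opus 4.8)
The plan is to verify the identity $S^{\otimes 3} \circ F \circ S^{\otimes 3} = F^{-1}$ by establishing two facts: first that $F$ is an involution on suitably transformed variables, and second that conjugating by $S$ intertwines $F$ with its inverse. I would begin by recording the explicit form of $F$ through its products, since the map~\eqref{Tchange} is most cleanly encoded by the relations~\eqref{T1change}:
\begin{gather*}
t'_1t'_2 = \frac{t_1t_2t_3+1}{t_3+t_1t_2}, \qquad
t'_2t'_3 = \frac{t_1t_2t_3+1}{t_1+t_2t_3}, \qquad
t'_1t'_3 = \frac{t_1t_2t_3+1}{t_2+t_1t_3}.
\end{gather*}
These three pairwise products determine the squares $(t'_i)^2$ uniquely, hence $F$ itself once the positive branch is fixed. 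My first concrete step is therefore to express the building blocks $t_1t_2t_3+1$, $t_3+t_1t_2$, etc., after the substitution $t_i \mapsto S(t_i) = \frac{t_i-1}{t_i+1}$, clearing the common denominator $\prod_i (t_i+1)$.

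The key computation is to show that under $t_i \mapsto S(t_i)$ the symmetric combinations transform in a controlled way. I expect that after substituting $S(t_i)$ and simplifying, each elementary expression like $S(t_1)S(t_2)S(t_3)+1$ factors as a product of the $(t_i+1)$ denominators times a new elementary symmetric expression in the $t_i$; for instance one anticipates
\begin{gather*}
S(t_1)S(t_2)S(t_3)+1 = \frac{2(t_1t_2t_3 + t_1 + t_2 + t_3)}{(t_1+1)(t_2+1)(t_3+1)},
\end{gather*}
and similarly $S(t_3)+S(t_1)S(t_2)$ and its cyclic variants produce related symmetric polynomials. The point is that these substitutions rationally permute the roles of the numerators and denominators appearing in~\eqref{T1change}, so that the ratios defining $(t'_i)^2$ get inverted. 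Concretely, I would compute the ratio
\begin{gather*}
\frac{S(t_1)S(t_2)S(t_3)+1}{S(t_3)+S(t_1)S(t_2)}
\end{gather*}
and check that it equals the reciprocal of the value obtained from $F$ applied to the plain $t_i$, up to the effect of the outer $S$. This establishes that $S^{\otimes 3}\circ F\circ S^{\otimes 3}$ sends the products $t'_it'_j$ to their reciprocals in the appropriate sense, which is exactly the content of passing to $F^{-1}$.

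The main obstacle will be bookkeeping rather than conceptual: one must track the three branch-sign choices for the square roots consistently, since $F$ is defined only up to signs by the relations~\eqref{T1change}, and confirm that the positive-branch convention fixed earlier in the excerpt is preserved under conjugation by $S$ on positive real arguments. The cleanest route to sidestep sign ambiguity is to work entirely with the pairwise products $t'_it'_j$ (which~\eqref{T1change} specifies without roots), verify the identity at that level, and only then invoke positivity to recover the individual $t'_i$. To finish, I would confirm that $F$ restricted to these product variables is its own functional inverse after the $S$-conjugation, i.e.\ that applying $S^{\otimes 3}\circ F\circ S^{\otimes 3}$ twice returns the identity and that a single application undoes $F$; a direct substitution of the simplified expressions into the defining relations of $F^{-1}$ then closes the argument.
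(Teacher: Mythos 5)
Your proposal is correct and follows essentially the same route as the paper's own proof: the paper likewise reduces the identity to rational relations among pairwise products (introducing $t_{ij}=x_ix_j$ and $a_i=x_i^2$ precisely to sidestep the square-root branch choices), computes how the symmetric combinations transform under $t_i\mapsto S(t_i)$ — your sample identity $S(t_1)S(t_2)S(t_3)+1=\frac{2(t_1t_2t_3+t_1+t_2+t_3)}{(t_1+1)(t_2+1)(t_3+1)}$ is exactly the kind of formula that appears there — and finishes by direct verification of the resulting rational identity, which is the same equivalent form $t_1t_2=\frac{t_1't_2't_3'+1}{t_3'+t_1't_2'}$ you propose to check. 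The only difference is presentational: the paper names the intermediate product variables explicitly, while you phrase the computation as inversion of the defining ratios.
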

We present the proof in Appendix~\ref{Ap}.

\subsubsection{Towards the tetrahedron equation}

Let us consider any graph $G$ with a subgraph $\Gamma_1$ which coincides with the leftmost graph on the Figure~\ref{fig5}. We~can transform the graph $G$ to the graph $G'$ with the subgraph $\Gamma_2$ which coincides with the rightmost graph on the Figure~\ref{fig5}. We~could make this mutation by two different chains of star-triangle transformations: $F_{356}^{-1}F_{246}F_{145}^{-1}F_{123}$ and $F_{123}^{-1}F_{145}F_{246}^{-1}F_{356}$. Both are figured out on the Figure~\ref{fig2}. This observation turns us to the following hypothesis
\begin{gather}\label{eq:Zam_Inv}
F_{356}^{-1}F_{246}F_{145}^{-1}F_{123}=F_{123}^{-1}F_{145}F_{246}^{-1}F_{356}.
\end{gather}
\begin{figure}[h]\centering
\includegraphics[scale=1]{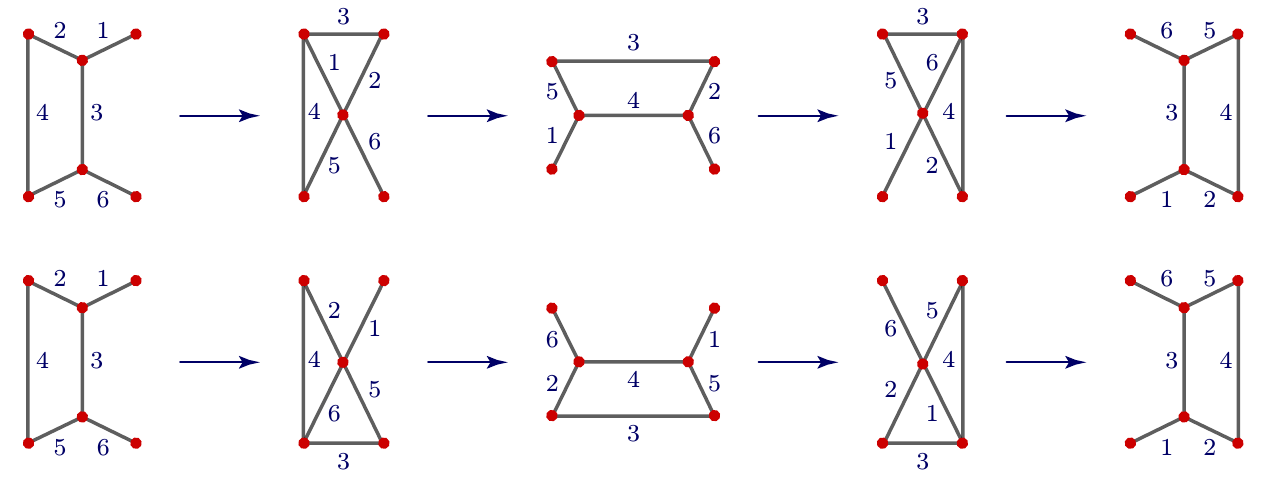}
\caption{The graphical representation of the left and right parts of~\eqref{eq:Phi}.}\label{fig5}
\end{figure}

{\samepage
This equality is equivalent to the Zamolodchikov equation
\begin{gather}
\label{eq:Phi}
\Phi_{356}\Phi_{246}\Phi_{145}\Phi_{123}=\Phi_{123}\Phi_{145}\Phi_{246}\Phi_{356},
\end{gather}
where $\Phi_{ijk}=S_i S_kF_{ijk} S_j$.
Indeed, using Lemma~\ref{LemmaTech} and the simple observation that $S_lF_{ijk}=F_{ijk}S_l$, $l \neq \{i,j,k\}$ we can write down}
\begin{gather*}
F_{356}^{-1}F_{246}F_{145}^{-1}F_{123}=S_3S_5S_6F_{356}S_3S_5S_6F_{246}S_1S_4S_5F_{145}S_1S_4S_5F_{123}
\\ \hphantom{F_{356}^{-1}F_{246}F_{145}^{-1}F_{123}}
{}=S_2S_5(S_3S_6F_{356}S_5)(S_2S_6F_{246}S_4)(S_1S_5F_{145}S_4)(S_1S_3F_{123}S_2)S_2S_5,
\end{gather*}
and
\begin{gather*}
F_{123}^{-1}F_{145}F_{246}^{-1}F_{356}=S_1S_2S_3F_{123}S_1S_2S_3F_{145}S_2S_4S_6F_{246}S_2S_4S_6F_{356}
\\ \hphantom{F_{123}^{-1}F_{145}F_{246}^{-1}F_{356}}
{}=S_2S_5(S_1S_3F_{123}S_2)(S_1S_5F_{145}S_4)(S_2S_6F_{246 }S_4)(S_1S_3F_{356}S_2)S_2S_5.
\end{gather*}
Conjugating both sides of~\eqref{eq:Zam_Inv} by $S_2 S_5$ we obtain the Zamolodchikov tetrahedron equation.

\subsubsection{Solution for the tetrahedron equation}

\begin{Proposition}\label{mainprop}
	The functions
	\begin{itemize}\itemsep=0pt
	\item $\dfrac{\partial \ln(Z(G))}{\partial t_{e}}$,
	where $e$ is any edge belonging to $G - \Omega$, and
	\item $\dfrac{Z_{S(\textbf{A})}(G)}{Z(G)}$, where $Z_{S(\textbf{A})}(G)$ is the boundary partition function and $S$ is any vertex subset of $G - \Omega$ or $G - \Omega'$ $($see the Figure~$\ref{fig startr}),$
	\end{itemize}
are invariant under the star-triangle transformation.
Moreover, these functions do not depend on variables $\beta_{e}$.
\end{Proposition}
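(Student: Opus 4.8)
The plan is to reduce the statement to two facts already in hand: the Fortuin--Kasteleyn factorization of Proposition~\ref{d1}, which isolates the dependence on the $\beta_e$ as an overall prefactor, and the global invariance $Z(G)=Z(G')$, which the conditions~\eqref{cond} together with the map~\eqref{Tchange} were designed to guarantee. The $\beta$-independence of both functions follows at once from the first fact. By Proposition~\ref{d1}, and the identical computation for the boundary sum of Definition~\ref{d2}, one has $Z(G)=\big(\prod_{e}\beta_e\big)\widehat Z(G)$ and $Z_{S(\textbf{A})}(G)=\big(\prod_{e}\beta_e\big)\widehat Z_{S(\textbf{A})}(G)$, where the hatted factors are the sums over states of $\prod_e\big(1+(t_e-1)\delta(\sigma_e)\big)$ and involve the reduced weights $t_e$ only. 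Hence $\ln Z(G)=\sum_e\ln\beta_e+\ln\widehat Z(G)$, so $\partial\ln Z(G)/\partial t_e=\partial\ln\widehat Z(G)/\partial t_e$ carries no $\beta_e$, and in the ratio $Z_{S(\textbf{A})}(G)/Z(G)$ the prefactor cancels to leave $\widehat Z_{S(\textbf{A})}(G)/\widehat Z(G)$, again free of the $\beta_e$.

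For the first function I would exploit that, once the star weights are tied to the triangle weights by~\eqref{Tchange}, the equality $Z(G)=Z(G')$ is an identity in the truly free variables: the exterior weights $t_e$ with $e\in G-\Omega$ and the star data $t_1,t_2,t_3$. Since $e$ lies in the exterior, $t_e$ is common to $G$ and $G'$ and does not appear in~\eqref{Tchange}; differentiating $\ln Z(G)=\ln Z(G')$ in $t_e$ and using that the images $t'_i$ are independent of $t_e$ gives $\partial\ln Z(G)/\partial t_e=\partial\ln Z(G')/\partial t_e$, which is exactly the claimed invariance.

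For the second function I would decompose the boundary partition function over the set $\{v_1,v_2,v_3\}$ gluing $\Omega$ to $G-\Omega$. Splitting a state $\sigma$ with $\sigma|_S=\textbf{A}$ according to its values $\textbf{B}$ on $\{v_1,v_2,v_3\}$ and using the multiplicativity behind Lemma~\ref{lemma1} yields
\begin{gather*}
Z_{S(\textbf{A})}(G)=\sum_{\textbf{B}}Z_{\{v_1,v_2,v_3\}(\textbf{B})}(\Omega)\,Z_{(S\cup\{v_1,v_2,v_3\})(\textbf{A},\textbf{B})}(G-\Omega),
\end{gather*}
and the same formula for $G'$. As the move leaves the exterior untouched, $G-\Omega=G'-\Omega'$, so the exterior factors agree term by term, while condition~\eqref{cond} gives $Z_{\{v_1,v_2,v_3\}(\textbf{B})}(\Omega)=Z_{\{v_1,v_2,v_3\}(\textbf{B})}(\Omega')$ for every $\textbf{B}$; summing over $\textbf{B}$ gives $Z_{S(\textbf{A})}(G)=Z_{S(\textbf{A})}(G')$, and dividing by $Z(G)=Z(G')$ finishes it. The step that needs the most care, and which I expect to be the main obstacle, is precisely this decomposition: $S$ lies wholly in the exterior while the sum runs over the internal boundary $\{v_1,v_2,v_3\}$, so one is fixing two disjoint boundaries at once and must verify that the exterior factor is genuinely the same for $G$ and $G'$ before~\eqref{cond} can be applied.
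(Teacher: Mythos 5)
Your proposal is correct and takes essentially the same route as the paper: the paper's own proof also extracts the $\beta$-independence from the Fortuin--Kasteleyn factorization~\eqref{statsumma}, and gets both invariances from $Z(G)=Z(G')$ together with Definition~\ref{d2} and condition~\eqref{cond}. Your write-up merely makes explicit the differentiation step and the gluing decomposition over the internal boundary $\{v_1,v_2,v_3\}$ that the paper leaves implicit.
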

\begin{Remark}
The function
\begin{gather*}
\dfrac{Z_{S(\textbf{A})}(G)}{Z(G)}
\end{gather*}
can be interpreted as a probability of the fixed values $\textbf{A}$ of spins in $S$, related to the boundary partition function $Z_{S(\textbf{A})}(G)$.
\end{Remark}

\begin{proof}
The crucial point in the demonstration of the first part of the statement is the fact that the derivative $\frac{\partial \ln(Z(G))}{\partial t_{e_i}}$
does not depend on parameters $\beta$. Indeed, this follows from the explicit form of the partition function
\begin{gather*}
Z(G)=\prod_{e\in E}\beta_e\sum_{\sigma} \prod_{e\in E}(1+(t_e-1)\delta(\sigma_e)).
\end{gather*}
The proof of the second part of the statement is straightforward. It follows from the Definition~\ref{d2} and the condition~\eqref{cond}.
\end{proof}

We will prove the Zamolodchikov equation in its equivalent form
\begin{gather}
\label{eq:Tet}
F_{356}^{-1}F_{246}F_{145}^{-1}F_{123}=F_{123}^{-1}F_{145}F_{246}^{-1}F_{356}.
\end{gather}
Let us notice that due to the local nature of the star-triangle transformation and the convolution property of the boundary partition function we have a choice to take some suitable graph to prove equation~\eqref{eq:Tet}. So let us take the graph $\Gamma_1$ from the Figure~\ref{fig66} with the following choice of boundary set $S_0$:
\begin{gather*}
S_0:=\{v_1, v_2, v_3, v_4\}.
\end{gather*}

\begin{figure}[h!]
 \centering
\includegraphics[scale=1]{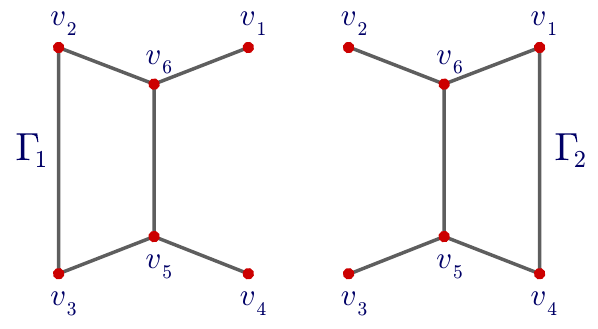}
\caption{The graphs $\Gamma_1$ and $\Gamma_2$.}
\label{fig66}
\end{figure}
We will prove that the values of the second-type invariant functions which are preserved by both sides of the equation~\eqref{eq:Tet} allows us to uniquely reconstruct weights of all edges.
Explain this idea in detail, let us consider the left hand side of the equation~\eqref{eq:Tet} and the map $F_{123}$, then for any $\textbf{A}=\{a_1, \dots,a_4\}$ the following identity holds
\begin{gather*}
\frac{Z_{S_0(\textbf{A})}(\Gamma_1)}{Z(\Gamma_1)}=\frac{Z_{S_1(\textbf{A}_1)}(\Gamma_1)}{Z(\Gamma_1)}+\frac{Z_{S_1(\textbf{A}_2)}(\Gamma_1)}{Z(\Gamma_1)},
\end{gather*}
here $S_1=\{v_1, v_2, v_3, v_4, v_5\}$ (see Figure~\ref{Fig10}), $\textbf{A}_1=\{a_1,\dots,a_4,0\}$, $\textbf{A}_2=\{a_1,\dots,a_4,1\}$.
The Proposition~\ref{mainprop} provides
\begin{gather*}
\frac{Z_{S_1(\textbf{A}_1)}(\Gamma_1)}{Z(\Gamma_1)}=\frac{Z_{S_1(\textbf{A}_1)}(\Gamma')}{Z(\Gamma')},\qquad \frac{Z_{S_1(\textbf{A}_2)}(\Gamma_1)}{Z(\Gamma_1)}=\frac{Z_{S_1(\textbf{A}_2)}(\Gamma')}{Z(\Gamma')},
\end{gather*}
where $\Gamma'$ is obtained from $\Gamma_1$ by the star-triangle transformation (see Figure~\ref{Fig10}). And therefore we deduce that
\begin{gather*}
\frac{Z_{S_0(\textbf{A})}(\Gamma_1)}{Z(\Gamma_1)}=\frac{Z_{S_0(\textbf{A})}(\Gamma')}{Z(\Gamma')}.
\end{gather*}
 Repeating these arguments for the remaining maps $F_{ijk}$ from the left hand side of~\eqref{eq:Tet} we obtain
 \begin{gather}
 \label{system1}
\frac{Z_{S_0(\textbf{A})}(\Gamma_1)}{Z(\Gamma_1)}=\frac{Z_{S_0(\textbf{A})}(\Gamma_2)}{Z(\Gamma_2)}.
 \end{gather}
 In the same fashion, if we consider the right hand side of the equation~\eqref{eq:Tet}, we~similarly obtain that
 \begin{gather*}
\frac{Z_{S_0(\textbf{A})}(\Gamma_1)}{Z(\Gamma_1)}=\frac{Z_{S_0(\textbf{A})}(\Gamma_2)}{Z(\Gamma_2)}.
 \end{gather*}
Hence, in order to prove the equation~\eqref{eq:Tet} it is sufficient to prove that we can reconstruct the parameters $t_i$, $i=1,\ldots,6$ from the values ${Z_{S_0(\textbf{A})}(\Gamma_2)}/{Z(\Gamma_2)}$ for different values of $\textbf{A}$ in a~unique way.
\begin{figure}[h]
\includegraphics[scale=1]{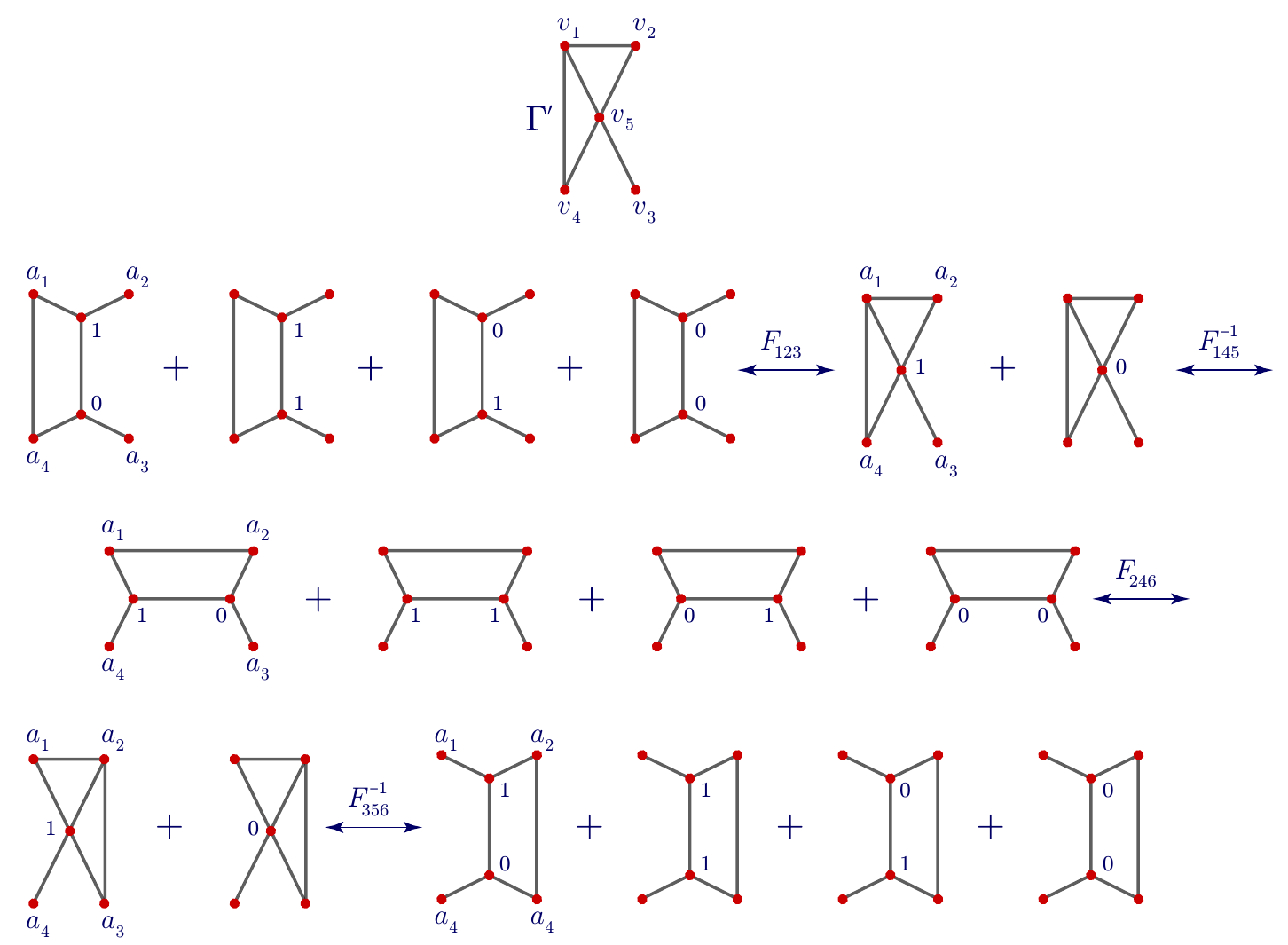}
\caption{The left hand side of~\eqref{eq:Tet}.}
\label{Fig10}
\end{figure}
\begin{figure}[h]
 \centering
\includegraphics[scale=1]{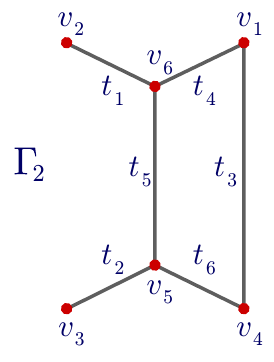}
\caption{The graph $\Gamma_2$.}
\end{figure}

{\sloppy
We understand the identity~\eqref{system1} as a system of $2^4$ linear equations with unknowns $Z_{S_0(\textbf{A})}(\Gamma_2)$ of the following type
\begin{gather*}
Z_{S(\textbf{A})}(\Gamma_2)\colon \quad \frac{Z_{S(\textbf{A})}(\Gamma_2)}{Z(\Gamma_2)}=\alpha(\textbf{A}), \qquad \forall \textbf{A}=(a_1,\ldots,a_4)
\end{gather*}
which is equivalent to
\begin{gather*}
\sum_{\textbf{A}'} Z_{S(\textbf{A}')}(\Gamma_2)= Z_{S(\textbf{A})}(\Gamma_2)/\alpha(\textbf{A}).
\end{gather*}
The rank of the system is equal to 15. Indeed, the rank is $\ge 15$ and we know that there is a~nontrivial solution coming from the boundary partition functions for the graph $\Gamma_2$.}

Hence any solution has the form
 \begin{gather} \label{eq:fin}
 Z_{S_0(\textbf{A})}(\Gamma_2)=C\cdot \alpha_0(a_1,a_2,a_3,a_4),
 \end{gather}
 where $C$ is some constant and $a_i$ are the states. Now we will prove that the parameters $t_1,\ldots,t_6$ are reconstructed uniquely from the equation~\eqref{eq:fin}.

 Let us introduce some auxiliary variables and rewrite the partition function in the following way: we have 16 states of boundary vertices $S_0=\{v_1,v_2,v_3,v_4\}$. Each expression $Z_{S_0(\textbf{A})}(\Gamma_2)$ is a sum of four terms corresponding to the states of internal vertices $v_5$ and $v_6$. We~consider in details the case $S_0=\{0,0,0,0\}$. Let us denote the weights of the states of the square $\{v_1,v_6,v_5,v_4\}$ by $v$, $z$, $y$ and $x$ (Figure~\ref{fig6}). Then, we~obtain the following equations
\begin{alignat}{3}
&v=t_3t_5B,\qquad&& v_1 = t_5t_6B, &\nn
\\
&z=t_3t_4t_5t_6B,\qquad && z_1=t_4t_5B,&\nn
\\
&y=t_6t_3B,\qquad&& y_1=B, &\nn
\\
& x=t_3t_4B,\qquad&& x_1=t_4t_6B,&
\label{eq:t3456}
\end{alignat}
where $B=\beta_3 \beta_4 \beta_5 \beta_6$
and
\begin{gather*}
 v+t_1t_2z+t_2y+t_1x=\frac{C}{B_1}\alpha_0(0, 0, 0, 0), B_1=\beta_1\beta_2.
\end{gather*}
\begin{figure}[h]
 \centering
\includegraphics[scale=1]{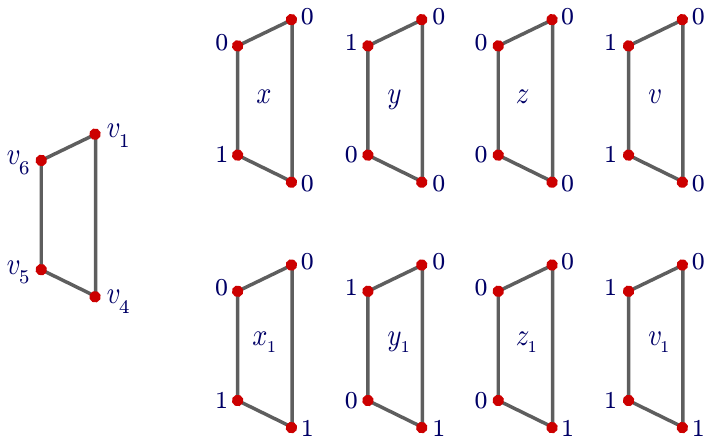}
\caption{The auxiliary variables.}
\label{fig6}
\end{figure}

\noindent
Similarly we obtain seven more equations (we omit eighth equation with $v_1=1$ due to the symmetry of the model with respect to the total involution of spins)
\begin{gather*}
 v_1+t_1t_2z_1+t_2y_1+t_1x_1=\frac{C}{B_1}b_1=\frac{C}{B_1}\alpha'(0, 0, 0, 1), \\
t_1v_1+t_2z_1+y_1t_1t_2+x_1=\frac{C}{B_1}b_2=\frac{C}{B_1}\alpha'(0, 1, 0, 1), \\
t_1t_2v_1+z_1+t_1y_1+t_2x_1=\frac{C}{B_1}b_3=\frac{C}{B_1}\alpha'(0, 1, 1, 1), \\
 t_2v_1+t_1z_1+y_1+t_1t_2x_1=\frac{C}{B_1}b_4=\frac{C}{B_1}\alpha'(0, 0, 1, 1), \\
 v+t_1t_2z+t_2y+t_1x=\frac{C}{B_1}a_1=\frac{C}{B_1}\alpha'(0, 0, 0, 0), \\
t_1v+t_2z+yt_1t_2+x=\frac{C}{B_1}a_2=\frac{C}{B_1}\alpha'(0, 1, 0, 0), \\
t_ 1t_2v+z+t_1y+t_2x=\frac{C}{B_1}a_3=\frac{C}{B_1}\alpha'(0, 1, 1, 0), \\
 t_2v+t_1z+y+t_1t_2x=\frac{C}{B_1}a_4=\frac{C}{B_1}\alpha'(0, 0, 1, 0).
\end{gather*}

 By straightforward calculation we retrieve
\begin{gather}
 y_1=\frac{C}{B_1}\frac{-t_2 b_1 + t_1 t_2 b_2 + b_4 - t_1 b_3}{-t_2^2 + 1 + t_1^2t_2^2 - t_1^2},\qquad
 z_1=\frac{C}{B_1}\frac{t_1 t_2 b_1 - t_2 b_2 + b_3 - t_1 b_4}{-t_2^2 + 1 + t_1^2t_2^2 - t_1^2},\nn
 \\
 x_1=\frac{C}{B_1}\frac{t_2 t_1 b_4 - t_1 b_1 - t_2 b_3 + b_2}{-t_2^2 + 1 + t_1^2t_2^2 - t_1^2},\qquad
 v_1=\frac{C}{B_1}\frac{b_1 + t_1 t_2 b_3 - t_2 b_4 - t_1 b_2}{-t_2^2 + 1 + t_1^2t_2^2 - t_1^2},
\label{eq:xyzv}
\\[2ex]
 y=\frac{C}{B_1}\frac{-t_2 a_1 + t_1 t_2 a_2 + a_4 - t_1 a_3}{-t_2^2 + 1 + t_1^2t_2^2 - t_1^2}, \qquad
 z=\frac{C}{B_1}\frac{t_1 t_2 a_1 - t_2 a_2 + a_3 - t_1 a_4}{-t_2^2 + 1 + t_1^2t_2^2 - t_1^2}, \nn \\
 x=\frac{C}{B_1}\frac{t_2 t_1 a_4 - t_1 a_1 - t_2 a_3 + a_2}{-t_2^2 + 1 + t_1^2t_2^2 - t_1^2}, \qquad
 v=\frac{C}{B_1}\frac{a_1 + t_1 t_2 a_3 - t_2 a_4 - t_1 a_2}{-t_2^2 + 1 + t_1^2t_2^2 - t_1^2}.\nn
\end{gather}
 Using the auxiliary variables it is easy to see that
\begin{gather*}
\frac{z_1}{x_1}=\frac{v}{y}=\frac{-t_2 a_4 + t_1 t_2 a_3 - t_1 a_2 + a_1}{ t_2 t_1 a_2 - t_2 a_1 - t_1 a_3 + a_4}=\frac{t_1 t_2 b_1 - t_2 b_2 + b_3 - t_1 b_4}{t_2 t_1 b_4 - t_1 b_1 - t_2 b_3 + b_2}.
\end{gather*}
In the same fashion we obtain
\begin{gather*}
 \frac{v_1}{x_1}=\frac{v}{x}.
\end{gather*}
Then we can straightforwardly deduce expressions for the variables $t_1$ and $t_2$ (equations~\eqref{t1}, \eqref{t2} in Appendix~\ref{Apb}), then obtain expressions for the auxiliary variables from equations~\eqref{eq:xyzv} and finally obtain variables $t_3$, $t_4$, $t_5$, $t_6$ from the equations~\eqref{eq:t3456}:
\begin{gather*}
 \begin{cases}
 t_3=\sqrt{\dfrac{vyx}{zy^2_1}},\\
 t_5=\dfrac{v}{t_3y_1}, \\
 t_6=\dfrac{y}{t_3y_1},\\
 t_4=\dfrac{x}{t_3y_1}.
 \end{cases}
 \end{gather*}

This completes the proof of the Zamolodchikov equation due to the fact that there is a~uni\-que way to choose positive weights for the edges of the model to provide the expected values of~boundary partitions function for the graph $\Gamma_2$.

\section{Star-triangle transformation, Biggs formula and conclusion}
\label{sec:BST}
The main results of the paper represent the functional relations on the space of multivariate Tutte polynomials. This problem is a step of the program of investigation of the framed graph structures and the related statistical models. We~examined in details the Biggs formula and applied it to the multivariate case. We~also provided a new proof of the theorem of Matiyasevich as a partial case of such formula. The second principal result is the reveal of the tetrahedral symmetry of the multivariate Tutte polynomial at the point $n=2$. Therefore, we~have a~con\-nec\-tion between the multivariate Tutte polynomial, functions on Lustig cluster manifolds~\cite{BFZ} and its electrical analogues~\cite{GT,LP}. We~would like to interpret this property as the critical point of~the model described by the multivariate Tutte polynomial, and the tetrahedral symmetry as a~longstanding analog of the conformal symmetry of the Ising model at the critical point~\cite{Conf}.

Both correspondences are related by the following observation. Let $G$ and $G'$ be two graphs related by the star-triangle transformation. Let us consider the case when the partition function is invariant with respect to this transformation ($n=2$ or $n>2$ and the system~\eqref{eq:q>3},~\eqref{condeq} holds)
\[
Z_n(G')=Z_n(G). 
\]

On the other hand the star-triangle transformation provides a groupoid symmetry on a wide class of objects, in our case on the space of Ising models. The Biggs formula allows us to extend this action to the points of valency $1$ and $2$. And we can obtain the 14-term relation (Theorem~\ref{T:14}) by comparing the right-hand sides of Biggs formulas for $G$ and $G'$.

Let us explain this idea in details, consider two pairs of $n$-Potts models: $M_1(G, i^1_e)$ and $M_1(G', i^1_{e})$, $M_2(G, i^2_{e})$ and $M_2(G', i^2_{e})$ (for simplicity we denote these models $M_1(G)$, $M_2(G)$, $M_1(G')$, $M_2(G')$ correspondingly). After multiplying both parts of the formula~\eqref{eq:aisBiggs} for $M_1(G)$ and $M_2(G)$ by $n^{v(G)}$ (by $n^{v(G')}$ for $M_1(G')$ and $M_2(G')$) we obtain
\begin{gather}
\label{twopart}
Z^1_n(G)=\prod\limits_{e\in G}q_e\sum_{A\subseteq G} \prod\limits_{e\in A} \frac {p_e}{q_e} Z^2_n(A)
=Z^1_n(G')=\prod_{e\in G'}q'_e\sum_{A'\subseteq G'}\prod_{e\in A'} \frac {p'_e}{q'_e} Z^2_n(A'),
\end{gather}
here in both cases we take the sum over the set of all spanning subgraphs.

 Let us rewrite the first part of the formula~\eqref{twopart} by separating two kinds of terms
 \begin{gather} \label{eq:m}
Z^1_n(G)=\prod_{e \in G}q_e\sum\limits_{A_1\subseteq G}\prod_{e \in A_1}\frac{p_e}{q_e} Z^2_n(A_1)+\prod_{e \in G}q_e\sum\limits_{A_2\subseteq G}\prod_{e \in A_2}\frac{p_e}{q_e}Z^2_n(A_2),
 \end{gather}
 where each subgraph $A_1$ contains the full triangle and each $A_2$ contains only a part of the triangle.

 After the star-triangle transformation of the $M_1(G)$ and $M_2(G)$ we obtain the following formula for the models $M_1(G')$ and $M_2(G')$:
 \begin{gather} \label{eq:m'}
Z^1_n(G')=\prod_{e \in G'}q'_e\sum\limits_{A_1'\subseteq G'}\prod_{e \in A'_1}\frac{p'_e}{q'_e} Z^2_n(A_1')+\prod_{e \in G'}q'_e\sum\limits_{A'_2\subseteq G'}\prod_{e \in A'_2}\frac{p'_e}{q'_e}Z^2_n(A_2'),
 \end{gather}
 where each subgraph $A'_1$ contains the full star and each $A'_2$ contains only a part of the star.
 Then, we~compare the terms of these formulas:
 \begin{itemize}\itemsep=0pt
 \item We notice that due to the star-triangle transformation $Z_n^i(G)=Z_n^i(G')$ and $Z_n^i(A_1)=Z_n^i(A_1')$ (here and below $A_1$ is different from $A_1'$ only by the star-triangle transformation).
 \item Also it is easy to see that $\prod_{e \in G}q_e\frac{1}{\prod_{e \in A_1}q_e}=\prod_{e \in G'}q'_{e}\frac{1}{\prod_{e \in A_1'}q'_{e}}$.
 \item If the model $M_1(G)$ is chosen such that $p_1p_2p_3=p_1'p_2'p_3'$, we~conclude that $\prod_{e \in A_1}p_e=\prod_{e \in A'_1}p'_{e}$.
 \end{itemize}
 Now, we~are ready to formulate the following theorem:

\begin{Theorem} \label{T:14}
 Consider two $n$-Potts models $M_2(G)$ and $M_2(G')$, which are different from each other by the star-triangle transformation. Then, the following formula holds
 \begin{gather}
 q_1q_2q_3\bigg(Z^2_n(G_0)+\frac{p_1}{q_1}Z^2_n(G_1)+\frac{p_2}{q_2}Z^2_n(G_2)+\frac{p_3}{q_3}Z^2_n(G_3)
 +\frac{p_1p_2}{q_1q_2}Z^2_n(G_{12})+\frac{p_1p_3}{q_1q_3}Z^2_n(G_{13})\nn
 \\ \hphantom{ q_1q_2q_3\bigg(}
{} + \frac{p_2p_3}{q_2q_3}Z^2_n(G_{23})\bigg)
 =q_1'q_2'q_3'\bigg(Z^2_n(G'_0)+\frac{p'_1}{q'_1}Z^2_n(G'_1)+\frac{p'_2}{q'_2}Z^2_n(G_2)
 +\frac{p'_3}{q'_3}Z^2_n(G'_3)\nn
 \\ \hphantom{ q_1q_2q_3\bigg(}
 +\frac{p'_1p'_2}{q'_1q'_2}Z^2_n(G'_{12}) +\frac{p'_1p'_3}{q'_1q'_3}Z^2_n(G'_{13})
 +\frac{p'_2p'_3}{q'_2q'_3}Z^2_n(G'_{23})\bigg),
 \label{eq:rel}
\end{gather}
where
\begin{gather*}
p_i=\frac{\alpha^1_i-\beta^1_i}{\alpha^2_i-\beta^2_i},\qquad q_i=\frac{\alpha^2_i\beta^1_i-\alpha^1_i\beta^2_i}{\alpha^2_i-\beta^2_i}, \qquad p'_i=\frac{\alpha'^1_i-\beta'^1_i}{\alpha'^2_i-\beta'^2_i}, \qquad q_i'=\frac{\alpha'^2_i\beta'^1_i-\alpha'^1_i\beta'^2_i}{\alpha'^2_i-\beta'^2_i},\nn
\end{gather*}
 variables $\alpha^k_i$, $\beta^k_i$ and $\alpha'^k_i$, $\beta'^k_i $ are related by the star-triangle transformation with the condition $p_1p_2p_3=p_1'p_2'p_3'$ and graphs $G_{i}$ and $G_{ij}$, $i,j=0,1,2,3$ are depicted on the Figure~$\ref{12-term}$.
\end{Theorem}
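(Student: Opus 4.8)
The starting point is to write the anisotropic Biggs formula \eqref{eq:aisBiggs} twice, once for the pair $(M_1(G),M_2(G))$ and once for $(M_1(G'),M_2(G'))$, and to multiply each by the appropriate power of $n$ so as to pass from $\widetilde Z$ to $Z$. Since the star-triangle transformation is carried out at a value of $n$ for which the partition function is preserved, one has $Z^1_n(G)=Z^1_n(G')$, and equating the two Biggs right-hand sides yields \eqref{twopart}. I would then split each side of \eqref{twopart} according to how a spanning subgraph meets the mutated region: the subgraphs $A_1$ that contain the entire triangle $\Omega$ (resp.\ $A_1'$ containing the entire star $\Omega'$) and the subgraphs $A_2$ (resp.\ $A_2'$) retaining only a proper subset of the triangle (star) edges. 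This is exactly the decomposition recorded in \eqref{eq:m} and \eqref{eq:m'}.

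The core of the proof is to check that the full-triangle block on the left coincides with the full-star block on the right, term by term under the natural bijection $A_1\leftrightarrow A_1'$ that applies the star-triangle move inside $\Omega$ and is the identity outside. This rests on the three observations isolated before the statement: first $Z^2_n(A_1)=Z^2_n(A_1')$, because the two subgraphs differ only on a triangle/star carrying the invariant model-$2$ weights; second the $q$-prefactors match, $\prod_{e\in G}q_e/\prod_{e\in A_1}q_e=\prod_{e\in G'}q'_e/\prod_{e\in A_1'}q'_e$, since the untouched edges carry identical weights and the three triangle weights cancel against $A_1\supseteq\Omega$; and third $\prod_{e\in A_1}p_e=\prod_{e\in A_1'}p'_e$. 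Summing over the bijection makes the two full blocks identical, and subtracting them from \eqref{twopart} leaves precisely the equality of the two partial blocks.

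It remains to reorganize the partial blocks into the stated form. Grouping the partial subgraphs by which of the $2^3-1=7$ proper subsets of triangle edges they contain — the empty set, the three singletons and the three pairs — produces seven summands whose prefactors $q_1q_2q_3$, $p_i\prod_{j\neq i}q_j$ and $p_ip_jq_k$ are the distributed forms of the coefficients $q_1q_2q_3\cdot 1$, $q_1q_2q_3\cdot p_i/q_i$ and $q_1q_2q_3\cdot p_ip_j/(q_iq_j)$ displayed in \eqref{eq:rel}; the graphs $G_0,G_1,G_2,G_3,G_{12},G_{13},G_{23}$ are the corresponding configurations of Figure~\ref{12-term}, and the primed data give the seven star configurations on the right-hand side. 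The cleanest way to see that these are the only terms is to invoke the locality of the move (Lemma~\ref{lemma1}): it suffices to treat the configuration in which $G$ is the triangle on the three boundary vertices, so that the subgraph sum runs over just the $2^3$ subsets of its edges, one of which is the full triangle and seven of which are partial.

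I expect the main obstacle to be the exact cancellation of the full blocks, and within it the weight identity $\prod_{e\in A_1}p_e=\prod_{e\in A_1'}p'_e$: this is not automatic and uses precisely the normalization hypothesis $p_1p_2p_3=p'_1p'_2p'_3$ imposed on the choice of model-$1$ parameters, together with the fact that the transformation leaves every edge outside $\Omega$, and hence its $p$- and $q$-weights, untouched. Once this identity and the invariance $Z^2_n(A_1)=Z^2_n(A_1')$ are secured, the remainder is routine prefactor bookkeeping.
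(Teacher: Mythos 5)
Your first four steps coincide with the paper's own setup: you derive \eqref{twopart}, split it as in \eqref{eq:m} and \eqref{eq:m'}, and cancel the full-triangle block against the full-star block term by term under the bijection $A_1\leftrightarrow A_1'$, using exactly the three observations the paper lists (invariance of $Z^2_n$, matching of the $q$-prefactors, and the hypothesis $p_1p_2p_3=p'_1p'_2p'_3$). The gap is in your last step. After the full blocks are subtracted, the surviving identity is a sum over \emph{all} spanning subgraphs whose triangle part is proper and whose outside part is an arbitrary subset $B$ of the edges of $G$ not in the triangle --- $7\cdot 2^{e(G)-3}$ terms on each side. Grouping by the triangle part $T$ therefore does not produce the seven terms of \eqref{eq:rel}: writing $\Omega_T$ for the triangle retaining only the edges in $T$, the group labelled by $T$ equals
\begin{gather*}
\prod_{e\in G}q_e\,\prod_{i\in T}\frac{p_i}{q_i}\;\sum_{B\subseteq E(G)\setminus E(\Omega)}\;\prod_{e\in B}\frac{p_e}{q_e}\,Z^2_n(\Omega_T\cup B),
\end{gather*}
a Biggs-type convolution over all outside subsets $B$, and not the single term $q_1q_2q_3\prod_{i\in T}(p_i/q_i)\,Z^2_n(G_T)$ of \eqref{eq:rel}, in which $G_T$ contains \emph{every} edge of $G$ outside the triangle. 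So for a general $G$ the ``reorganization'' you describe is not bookkeeping; it is precisely the statement that remains to be proved.

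The appeal to locality does not repair this. Lemma~\ref{lemma1} gives $Z^2_n(G_T)=\sum_{\textbf{A}}Z_{n;S(\textbf{A})}(\Omega_T)\,Z_{n;S(\textbf{A})}(G-\Omega)$, so to deduce the theorem for general $G$ from the triangle-only case you would need the fourteen-term identity to hold \emph{for each fixed boundary state} $\textbf{A}$, i.e., with $Z^2_n(\Omega_T)$ replaced by $Z_{n;S(\textbf{A})}(\Omega_T)$, because the environment factors $Z_{n;S(\textbf{A})}(G-\Omega)$ differ for different $\textbf{A}$. What your bare-triangle argument yields is only the sum over $\textbf{A}$ of these identities --- a single scalar equation, strictly weaker: for the triangle the boundary states fall into a few symmetry classes, and the star-triangle conditions \eqref{eq1} (respectively \eqref{eq:q>3}) are exactly the classwise equations, not their sum. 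The paper closes this gap by a different device: an induction on $ex(G)=e(G)-3$, in which the partial-triangle subgraphs missing at least one outside edge are grouped by their outside part $B$ and cancelled pairwise via the induction hypothesis applied to the smaller pair $\Omega\cup B$, $\Omega'\cup B$, so that only the identity \eqref{eq:rel} survives. Alternatively, your reduction can be saved by upgrading the bare-triangle computation to boundary partition functions: with all three spins fixed the Biggs expansion is just the edgewise factorization $i^1_e=p_e\,i^2_e+q_e$, and the boundary invariances \eqref{cond}, \eqref{eq:q>3} hold for each $\textbf{A}$ separately, after which Lemma~\ref{lemma1} glues in the environment. As written, however, the final step is a genuine gap.
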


 \begin{figure}[h]
 \centering
\includegraphics[scale=1]{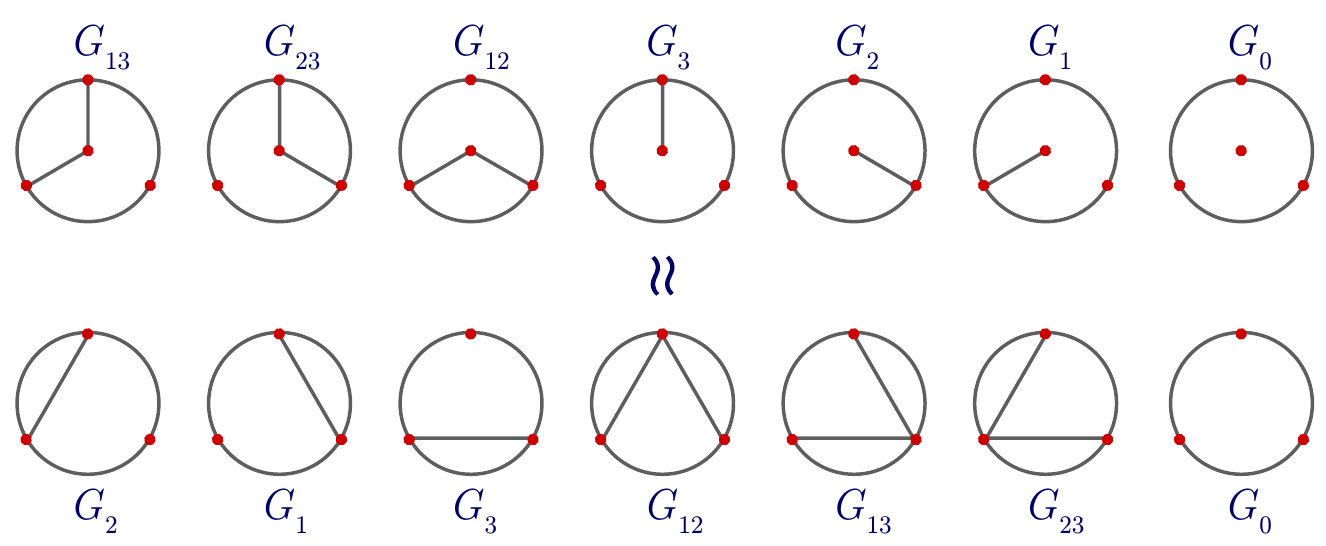}
\caption{The 14-term relation.}
\label{12-term}
\end{figure}
\begin{proof}
We will prove this theorem using induction on $ex(G):=e(G)-3$.

We show that the base of the induction $k=0$ is trivial. Hence, let us consider the $n$-Potts models $M_2(G)$ and $M_2(G')$ and the special models $M_1(G)$ and $M_1(G')$ such that $p_1p_2p_3=p_1'p_2'p_3'$. Then, we~write the formulas~\eqref{eq:m} and~\eqref{eq:m'}, after the comparison for each terms using the reasoning above we immediately obtain the result in the case $ex(G)=0$.

Then, make the step of induction. Again, let us write down the formulas~\eqref{eq:m} and~\eqref{eq:m'}:
\begin{gather*}
Z^1_n(G)=\prod_{e \in G}q_e\sum\limits_{A_1\subseteq G}\prod_{e \in A_1}\frac{p_e}{q_e} Z^2_n(A_1)+\prod_{e \in G}q_e\sum\limits_{A_2\subseteq G}\prod_{e \in A_2}\frac{p_e}{q_e}Z^2_n(A_2)+\frac{\prod_{e \in G}q_e}{q_1q_2q_3}S_1,
 \end{gather*}
 where each $A_1$ contains the full triangle, each $A_2$ contains only a part of the triangle and such that $ex(A_2) \neq ex(G)$, and by $S_1$ we denoted the left hand side of~\eqref{eq:rel},
\begin{gather*}
Z^1_n(G')=\prod_{e \in G'}q_e'\sum\limits_{A_1'\subseteq G'}\prod_{e \in A'_1}\frac{p'_e}{q'_e} Z^2_n(A_1')+\prod_{e \in G'}q_e'\sum\limits_{A'_2\subseteq G'}\prod_{e \in A'_2}\frac{p'_e}{q'_e}Z^2_n(A_2')+\frac{\prod_{e \in G'} q'_e}{q'_1q'_2q'_3}S_2,
 \end{gather*}
 where each $A'_1$ contains the full star, each $A'_2$ contains only a part of the star and such that $ex(A'_2) \neq ex(G')$, and by $S_2$ we denoted the right hand side of~\eqref{eq:rel}.

Then the induction assumption ends the proof.
\end{proof}

We consider these results in the context of numerous generalizations, both for other models of statistical physics, and in a purely mathematical direction. In particular, we~are interested in applying this technique to the Potts model in the presence of an external magnetic field~\cite{EM}, including an inhomogeneous one. In addition, we~are going to develop these methods in a more general algebraic sense, in particular in a non-commutative situation. Partial results of this activity have already been obtained in~\cite{BKT2}.

\appendix
\section{The proof of Lemma~\ref{LemmaTech}} \label{Ap}

\begin{proof}
We start by reformulating this statement in terms of equivalent rational identities. Let us introduce the $x$-variables by the following formula
	\begin{gather}
	F \circ S^3(t_1, t_2, t_3)=(x_1, x_2, x_3),\nn\\
	S^{3}(x_1, x_2, x_3)=(t_1', t_2', t_3').\label{def:X}
	\end{gather}
Here $S^3(t_1,t_2,t_3) = (S\times S\times S)(t_1,t_2,t_3) = (S(t_1),S(t_2),S(t_3))$.
Then the statement of the lemma is equivalent to
	\begin{gather*}
	(t_1', t_2', t_3')= S^{3}(x_1, x_2, x_3)=F^{-1}(t_1, t_2, t_3).
	\end{gather*}
This identity is equivalent to three algebraic relations (we will write down only one of them, because the others differ just by replacing the indices)
\begin{gather}	
t_1t_2=\frac{t_1't_2't_3'+1}{t_3'+t_1't_2'}=
\bigg(\frac{(x_1-1)(x_2-1)(x_3-1)}{(x_1+1)(x_2+1)(x_3+1)}+1\bigg)\bigg/ \bigg(\frac{x_3-1}{x_3+1}+\frac{(x_1-1)(x_2-1)}{(x_1+1)(x_2+1)}\bigg)\nn
\\ \hphantom{t_1t_2}
{}=\frac{x_1+x_2+x_3+x_1x_2x_3}{x_3-x_2-x_1+x_1x_2x_3}
=\frac{(x_1+x_2+x_3+x_1x_2x_3)x_1x_2x_3}{(x_3-x_2-x_1+x_1x_2x_3)x_1x_2x_3}.
\label{eq:TX}
\end{gather}
Now let us introduce some additional variables
\begin{gather*}
t_{12}=x_1x_2,\qquad
t_{23}=x_2x_3,\qquad
t_{13}=x_1x_3,\qquad
a_1=x_1^2,\qquad
a_2=x_2^2,\qquad
a_3=x_3^2.
\end{gather*}
We could rewrite~\eqref{eq:TX} in the following way
\begin{gather*}
t_1t_2=\frac{a_1t_{23}+a_2t_{13}+a_3t_{12}+t_{12} t_{23} t_{13}}
{-a_2t_{13}-a_1t_{23}+a_3t_{12}+t_{12} t_{23} t_{13}}.
\end{gather*}
The equations~\eqref{T1change} and~\eqref{def:X} with the identification $y_i:=S(t_i)$ provide the following system
\begin{gather*}
t_{12}=\frac{y_1y_2y_3+1}{y_3+y_1y_2}=\frac{t_3+t_2+t_1+t_1t_2t_3}{t_3-t_2-t_1+t_1t_2t_3},
\\
t_{13}=\frac{y_1y_2y_3+1}{y_2+y_3y_1}=\frac{t_3+t_2+t_1+t_1t_2t_3}{t_2-t_1-t_3+t_1t_2t_3},
\\
t_{23}=\frac{y_1y_2y_3+1}{y_1+y_3y_2}=\frac{t_3+t_2+t_1+t_1t_2t_3}{t_1-t_2-t_3+t_1t_2t_3},
\\
a_1=\frac{(y_1y_2y_3+1)(y_1+y_2y_3)}{(y_2+y_1y_3)(y_3+y_1y_2)}
=\frac{(t_1-t_2-t_3+t_1t_2t_3)(t_3+t_2+t_1+t_1t_2t_3)}{(t_3-t_2-t_1+t_1t_2t_3) (t_2-t_1-t_3+t_1t_2t_3)}, \\
a_2=\frac{(y_1y_2y_3+1)(y_2+y_1y_3)}{(y_1+y_2y_3)(y_3+y_1y_2)}
=\frac{(t_2-t_1-t_3+t_1t_2t_3)(t_3+t_2+t_1+t_1t_2t_3)}{(t_3-t_2-t_1+t_1t_2t_3) (t_1-t_2-t_3+t_1t_2t_3)}, \\
a_3=\frac{(y_1y_2y_3+1)(y_3+y_2y_1)}{(y_2+y_1y_3)(y_1+y_3y_2)}
=\frac{(t_3-t_2-t_1+t_1t_2t_3)(t_3+t_2+t_1+t_1t_2t_3)}{(t_1-t_2-t_3+t_1t_2t_3) (t_2-t_1-t_3+t_1t_2t_3)}. \end{gather*}
	Using these expressions we can compute
\begin{gather*}
t_{12}a_3+t_{13}a_2+a_1t_{23}+t_{12} t_{23} t_{13}
\\ \qquad
{}=\frac{4(t_3+t_2+t_1+t_1t_2t_3)^2 t_1t_2t_3}{(t_3-t_2-t_1+t_1t_2t_3) (t_2-t_1-t_3+t_1t_2t_3)(t_1-t_2-t_3+t_1t_2t_3)},
\\[2ex]
t_1t_2(-a_2t_{13}-a_1t_{23}+a_3t_{12}+t_{12} t_{23} t_{13})
\\ \qquad
{}=\frac{4(t_3+t_2+t_1+t_1t_2t_3)^2t_1t_2t_3}{(t_3-t_2-t_1+t_1t_2t_3) (t_2-t_1-t_3+t_1t_2t_3)(t_1-t_2-t_3+t_1t_2t_3)}.
\end{gather*}
	In this way we observe that
\begin{gather*}
t_1t_2(-a_2t_{13}-a_1t_{23}+a_3t_{12}+t_{12} t_{23} t_{13})=a_3t_{12}+a_2t_{13}+a_1t_{23}+t_{12} t_{23} t_{13}.
\end{gather*}
This completes the proof.
\end{proof}

\section{For the second proof of the tetrahedron equation} \label{Apb}
In this Appendix we present some technical part of the proof from Section~\ref{sec:tetsecond}. We~present closed formulas for $t_1$ and $t_2$ variables (see the discussion after~\eqref{eq:xyzv})
\begin{gather}
t_1=\big({-}b_3 a_3 + a_2 b_2 + a_1 b_1 - b_4 a_4 +
(b_3^2 a_3^2 - 2 b_3 a_3 a_2 b_2- 2 b_3 a_3 a_1 b_1
 - 2 b_3 a_3 b_4 a_4 \nn
 \\ \hphantom{t_1=(}
 {}+ a_2^2 b_2^2 -
 2 a_2 b_2 a_1 b_1- 2 a_2 b_2 b_4 a_4 + a_1^2 b_1^2 -2 a_1 b_1 b_4 a_4 + b_4^2 a_4^2+
 4 b_4 a_3 a_1 b_2 \nn
 \\ \hphantom{t_1=(}
 {}+ 4 a_2 b_1 b_3 a_4)^{1/2} \big)/(2 (-b_4 a_3 + a_2 b_1)),\label{t1}
\\
 t_2=(a_2b_4-b_2a_4-b_3a_1+a_3b_1+(a_2^2b_4^2-2a_4b_4a_2b_2-2a_2b_4b_3a_1
 -2a_2b_1a_3b_4\nn
 \\ \hphantom{t_2=(}
 {}+b_2^2a_4^2-2b_3a_4a_1b_2-
 -2b_2a_4a_3b_1+b_3^2a_1^2-2a_3b_3a_1b_1+a_3^2b_1^2+4a_3b_4a_1b_2\nn
 \\ \hphantom{t_2=(}
 {}+4a_2b_1b_3a_4)^{1/2})
 /(2(a_3b_4-b_3a_4)).\label{t2}
\end{gather}

\subsection*{Acknowledgements}
We are thankful to V.~Gorbounov for indicating us the strategy of the first proof of the tetrahedron equation in the trigonometric case in Section~\ref{seq:TE-1}. The research was supported by the Russian Science Foundation (project 20-61-46005). The authors thank the anonymous referees for very useful comments which are improved the paper a lot.

\pdfbookmark[1]{References}{ref}
\LastPageEnding

\end{document}